\documentclass[11pt]{article}
\usepackage[titletoc,title]{appendix}
\usepackage{amsmath,amsfonts,amssymb,bm,amsthm}
\usepackage[colorlinks,citecolor=blue,linkcolor=blue,pagebackref]{hyperref}
\usepackage{tikz}
\usetikzlibrary{shapes,arrows}
\usetikzlibrary{intersections,shapes.arrows}
\usetikzlibrary{decorations.pathreplacing}
\usetikzlibrary{arrows.meta}
\numberwithin{equation}{section}
\newtheorem{theorem}{Theorem}[section]
\newtheorem{lemma}[theorem]{Lemma}
\newtheorem{proposition}[theorem]{Proposition}
\newtheorem{corollary}[theorem]{Corollary}

\theoremstyle{definition}
\newtheorem{example}[theorem]{Example}
\newtheorem{definition}[theorem]{Definition}

\theoremstyle{remark}
\newtheorem{remark}[theorem]{Remark}

\DeclareFontFamily{U}{mathx}{}
\DeclareFontShape{U}{mathx}{m}{n}{<-> mathx10}{}
\DeclareSymbolFont{mathx}{U}{mathx}{m}{n}
\DeclareMathAccent{\widecheck}{0}{mathx}{"71}

\usepackage{multicol}
\usepackage{longtable}
\usepackage{slashed}
\usepackage{accents}
%%%% remove this if you don't like sans serif
%\usepackage{ebgaramond}
\usepackage[T1]{fontenc}
\usepackage{pifont}
%%%%
%%%%%%%%%%%%%%%%%%%%%%%%%%%%%%%%%%
\usepackage[top=1in, bottom=1in, left=1in, right=1in]{geometry}
\setlength{\headsep}{1cm}
\setlength{\footskip}{1cm}
%\textwidth15cm
%%%%%%%%%%%%%%%%%%%%%%%%%%%%%%%%%%
\usepackage{fancyhdr}
\pagestyle{myheadings}
\setlength{\headheight}{15pt}

\pagestyle{fancy}
\fancyhf{}
%\fancyhead[FRE,FLO]{}%week number in weekly handouts
\lhead{{\bfseries Beatrice Costeri} \emph{et al.}}
\rfoot{{\bfseries Hadamard parametrix at the boundary with Robin boundary conditions}}
%%%%%%%%%%%%%%%%%%%%%%%%%%%%%%%%%%
%\newcommand{\dbar}{{\mkern3mu\mathchar'26\mkern-12mu \mathrm{d}}}
%\newcommand{\ir}{\mathrm{i}}
%\newcommand{\er}{\mathrm{e}}

\definecolor{amber}{rgb}{1.0, 0.49, 0.0}

\newcommand{\bigslant}[2]{{\raisebox{.2em}{$#1$}\left/\raisebox{-.2em}{$#2$}\right.}}

\def\mcM{{\mathcal{M}}}
\def\mcL{{\mathcal{L}}}
\def\bC{{\mathbb{C}}}
\def\bH{{\mathbb{H}}}
\def\bR{{\mathbb{R}}}
%%%%%%%%%%%%%%%%%%%%%%%%%%%%%%%%%%
\usepackage{enumerate}
%%%%%%%%%%%%%%%%%%%%%%%%%%%%%%%%%%
\usepackage{xcolor,cancel}
\usepackage{relsize}%for resizing mathematical symbols

%%%%%%%%%%%%%%%%%%%%%%%%%%%%%%
%when checking references, uncomment
%\usepackage{refcheck}  
%%%%%%%%%%%%%%%%%%%%%%%%%%%%%%

\renewcommand{\tilde}{\widetilde}

\usepackage{orcidlink}

\allowdisplaybreaks

%
%Commands below are magic from Michael Levitin
%In the list of bibliography it gives pages in our paper where the reference is cited
%We had to add pagebackref in our hyperref above
%\usepackage[colorlinks,allcolors=blue,pagebackref]{hyperref}
\usepackage{pifont}
\renewcommand*{\backrefalt}[4]{%
\ifcase #1 %
No citations%
\or
\ding{43}~p.~#2%
\else
\ding{43}~pp.~#2%
\fi}
\usepackage{soul}

\begin{document}

\title{The Hadamard parametrix on half-Minkowski with Robin boundary conditions: Fundamental solutions and Hadamard states \\
%{\color{red} What about something ridicu ---  Hadamard, Robin and Minkowski walk into a bar... Bondurant and Fulling buy them a drink...} \\
%{\color{blue} The Hadamard parametrix on half-Minkowski with Robin boundary conditions: Fundamental solutions and Hadamard states} \\
%{\color{orange} The Hadamard parametrix, fundamental solutions and states with Robin boundary conditions I: Half-Minkowski }
}
\author{
Beatrice Costeri\thanks{
Dipartimento di Fisica ``Alessandro Volta'',
Universit\`a degli Studi di Pavia \& INFN, Sezione di Pavia \& INdAM, Sezione di Pavia, 
Via Bassi 6, 
I-27100 Pavia, 
Italia;
beatrice.costeri01@universitadipavia.it.
}
\and
Claudio Dappiaggi\,\orcidlink{0000-0002-3315-1273}\thanks{Dipartimento di Fisica ``Alessandro Volta'',
Universit\`a degli Studi di Pavia \& INFN, Sezione di Pavia \& INdAM, Sezione di Pavia, 
Via Bassi 6, 
I-27100 Pavia, 
Italia;
claudio.dappiaggi@unipv.it,
\url{https://claudiodappiaggi.com}} 
\and 
Benito A. Ju\'arez-Aubry\,\orcidlink{https://orcid.org/0000-0002-7739-4293}
\thanks{Department of Mathematics, University of York, Heslington, York YO10 5DD, UK; benito.juarezaubry@york.ac.uk. \url{https://bajuarezaubry.wordpress.com/}}
\and 
Raman Deep Singh\thanks{\text{Dipartimento di Fisica ``Alessandro Volta'',
Universit\`a di Pavia,
Via Bassi 6, 
I-27100 Pavia, 
Italia;} \text{ramandeep.singh01@universitadipavia.it}}
}

%%%% this removes footnote "crosses" by the authors' names
%\renewcommand\footnotemark{}
%%%%

%\date{version 2; 7 May 2024; \LaTeX{ed} \today}
\date{}

\maketitle

\vspace{-.6cm}

\begin{abstract}

%We study a massive, real Klein--Gordon field on the $d$-dimensional half--Minkowski spacetime $(\bH_d,\eta)$ with Robin boundary conditions on $\partial\bH_d$. Via a Robin--to--Dirichlet map, extending the scope and the range of applications of the one introduced by Bondurant and Fulling in $d=2$, we construct the advanced and retarded Green propagators. Additionally, we present an explicit representation for the Robin propagator as the Neumann one plus a reflected correction term arising from the boundary, written through a convolution with a suitable kernel. We prove that these propagators are unique and obey the expected causal support properties. The same formulae yield two-point functions that are of Hadamard type in the extended notion proper of manifolds with boundaries, namely with a singular structure controlled by the geodesic distance and by its reflected counterpart. Finally, we construct a local {\it Robin-Hadamard parametrix} and derive the transport equations and recursion relations for all coefficients in both even and odd dimensions, thereby extending the standard Hadamard expansion to the presence of Robin boundary conditions. Finally, we prove an equivalence theorem between local and global Hadamard states.

%\\ {\color{red} New attempt:}

We address the problem of constructing fundamental solutions and Hadamard states for a Klein-Gordon field in half-Minkowski spacetime with Robin boundary conditions in $d \geq 2$ spacetime dimensions. First, using a generalisation of the Robin-to-Dirichlet map exploited by Bondurant and Fulling [J. Phys. A: Math. Theor. {\bf 38} 7 (2005)] in dimension $2$, we obtain a representation for the advanced and retarded Green operators in terms of a convolution with the kernel of the inverse Robin-to-Dirichlet map. This allows us to prove the uniqueness and support properties of the Green operators. Second, we obtain a local representation for the Hadamard parametrix that provides the correct local definition of Hadamard states in $d \geq 2$ dimensions, capturing `reflected' singularities from the spacetime boundary. We show that our fundamental solutions abide by this local parametrix representation. Finally, we prove the equivalence of our local Hadamard condition and the global Hadamard condition with a wave-front set described in terms of generalized broken bi-characteristics, obtaining a Radzikowski-like theorem in half-Minkowski spacetime.
\

{\bf Keywords:} Hadamard states, Robin boundary conditions, microlocal analysis

\

{\bf 2020 MSC classes: } 81T20, 35E05

\end{abstract}

\tableofcontents

\allowdisplaybreaks

\section{Introduction}\label{Sec: Introduction} 
Boundary value problems for hyperbolic operators on spacetimes with timelike boundary have attracted a lot of attention in the past few years due to their relevance in the formulation of several different physical models at a classical but, especially, at a quantum level. Among these, certainly noteworthy are those whose underlying ambient manifold is the $d$-dimensional anti-de Sitter spacetime $\mathbb{A}d\mathbb{S}_d$, which is a maximally symmetric solution of Einstein's equations with a negative cosmological constant. In the Poincar\'e patch, $\mathbb{A}d\mathbb{S}_d$ is conformal to a subset of Minkowski spacetime $(\bR^d,\eta)$, namely, $\mathring{\bH}^d\doteq\{(t,x_1,\dots,x_{d-2},z)\;|\;z>0\}$. This entails that $\mathbb{A}d\mathbb{S}_d$ has a conformal boundary, which is located at $z=0$ in $\bR^d$.  

The presence of a timelike boundary is the source of several complications especially in the study of quantum field theoretical models, since our current understanding of these theories is mainly based on the assumption that the underlying background is globally hyperbolic; see, {\it e.g.}, \cite[Def. 1.3.8]{Baer_2007}. Here, we consider the algebraic formulation of quantum field theory, which is a versatile, computationally efficient framework, which allows to discuss with mathematical rigor both free and perturbatively interacting theories; see, {\it e.g.}, \cite{brunetti2, Rejzner} for recent introductions. Focusing on the example of a real scalar field, which is the case considered in this work, we can summarize the algebraic approach as a two-step procedure. In the first step, one individuates a unital $*$-algebra of observables which encompasses all the structural properties of the underlying theory, such as locality, the dynamics and the canonical commutation or anti-commutation relations. This first element of the algebraic quantization scheme relies on a main ingredient, namely the difference between the advanced and retarded propagators of the operator which encodes the dynamics. On a globally hyperbolic spacetime these propagators exist, they are unique and, most notably, they are causally supported, see \cite{Baer_2007}. 

The second step consists of the identification of a quantum state, namely a normalized and positive, linear functional on the algebra, out of which one can recover the probabilistic interpretation proper of quantum theories via the Gelfand-Naimark-Segal theorem. Among the plethora of possible quantum states, one usually restricts the attention to the {\it Gaussian} or {\it quasi-free} ones, see \cite{brunetti2}, which are completely determined by the underlying two-point correlation function. This is a bi-distribution which codifies structural properties such as positivity, the equations of motion and the canonical commutation relations, the latter condition necessary if we restrict the attention to scalar fields. Nonetheless, in order to guarantee the existence of a covariant renormalization scheme as well as that the quantum fluctuations of all observables are finite, one needs to further restrict the class of admissible two-point correlation functions by imposing the {\em Hadamard condition}, see e.g. \cite{Fewster:2013lqa}. This can be stated as a constraint on the singular structure of the underlying two-point correlation function, say $\omega_2$, which is best codified using the language of microlocal analysis in terms of the wavefront set of $\omega_2$. In addition, as shown by Radzikowski in his seminal works \cite{Radzikowski_1996, Radzikowski_1996_1} this is fully equivalent to establishing that, on each geodesically convex neighbourhood of the underlying manifold, the integral kernel of $\omega_2$ has to acquire a specific expression, known as {\em local Hadamard form} which is unique up to smooth remainders as first characterized by Kay and Wald \cite{Kay_Wald:1991} (see also \cite{Moretti:2021}).

The situation is drastically different as soon as one drops the assumption of the underlying manifold being globally hyperbolic in the sense of \cite[Def. 1.3.8]{Baer_2007} and a timelike boundary is allowed to exist. All the results listed above are no longer automatically valid and, to avoid a lengthy and possibly tedious discussion we content ourselves with listing the two main difficulties: 
\begin{enumerate}
	\item Advanced and retarded propagators are no longer unique. If one considers a normally hyperbolic operator, such as the Klein-Gordon one, initial data on a Cauchy surface do not suffice to identify a unique solution unless they are supplemented with suitable boundary conditions. These can be selected imposing that we work with a closed system, a feature which translates at a mathematical level in a suitable Green's identity holding true. Yet, even after choosing an admissible boundary condition, existence and uniqueness of the advanced and retarded propagators are not guaranteed {\it a priori} nor it is automatic that, even if existent, they abide by the standard causal properties, 
	\item The notion of a Hadamard two-point correlation function $\omega_2$ is strongly tied to the propagation of singularities theorem, which guarantees that the singular structure of $\omega_2$ is localized along light cones. Yet, the presence of a timelike boundary drastically changes the picture, since one needs to account for the reflection of light rays at the boundary. This classical phenomenon has far-reaching consequences; in particular it lies outside the scope of the results of \cite{Radzikowski_1996, Radzikowski_1996_1}, leading to the necessity of establishing a novel notion of Hadamard states tailored to these scenarios.
\end{enumerate}

Although in the past decade these questions have been investigated and several progresses have been made, a fully cohesive and satisfactory picture is still missing. From the PDE perspective of classifying admissible boundary conditions and of establishing the existence of advanced and retarded propagators, if the underlying manifold is static, one can resort to boundary triples and spectral techniques to give a complete answer; see \cite{Dappiaggi-Drago_2019}. Yet, this procedure falls short of the additional goal of characterizing the support properties of the propagators, which needs to be proven separately in a case-by-case setting. In some instances, this can be achieved with energy estimates, see, {\it e.g.}, Theorem 6 in \cite{Juarez-Aubry_Weder:2020}, though a suitable energy functional needs to be defined for each boundary condition and finding one can be highly non-trivial. At the level of two-point correlation functions, the literature is quite vast, especially if one restricts the attention to anti-de Sitter spacetimes. In order to summarize succinctly the current status with reference to the problem investigated in this paper, also in this case we list the main results and open points:
\begin{enumerate}
	\item From the viewpoint of microlocal analysis, using $b$-calculus techniques, it has been proven a generalization of the propagation of singularities theorem, first considering Robin boundary conditions, see \cite{Gannot_2022}, and then more general scenarios in \cite{Dappiaggi-Marta_2020}.
	\item A notion of Hadamard states in terms of the wavefront set of the underlying two-point correlation function has been formulated by different research groups and it accounts both for the propagation of singularities along light directions and for their reflection at the timelike boundary, see, {\it e.g.}, \cite{Dappiaggi:2017wvj, Dappiaggi-Marta_2021, Gannot_2022, Wrochna}.
\end{enumerate}  
Yet, in the literature, a counterpart of the results due to Radzikowski is still missing, the main difficulty lying in the characterization of the integral kernel of the two-point correlation function in a geodesic neighborhood that intersects the boundary. Herein, one needs to account at the same time both for the standard propagation along light geodesics and for their reflection at the boundary. In addition, finding a characterization of the admissible two-point correlation functions both at a microlocal level and at that of integral kernel is of paramount relevance if one wants to fully exploit the effectiveness of the algebraic, perturbative approach to interacting quantum field theories, see \cite{Rejzner}.

In this work we start a programme aimed at solving these open issues for real scalar fields. We begin by considering the $d$–dimensional half–Minkowski spacetime, $\bH^d\doteq\{(t,x_1,\dots,x_{d-2},z)\;|\;z\geq 0\}$. On the one hand $\bH^d$ is the simplest Lorentzian manifold with a timelike boundary and, for this class of spacetimes; it plays the same r\^ole of the $d$-dimensional Minkowski spacetime for globally hyperbolic backgrounds. On the other hand, since our ultimate goal is to solve the problems discussed above on a generic curved background with a timelike boundary, one should keep in mind that, locally, any of these manifolds is diffeomorphic to an open subset of $\bH^d$. Hence, proving a counterpart of Radzikowski theorems on $\bH^d$ is a natural and necessary step. 

On top of $\bH^d$ we consider, a real, massive Klein–Gordon field $\Phi$ whose dynamics is ruled by the operator $P := \Box_\eta + m^2$, where $\Box_\eta$ is the D'Alembert wave operator while $m^2\geq 0$. We endow it with Robin boundary conditions at $z=0$,
\begin{equation}\label{Eq: RobinBC}
	(\partial_z + \kappa)\Phi\big|_{z=0}=0,\qquad \kappa\ge 0.
\end{equation}
Robin boundary conditions are of special relevance since, on the one hand, they include as special cases, those of Neumann ($\kappa=0$) and of Dirichlet ($\kappa\to\infty$) type. For these cases the advanced and retarded propagators as well as the two-point correlation functions for the ground state can be constructed explicitly from their counterpart on the full Minkowski spacetime $(\bR^d,\eta)$ employing the method of images, see, {\it e.g.}, \cite{Dappiaggi-Nosari_2016}. On the other hand, Robin boundary conditions are of great physical relevance since they do not lead to total reflection at the boundary, but they allow for the existence of a smooth tail propagating at $z=0$, see, {\it e.g.}, \cite{Bondurant_2005}. We observe that we restrict the attention to $\kappa>0$ since, with our sign conventions, if $\kappa<0$, the two-point correlation function exhibits unstable modes which are exponentially suppressed in space though exponentially growing in time, see \cite{Dappiaggi-Drago_2019}.

Our aim is twofold: First of all we construct advanced and retarded propagators of $P$ on $\bH^d$ subordinated to the boundary condition as per Equation \eqref{Eq: RobinBC}, proving uniqueness and, above all, causal support, hence extending the results of \cite{Dappiaggi-Drago_2019} with which we make contact in Section \ref{Sec: BV problem}. Secondly we discuss the notion of Hadamard states, both globally using microlocal techniques and locally working at the level of integral kernel. We prove that one can individuate on $\bH^d$ a two-point correlation function, denoted by $\omega_{2,\kappa}\in\mathcal{D}^\prime(\bH^d\times\bH^d)$ which is consistent with both notions. This allows us in Section \ref{Sec: Comparison of Hadamard states} to prove a counterpart of Radzikowski's theorems in the case under scrutiny.

Most notably, when working at the level of integral kernels, we can show that Hadamard states abiding by Robin boundary conditions as per Equation \eqref{Eq: RobinBC}, can be written as the sum of a smooth function and of a singular kernel, called {\em Robin-Hadamard parametrix} which is in turn built out of two constituents. The first is the same Hadamard parametrix which appears in the analysis of the Klein-Gordon equation on a globally hyperbolic spacetime, such as Minkowski, while, the second is a novel contribution which is defined out of $\sigma_{-}$, a \emph{reflected} counterpart of the standard Synge world function, associated to mirror points across $\partial \bH^d$. We can therefore summarize our main results as follows: 

\begin{itemize}
	\item[\ding{104}]\emph{Propagators for Robin boundary conditions.} Extending the application of the Robin-to-Dirichlet transform introduced by Bondurant and Fulling in \cite{Bondurant_2005} for the analysis of the two-dimensional scenario, we construct distributions $G_\kappa^\pm\in\mathcal{D}^\prime(\bH^d\times\bH^d)$ which are inverses of the Klein-Gordon operator $P$ and which satisfy the boundary condition as per Equation \eqref{Eq: RobinBC} in both entries. In particular, using spectral theory arguments, we prove uniqueness, whereas convolution techniques allow us to establish that $G^\pm_\kappa$ enjoy the support properties of advanced and retarded propagators. 
	\item[\ding{104}] \emph{Robin Hadamard states and local Robin parametrix.}
	Adapting the global Hadamard condition to $\bH^d$, see, {\it e.g.}, \cite{Dappiaggi:2017wvj}, we characterize admissible two–point functions $\omega_{2,\kappa}$ in terms of their wavefront set, here adapted to encompass the existence of a boundary using the language of $b$-calculus. In our framework we can establish the existence of a canonical choice for $\omega_{2,\kappa}$ which is obtained out of the Poincar\'e vacuum two-point correlation function. In addition we can show that the integral kernel of $\omega_{2,\kappa}$ has the \emph{local Robin Hadamard form}, that is its singular part is the sum of a standard Hadamard parametrix written in terms of the Synge world function $\sigma$ and of a reflected counterpart built out of $\sigma_{-}$. More precisely we establish that
		\begin{gather*} 
		\omega_{2,\kappa}(x,x') = \tilde{H}_\kappa(x,x^\prime)+W(x,x^\prime)=\notag\\
		= \lim\limits_{\epsilon\to 0^+}\frac{U(x,x^\prime)}{\sigma_\epsilon^{\frac{d-2}{2}}} + \delta_d V(x,x^\prime) \ln \left(\frac{\sigma_\epsilon}{\lambda^2}\right) + \frac{U'(x,x^\prime)}{\sigma_{-,\epsilon}^{\frac{d-2}{2}}} + \delta_d V'(x,x^\prime) \ln \left(\frac{\sigma_{-,\epsilon}}{\lambda^2}\right)+W(x,x'),
	\end{gather*}
	where $W$ is smooth, $\lambda$ is a reference length, $\delta_d=0$ if $d$ is odd and $1$ if $d$ is even, while $\sigma_{\epsilon}(x,x') = \sigma(x,x') + i \epsilon (t(x) - t(x')) + \epsilon^2$, $t$ being the time coordinate. In addition, the functions $U, V$ and $U^\prime, V^\prime$ can be written as a formal power series in $\sigma$ and $\sigma_-$ respectively, whose coefficients can determined in terms of a set of recursive transport equations with prescribed initial data in the first case and boundary conditions in the second. The first ones are nothing but the standard Hadamard recursion relations while the second ones account for the boundary conditions and for both reflection and propagation along $\partial\bH^d$. A similar idea has been recently investigated in \cite{Pitelli_2025} on spacetimes with conical singularities.
	\item[\ding{104}] \emph{Equivalence of global and local formulations; Feynman parametrices.}
	We extend the classic Radzikowski theorem on $\bH^d$ proving that the local Robin Hadamard form implies and it is implied by the global counterpart expressed in term of a wavefront set condition on the underlying two-point correlation function. Hence we  establish an equivalence between local and global Hadamard states in this boundary setting, which entails, as a byproduct, the possibility of defining a Feynman parametrix, the building block of the perturbative approach to interacting quantum field theories.	
\end{itemize}

\paragraph*{Synopsis --} In Section~2 we fix the geometric and analytical setting on half-Minkowski spacetime $\bH^d$, we introduce the reflection map $\iota_z$ and the associated reflected Synge world function $\sigma_{-}$. 

In Section~3 we review the defining notions associated to fundamental solutions, propagators and parametrices in the case with an empty boundary, the classical Hadamard expansion as well as the global/local equivalence on globally hyperbolic spacetimes, preparing the ground for the generalization to the case of a manifold with a timelike boundary.

In Section~4 we construct the advanced and retarded propagators for the Klein-Gordon operator $P$ with Robin boundary conditions. After defining the Robin-to-Dirichlet map $T_\kappa$, generalizing the work of Bondurant-Fulling \cite{Bondurant_2005}, as well as the kernel $\mathcal{L}_\kappa$ of its inverse, we prove via suitable convolution estimates the causal support properties and we establish uniqueness. 

In Section~5 we develop a notion of Hadamard two-point correlation function $\omega_{2, \kappa}$, tailored to Robin boundary conditions. We establish a global Hadamard condition on $\bH^d$ adapting the one already present in the literature and using the language of $b$-calculus. In particular we show that $\omega_{2,\kappa}$ can be written as the sum between the Neumann two-point function, $\kappa=0$, and a correction term which is less singular and it accounts for propagation along the boundary. Furthermore we prove that $\omega_{2, \kappa}$ has the desired wavefront set and we construct a \emph{local Robin Hadamard parametrix} whose singular part is a sum of standard and reflected Hadamard terms. These are determined as power series in both $\sigma$ and $\sigma_-$ whose coefficients abide by suitable recursive transport equations. We conclude by proving the equivalence between the local and global formulations of Hadamard states and by constructing a Feynman parametrix $G_\kappa^{\mathrm{F}}$.

\section{Geometric setting}\label{Sec: Geometric setting}

The goal of this section is to set the main conventions used in this work and to introduce succinctly the geometric structures that we shall be employing. More precisely we call \emph{$d-$dimensional Minkowski half-space} $(\mathbb{H}^d, \eta)$, $d\geq 2$, the subset of $\bR^d$, which, on standard Cartesian coordinates $(t, x_1, ..., x_{d-2}, z) \equiv (\underline{x},z) \in \mathbb{R}^d$, coincides with the upper half-space
    \begin{equation} \label{Eq: half Minkowski}
    \mathbb{H}^d := \{(\underline{x},z) \in \mathbb{R}^d : z\ge0 \}, \hspace{0.2cm} d \ge 2,
\end{equation}
endowed with a Lorentzian metric $\eta$ of signature $(+,\underbrace{-, \ldots, -}_{d-1})$. The associated line element reads
\begin{equation*}
    \label{Eq: line element H^d}
    ds^2_{\mathbb{H}^d} = dt^2 - \sum_{i=1}^{d-2} dx_i^2 - dz^2.
\end{equation*}
Note that the boundary $\partial \mathbb{H}^d :=\{(\underline{x},0) \in \mathbb{R}^d\}$ is isometric to Minkowski spacetime $(\mathbb{R}^{d-1}, \eta)$ in dimension $d-1$. As mentioned in the introduction this is the prototype of a globally hyperbolic spacetime with a timelike boundary in the sense of \cite{Ak_Hau_2020} and therefore it is the starting point to understand quantum field theories on this larger class of backgrounds.

\begin{remark}\label{Rem: Euclidean half space}
For later convenience, we denote by $(\mathbb{H}^{d-1}, \delta)$, $d\geq 2$, the subset of $\bR^{d-1}$, which, on standard Cartesian coordinates $(x_1, ..., x_{d-2}, z) \equiv (\mathsf{x},z) \in \mathbb{R}^{d-1}$, coincides with the upper half-space
\begin{equation} \label{Eq: half Euclidean}
	\mathbb{H}^{d-1} := \{(\mathsf{x},z) \in \mathbb{R}^{d-1} : z\ge0 \}, \hspace{0.2cm} d \ge 2,
\end{equation}
endowed with the Euclidean metric $\delta$ of signature $(-, \ldots, -)$. The associated line element reads
\begin{equation*}
	\label{Eq: line element H^d-1}
	ds^2_{\mathbb{H}^{d-1}} = - \sum_{i=1}^{d-2} dx_i^2 - dz^2.
\end{equation*}
This can be read as the model space for a constant time hypersurface in half-Minkowski spacetime via the embedding $\bH^{d-1}\mapsto\{t\}\times\bH^{d-1}\subset\bH^d$, for all $t\in\bR$.
\end{remark}

\begin{remark}\label{Rem: AdS}
It is worth mentioning that, in the mathematical physics literature, the idea of a background with a timelike boundary is often implicitly associated to the $d-$dimensional Anti-de Sitter spacetime $\mathbb{A}d\mathbb{S}_{d}$, $d >2$, which corresponds to maximally symmetric solutions of Einstein's field equations with a negative cosmological constant $\Lambda < 0$. Since this possesses closed, timelike curves which violate any strong causality assumption, it is customary to restrict the attention to the \emph{Poincaré patch} $P\mathbb{A}d\mathbb{S}_{d}$. This is isometric to $\mathbb{H}^d \setminus \partial \mathbb{H}^d$ and the associated line element in standard Cartesian coordinates $(t, x_1, ..., x_{d-2}, z)\in \mathbb{R}^{d-1} \times \mathbb{R}_{+}$ reads
\begin{equation}
\label{Eq: line element of PAdS}
ds^2_{P\mathbb{A}d\mathbb{S}_{d}} = \frac{l^2}{z^2}(dt^2 - \sum_{i=1}^{d-2} dx_i^2 - dz^2), \, \, l^2 = - \frac{(d-1)(d-2)}{2 \Lambda}.
\end{equation}
Being the metric in Equation \eqref{Eq: line element of PAdS} singular in the limit $z \rightarrow 0^+$, the Poincar\'e patch of Anti-de Sitter is \emph{not} a globally hyperbolic spacetime with a timelike boundary. Yet, if we consider a conformal transformation we recover $\bH^d\setminus\partial\bH^d$, see Equation \eqref{Eq: half Minkowski}. In other words $P\mathbb{A}d\mathbb{S}_d$ can be endowed with a conformal timelike boundary. Our analysis can be used to study models also on this class of backgrounds provided that one reformulate them on $\bH^d$ by means of a suitable conformal transformation.
\end{remark}

%In the class of Lorentzian manifolds with non-empty timelike boundary, we are particularly interested in the subclass of those of \emph{bounded geometry}. These are of paramount relevance in defining Sobolev spaces on an underlying background with non-empty boundary -- see \cite[Sec. 2]{Dappiaggi-Marta_2020} for a more detailed account on their construction.
%
%\begin{definition}[Lorentzian manifolds of bounded geometry]
%\label{Def: Lorenztian manifolds of bounded geometry}
%    A smooth Lorentzian manifold $(\mathcal{M},g)$ is of \emph{bounded geometry} if there exists a Riemannian metric $\hat{g}$ on $\mathcal{M}$ abiding by the following conditions: 
%    \begin{itemize}
%        \item[(i)] $(\mathcal{M}, \hat{g})$ is of bounded geometry -- see \cite[Def. 2.10]{Dappiaggi-Marta_2020}. 
%        \item[(ii)] $g \in \mathcal{T}^0_2(\mathcal{M}, \hat{g})$ and $g^{-1} \in \mathcal{T}^2_0(\mathcal{M}, \hat{g})$ are a bounded tensors.  
%    \end{itemize}
%\end{definition}

In this work we will be using some basic geometric functions, whose definitions are here recollected for the reader's convenience and to set the notation once and for all. For definiteness, we specialize the attention to the class of backgrounds we are interested in, although this is not strictly necessary in what follows.

\begin{definition}
\label{Def: Synge's world function}
Let $(\bH^d,\eta)$ be half-Minkowski spacetime as per Equation \eqref{Eq: half Minkowski}. Given $x \in\bH^d$ and any among its normal convex neighborhoods $\mathcal{O}_x \subseteq \bH^d$, see \cite[Ch. 6]{Lee_2018}, we call \emph{geodesic distance} between $x$ and $x' \in \mathcal{O}_x$ the length of the unique geodesic connecting the two points and we denote it by $s(x,x')$. Furthermore we define the {\bf Synge's world function} as
    \begin{equation}
        \label{Eq: Synge's world function}
        \sigma(x,x') := \frac{1}{2} s(x,x')^2= \frac{1}{2} \eta_{\mu \, \nu} (x-x')^{\mu} (x-x')^{\nu} = \frac{1}{2}  \big[(t-t')^2 - \sum_{j=1}^{d-2} (x_j-x'_j)^2 - (z-z')^2\big].
    \end{equation}
\end{definition}

We also stress that a very powerful analytic tool on half-Minkowski spacetime is the method of images. This relies on a counterpart of the Synge's world function which encodes the reflective nature of the boundary and which is here defined for later convenience using a geometric language. 

\begin{definition}
\label{Def: Reflected Synge's world function}
Consider $(\mathbb{R}^d, \eta)$ and define the reflection map $\iota_z: \mathbb{R}^d \rightarrow \mathbb{R}^d, (\underline{x},z) \mapsto (\underline{x}, -z)$. We call {\bf reflected Synge's world function}
\begin{equation}
\label{Eq: reflected Synge world function}
    \sigma_{-} (x,x') := ({\iota_{z'}}^{-1})^* \sigma (x,x') = (\iota_{-z'})^* \sigma(x,x'),
\end{equation}
whose expression in Cartesian coordinates is   
\begin{equation}
	\label{Eq: reflected Synge world function H^d}
	\sigma_- (x, x') = \frac{1}{2} \big[(t-t')^2 -\sum_{j=1}^{d-2} (x_j-x'_j)^2 - (z+z')^2  \big].
\end{equation}
\end{definition}

\noindent Observe that, by restriction, Equation \eqref{Eq: reflected Synge world function H^d} is meaningful also on $\bH^d$, while its geometric counterpart in Equation \eqref{Eq: reflected Synge world function} requires the full Minkowski spacetime. The Synge's world functions plays a prominent r\^ole in the theory of parametrices for normally hyperbolic operators as well as in that of Hadamard states for quantum fields on a curved background. Therefore we shall be using it extensively and we shall enjoy some of its notable properties. In the following remark we list those which are relevant to us, following \cite[Sec. 3.3]{Poisson_2011} to which we refer for a thorough analysis. We highlight that all identities are given both for $\sigma$ and for its reflected counterpart $\sigma_-$ and the derivation of the latter ones follows slavishly that of the former. Hence we omit it.

\begin{remark}
Consider a geodesically convex open neighborhood $\mathcal{O} \subset\bR^d$ and, bearing in mind Definition \ref{Def: Synge's world function} we denote by $\sigma_\mu := \partial_\mu \sigma$ and $(\sigma_-)_\mu := \partial_\mu \sigma_-$, where $\partial_{\mu}$ are the partial derivatives subordinated to a choice of local coordinates in $\mathcal{O}$. It holds that
	\begin{equation}
		\label{Eq: notable relations with sigma, sigma-}
		\begin{cases}
			\sigma^\mu \sigma_\mu = 2\sigma, \\
			\sigma_-^\mu (\sigma_-)_\mu = 2\sigma_-.
		\end{cases}
	\end{equation} 
	Furthermore, given any smooth function $f:\bR^d \times \bR^d \to\bR$ which depends only on $\sigma$ and $\sigma_-$, namely $f\equiv f(\sigma,\sigma_-)$, it holds that we can rewrite as follows the action of derivative operators:
	\begin{equation}
		\label{Eq: derivative operators with sigma, sigma- on Minkowski}
		\begin{cases}
			\partial_\mu = \sigma_\mu \frac{\partial}{\partial \sigma} + (\sigma_-)_\mu \frac{\partial}{\partial \sigma_-}, \\
			\Box_{\eta} = d \frac{\partial}{\partial \sigma} + 2\sigma \frac{\partial^2}{\partial^2 \sigma} + 2\sigma^\mu (\sigma_-)_\mu \frac{\partial^2}{\partial \sigma \partial \sigma_-} + d \frac{\partial}{\partial \sigma_-} + 2\sigma_- \frac{\partial^2}{\partial^2 \sigma_-}.
		\end{cases}
	\end{equation}
\end{remark}

\vskip .3cm

To conclude the section, we introduce a last convention: given $(\bH^d,\eta)$ and a geodesically convex open neighborhood $\mathcal{O}_x\subseteq\bH^d$ of a point $x \in\bH^d$, for any scalar, continuous function $B:\mathcal{O}_x\times\mathcal{O}_x\to\bC$, we can define its {\bf coinciding point limit}
\begin{equation}\label{Eq: Coinciding Point limits}
	[B]:\mathcal{O}_x\to\bC,\quad x^\prime\mapsto [B](x^\prime)\doteq B(x,x).
\end{equation}
In particular, as shown in \cite[Sec. 3]{Poisson_2011}, we can apply this definition to some notable functions constructed out of the geodesic distance:
\begin{equation}\label{Eq: Notable Coinciding Point Limits}
[\sigma_{\mu \, \nu}] = \eta_{\mu \, \nu}\quad\textrm{and}\quad [\sigma^{\mu}{}_{\mu}] = d,
\end{equation}
where $d$ is the dimension of the underlying manifold.

\section{Fundamental solutions and Parametrices}\label{Sec: Cauchy Initial Value Problem and Fundamental solutions}
In this section we recollect some well-established analytic results concerning wave-like operators and their associated solution theory, assuming in the process that the reader is already familiar with the basic notions of microlocal analysis. As main references for the background material at the heart of this part of our work we consider \cite{Baer_2007} and \cite{Hormander_1990} and, only for the sake of this section, we do not consider a single spacetime, but rather a whole class. More precisely by $(\mathcal{M}, g)$ we denote a globally hyperbolic spacetime, see \cite[Def. 1.3.8]{Baer_2007}, recalling that, by definition, it has an empty boundary, $\partial\mcM=\emptyset$ . On top of $\mcM$ we consider a real, massive, scalar field $\Phi:\mcM\to\bR$ and the associated Cauchy problem
\begin{equation}
	\label{Eq: Cauchy initial value problem}
	\begin{cases}
		P\Phi := \left(\Box_g + m^2 + \xi R\right)\Phi= f\\
		\Phi \vert_{\Sigma} = \Phi_0 \\
		\nabla_{\mathfrak{n}} \Phi \vert_{\Sigma} = \Phi_1,
	\end{cases}
\end{equation}
where $\Box_g := g^{\mu \, \nu} \nabla_{\mu} \nabla_{\nu}$ is the D'Alembert wave operator, $m^2 \ge 0$ is a parameter to be interpreted as squared mass of the field, while $\xi \in \mathbb{R}$ is a free coupling to the scalar curvature $R$ built out of $g$. In addition $f\in C^\infty_0(\mcM)\equiv\mathcal{D}(\mathcal{M})$ is source term, while $\Sigma \subset \mathcal{M}$ is a spacelike Cauchy hypersurface with $\mathfrak{n}$ its future-directed timelike unit normal vector field. For definiteness we choose the initial data $(\Phi_0, \Phi_1)$ both as elements lying in $\mathcal{D}(\mcM)$ although this specific assumption plays no r\^ole in our analysis.

\paragraph{Advanced and Retarded Propagators --} As a matter of fact, the solution theory for Equation \eqref{Eq: Cauchy initial value problem} is fully under control, see \cite{Baer_2007} and it is best understood using the theory of fundamental solutions. Most notably the following result holds true \cite[Thm. 3.3.1, Cor. 3.4.3 \& Prop 3.4.8]{Baer_2007}:

\begin{proposition}\label{Prop: Advanced and Retarded fundamental Solutions}
	Let $(\mcM,g)$ be a globally hyperbolic spacetime and let $P$ be as per Equation \eqref{Eq: Cauchy initial value problem}. Then there exists unique {\bf advanced $(-)$ and retarded (+) Green's operators} $\mathcal{G}^\pm:\mathcal{D}(\mcM)\to C^\infty(\mcM)$ such that these maps are sequentially continuous and
	\begin{itemize}
	\item[1.] $P\circ\mathcal{G}^\pm=\mathrm{id}|_{\mathcal{D}(\mcM)}$ and $\mathcal{G}^\pm\circ P|_{\mathcal{D}(\mcM)}=\mathrm{id}|_{\mathcal{D}(\mcM)},$
	\item[2.] for any $f\in\mathcal{D}(\mcM)$
	$$\mathrm{supp}(\mathcal{G}^\pm(f))\subseteq J^\mp(\mathrm{supp}(f)).$$
	\end{itemize}
\end{proposition}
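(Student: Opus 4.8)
Since Proposition \ref{Prop: Advanced and Retarded fundamental Solutions} is a cornerstone of the solution theory on globally hyperbolic backgrounds, I would only sketch the argument, referring to \cite{Baer_2007} for the full details. The plan is to proceed in four stages: a local construction, a uniqueness statement, well-posedness of the Cauchy problem, and finally globalization together with the support property. For the first stage, on a sufficiently small geodesically convex (causal) domain $\Omega\subseteq\mcM$ one builds a \emph{formal} fundamental solution out of the holomorphic family of Riesz distributions $R^\Omega_\pm(\alpha)$ — which invert powers of the flat wave operator and are transplanted to $\Omega$ through the exponential map — and the Hadamard coefficients $V_k$, solving the transport equations along geodesics with the prescribed coinciding-point data. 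Truncating $\sum_{k\ge 0}V_k R^\Omega_\pm(2+2k)$ at an order $N>d/2$ leaves a remainder with $C^N$ Schwartz kernel; a Volterra-type Neumann iteration, convergent because the associated integral operator is a contraction once $\Omega$ is small enough, corrects this remainder and produces an exact local fundamental solution $R^\Omega_\pm(x,\cdot)$ with $\mathrm{supp}\, R^\Omega_\pm(x,\cdot)\subseteq J^\pm_\Omega(x)$.

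The second stage is uniqueness. I would show that any $u\in\mathcal{D}'(\mcM)$ with \emph{past-compact} support and $Pu=0$ vanishes, and symmetrically for future-compact support. This follows by pairing $u$ with $P^\dagger\varphi$ for $\varphi\in\mathcal{D}(\mcM)$, locally inverting the formal adjoint $P^\dagger$ through its own local fundamental solution built as above, and propagating the vanishing across $\mcM$ along the level sets of a Cauchy temporal function. Uniqueness of the $\mathcal{G}^\pm$ is then immediate: the difference of two candidates applied to a test function $f$ is annihilated by $P$ and supported in $J^\mp(\mathrm{supp}\, f)$, hence has future- (resp.\ past-) compact support, and therefore is zero.

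For the third and fourth stages, one exploits a foliation of $\mcM$ by smooth spacelike Cauchy surfaces $\Sigma_\tau$ and a suitable energy functional to derive energy estimates establishing well-posedness of \eqref{Eq: Cauchy initial value problem}, with smooth solutions depending continuously on the data and obeying finite propagation speed. Given $f\in\mathcal{D}(\mcM)$, one picks a Cauchy surface $\Sigma$ with $\mathrm{supp}\, f\subseteq I^+(\Sigma)$, solves \eqref{Eq: Cauchy initial value problem} with source $f$ and vanishing initial data, and \emph{defines} $\mathcal{G}^+ f$ to be that solution (and symmetrically $\mathcal{G}^- f$). Item 1, $P\circ\mathcal{G}^\pm=\mathrm{id}$, then holds by construction; item 2 follows from the domain-of-dependence property; and $\mathcal{G}^\pm\circ P=\mathrm{id}$ on $\mathcal{D}(\mcM)$ because $\mathcal{G}^\pm P f-f$ is annihilated by $P$ with past/future-compact support and hence vanishes by the uniqueness step. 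Sequential continuity is then a consequence of the same energy estimates (alternatively, of the closed graph theorem once the support property confines the images appropriately).

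The technical heart — and the main obstacle — is the well-posedness of the Cauchy problem together with the precise domain-of-dependence statement: this requires genuine energy estimates adapted to the indefinite Lorentzian structure and a careful gluing of local solutions along the Cauchy foliation, using the global hyperbolicity of $(\mcM,g)$ in an essential way. Once this ingredient is in hand, the existence of $\mathcal{G}^\pm$, the two algebraic identities, the causal support property and the continuity statement follow by essentially formal arguments built on the local construction of stage one and the uniqueness of stage two.
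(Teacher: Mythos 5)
Your sketch is correct and is precisely the classical B\"ar--Ginoux--Pf\"affle argument (Riesz distributions and Hadamard coefficients giving local fundamental solutions via a truncated series plus a Volterra correction, uniqueness from the vanishing of distributional solutions with past/future-compact support, and globalization through well-posedness of the Cauchy problem with the domain-of-dependence property), which is exactly what the paper relies on: it gives no proof of its own and simply cites \cite[Thm. 3.3.1, Cor. 3.4.3 \& Prop. 3.4.8]{Baer_2007}. The only point to adjust is a labelling mismatch with the paper's (nonstandard) convention $\mathrm{supp}(\mathcal{G}^{\pm}(f))\subseteq J^{\mp}(\mathrm{supp}(f))$: the solution you build with source $f$ and vanishing data on a Cauchy surface to the past of $\mathrm{supp}(f)$ is supported in $J^{+}(\mathrm{supp}(f))$ and should therefore be called $\mathcal{G}^{-}f$ here, not $\mathcal{G}^{+}f$.
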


\begin{remark}\label{Rem: Propagators}
	In view of Proposition \ref{Prop: Advanced and Retarded fundamental Solutions} and of the Schwartz kernel theorem, we can associate to $\mathcal{G}^\pm$ unique bi-distributions $G^\pm\in\mathcal{D}^\prime(\mcM\times \mcM)$. Henceforth we shall mainly work at this level and we shall still refer to them as {\em advanced and retarded propagators}. In addition we introduce the {\bf advanced-minus-retarded (causal) propagator} $G=G^--G^+$ which is such that, working at the level of integral kernels, 
	\begin{equation}\label{Eq: Smart Trick}
	G^-(x,x^\prime)=\Theta(t-t^\prime)G(x,x^\prime)\quad\textrm{and}\quad G^+(x,x^\prime)=-\Theta(t^\prime-t)G(x,x^\prime),
	\end{equation}
	where $\Theta$ denotes the Heaviside step function. In addition $G$ can be realized as the solutions of the following distributional initial value problem on $\mcM$:
	\begin{equation}\label{Eq: Initial Value Problem for G}
		\begin{cases}
			(P\otimes\mathbb{I})G=(\mathbb{I}\otimes P)G=0\\
			\left.G\right|_{t=t^\prime}=0\quad\textrm{and}\quad\left.\partial_tG\right|_{t=t^\prime}=\left.-\partial_{t^\prime}G\right|_{t=t^\prime}=\delta_\Sigma,
		\end{cases}
	\end{equation}
	where $\delta_\Sigma$ is the Dirac delta supported on the diagonal of $\Sigma\times\Sigma$, while the symbol $|_{t=t^\prime}$ is a concise notation to denote the pullback of $G\in\mathcal{D}^\prime(\mcM\times\mcM)$ to $(\{t\}\times\Sigma)\times(\{t\}\times\Sigma)$. Observe that such operation is well-defined on account of standard microlocal arguments, in particular \cite[Thm 8.2.4]{Hormander_1990}.
\end{remark}

\noindent Establishing explicit expressions for the advanced and retarded propagators is in most of the cases not possible, notable exception being static spacetimes where one can employ spectral techniques, see {\it e.g.} \cite{Dappiaggi-Drago_2019}, or maximally symmetric backgrounds. 

\begin{remark}
A concrete example of this last statement is the $d$-dimensional Minkowski spacetime $(\mathbb{R}^d, \eta)$, $d\geq 2$. In this case, using complex analytic techniques, the advanced and retarded propagators, denoted by $G^{\pm}_{\bR^d}$, can be computed also exploiting Remark \ref{Rem: Propagators}. In particular, working at the level of integral kernels, it turns out that the advanced-minus-retarded fundamental solution reads
\begin{equation*}
	G_{\bR^d}(x,x^\prime)=\int_{\mathbb{R}^{d-1}} d^{d-1}k \frac{e^{-i \boldsymbol{k} \cdot (\boldsymbol{x}-\boldsymbol{x}')}}{2 \sqrt{|\boldsymbol{k}|^2 + m^2}} \left( e^{i \sqrt{|\boldsymbol{k}|^2 + m^2}(t-t')} - e^{-i \sqrt{|\boldsymbol{k}|^2 + m^2}(t-t')}  \right),
\end{equation*}
where we adopt the short-hand notation $(t, \boldsymbol{x}) \equiv (t, x_1, ..., x_{d-1}) \in \mathbb{R}^d$. The above equation admits a closed form expression as
\begin{equation}
	G_{\bR^d}(x,x^\prime)=\begin{cases}
		(-1)^{k-1}\,\dfrac{\mathrm{sgn}(t-t^\prime)}{(2\pi)^{k}}\,
		\Bigg(\dfrac{\partial}{\partial\sigma}\Bigg)^{k-1}
		\Big[J_0\!\big(m\, \sqrt{2\sigma}\,\big)\,\Theta(\sigma)\Big], & d=2k, \\[1.2em]
		\,\dfrac{\mathrm{sgn}(t-t^\prime)}{(2\pi)^{\,k+\frac12}}\,
		\Big(\dfrac{m}{\sqrt{2\sigma}}\Big)^{k-\frac{1}{2}}
		J_{\,k-\frac{1}{2}}\!\big( m\, \sqrt{2\sigma}\,\big)\,\Theta(\sigma), & d=2k+1.
	\end{cases}	
	\label{Eq: Causal Propagator in d Minkowski}
\end{equation}
where $\sigma(x,x^\prime)$ is the Synge's world function which takes the form of Equation \eqref{Eq: Synge's world function}. In addition, $\mathrm{sgn}$ and $\Theta$ denote respectively the sign and the Heaviside function, while $J_\alpha$ is the Bessel function of first kind of order $\alpha$. The corresponding advanced and retarded propagators can be constructed using Equation \eqref{Eq: Smart Trick}. 
\end{remark}

\paragraph{Parametrices --} In place of looking for the explicit construction of propagators, a more efficient and informative tool to work with is that of a parametrix. Despite being a standard concept, we feel worth recalling succinctly its definition, see \cite{Garabedian_1964}.

\begin{definition}\label{Def: Parametrix}
		Let $(\mcM,g)$ be a globally hyperbolic spacetime and let $P$ be as per Equation \eqref{Eq: Cauchy initial value problem}. We call {\bf parametrix} associated to $P$ any $H\in\mathcal{D}^\prime(\mcM\times\mcM)$ such that 
		$$(P\otimes\mathbb{I})H=\delta+R_1\quad\textrm{and}\quad(\mathbb{I}\otimes P)H=\delta+R_2,$$
		where $R_1,R_2\in C^\infty(\mcM\times\mcM)$.
\end{definition}

Among the infinitely many parametrices associated with the Klein-Gordon operator $P$, four of them are know to be {\em distinguished} due to their specific singular structure codified in their wavefront set. Two of them $H^\pm$ are nothing but the propagators $G^\pm$ in Proposition \ref{Prop: Advanced and Retarded fundamental Solutions} up to a smooth remainder, while the other two, denoted by $H^F,H^{\bar{F}}$, are called the Feynmann and anti-Feynmann parametrices. The underlying theory has been developed in \cite{Duistermaat_1972} and it can be seen as one of the pillars of the modern formulation of quantum field theory on curved backgrounds. We shall not delve into the details, but we content ourselves with reporting the explicit structure of their wavefront set:
		\begin{subequations}
			\begin{equation}
			\text{WF}(H^\pm) = \{(x,k_x, x', k_{x'}) \in T^*(\mathcal{M} \times \mathcal{M})\setminus\{0\} \, | \, (x,k_x) \sim_{\pm} (x', -k_{x'}), k_x \ne 0 \} \cup \text{WF}(\delta_{2}), 
		\end{equation}
				\begin{equation}
			WF(H^{F/\overline{F}}) = \{(x,k_x, x', k_{x'}) \in T^*(\mathcal{M} \times \mathcal{M})\setminus\{0\} \, | \, (x,k_x) \sim_{F/\overline{F}} (x', -k_{x'}), k_x \ne 0 \} \cup \text{WF}(\delta_{2}).
		\end{equation}
		\end{subequations}
		In the first one $\sim_{\pm}$ entails that $x \in J^{\pm}(x')$ and that there exists a lightlike geodesic $\gamma$ connecting $x$ to $x'$ such that $-k_{x'}$ is the parallel transport of $k_x$ along $\gamma$, while $\delta_2$ is the Dirac delta supported on the diagonal of $\mathcal{M} \times \mathcal{M}$. In the second one, instead, $\sim_{F}$ ({\em resp.} $\sim_{\overline{F}}$) entails that there exists a lightlike geodesic $\gamma$ connecting $x$ to $x'$ such that $-k_{x'}$ is the parallel transport of $k_x$ along $\gamma$ and, if $x' \in J^{+}(x)$, then $k_x \triangleright 0$ (\emph{resp}. $k_x \triangleleft 0$), while if $x' \in J^{-}(x)$, then $k_x \triangleleft 0$ (\emph{resp}. $k_x \triangleright 0$). The symbol $\triangleright 0$ ({\em resp.} $\triangleleft 0$) signifies that a vector is causal and future ({\em resp.} past) directed.

\paragraph{Hadamard Two-Point Distributions --} The last key ingredient that we introduce does not stem from the theory of partial differential equations of hyperbolic type or from microlocal analysis. It finds its origin in the formulation of free quantum field theories on globally hyperbolic curved backgrounds. Its physical significance is vast and discussing these aspects would bring us far from the main goal of this work. Therefore, we refer an interested reader to \cite{brunetti2}, Chapter 5 in particular. Here we content ourselves with giving its mathematical definition, following mainly the seminal work of \cite{Radzikowski_1996,Radzikowski_1996_1}.

\begin{definition}\label{Def: Hadamard States}
	Let $(\mcM,g)$ be a globally hyperbolic spacetime and let $P$ be as per Equation \eqref{Eq: Cauchy initial value problem}. We say that $\omega_2\in\mathcal{D}^\prime(\mcM\times\mcM)$ is a {\bf Hadamard two-point function} if, denoting by $G$ the advanced-minus-retarded propagator as in Remark \ref{Rem: Propagators}
	\begin{enumerate}
		\item $(P\otimes\mathbb{I})\omega_2=(\mathbb{I}\otimes P)\omega_2=0$,
	\end{enumerate} 
and for all $f,f^\prime\in\mathcal{D}(\mcM)$
	\begin{enumerate}
	\setcounter{enumi}{1}
		\item $\omega_2(f,f^\prime)-\omega_2(f^\prime,f)=i G(f,f^\prime)$,
		\item $\Im\left(\omega_2(f,f^\prime)\right)=\frac{1}{2}\left(G(f,f^\prime)\right)$, where $\Im$ denotes the imaginary part,
		\item $|G(f,f^\prime)|^2\leq 4\omega_2(f,f)\omega_2(f^\prime,f^\prime)$.
	\end{enumerate} 
In addition the singular structure of $\omega_2$ is such that 
\begin{equation}
	\label{Eq: omega2 WFset}
	\text{WF}(\omega_2) = \{(x,k_x, x', -k_{x'}) \in T^*(\mathcal{M} \times \mathcal{M})\setminus\{0\}\, | \, (x,k_x) \sim (x', k_{x'}) \, \text{and} \, k_x \triangleright 0 \},
\end{equation}
where $\sim$ entails that there exists a lightlike geodesic $\gamma$ connecting $x$ to $x^\prime$ so that $k_{x^\prime}$ is the parallel transport along it of $k_x$. In addition, $k_x\triangleright 0$ entails that $k_x$ is causal and future-pointing.
\end{definition}

Bi-distributions $\omega_2$ abiding by Definition \ref{Def: Hadamard States} are also said to be of {\em global} Hadamard form and their existence is a well established result, see \cite[Chap. 5]{Brunetti_1996} and references therein. It is worth mentioning that the very same argument which leads to the proof of this statement highlights that, unless the background is highly symmetric, it is hard to give explicit examples of Hadamard states. As a matter of fact, especially in the theoretical physics literature, it is more customary to work with bi-distributions which are of \emph{local Hadamard form}, \textit{i.e.}, in every geodesically convex open neighbourhood $\mathcal{O} \subseteq \mathcal{M}$, their integral kernel reads
\begin{equation}
    \label{Eq: local Hadamard form}
    \omega_2(x,x') := \lim_{\epsilon \rightarrow 0^+} \frac{U(x,x')}{4 \pi \sigma_{\epsilon}^{\frac{d-2}{2}}(x,x')} + \delta_d V(x,x') \ln \frac{\sigma_{\epsilon}(x,x')}{\lambda^2} + W(x,x'), \, \, \lambda \in \mathbb{R}
\end{equation}
where $\lambda>0$ is a reference scale length while $\sigma_{\epsilon}(x,x') = \sigma(x,x') + i \epsilon (t(x) - t(x')) + \epsilon^2$. Here $\sigma$ is the Synge's world function, see \cite{Poisson_2011}, which is a generalization of Definition \ref{Def: Synge's world function}, while $t: \mathcal{M} \rightarrow \mathbb{R}$ is a temporal function, whereas
\begin{equation*}
    \delta_d = \begin{cases}
        1 \, \, \text{if $d$ is even}, \\
        0 \, \, \text{if $d$ is odd}.
    \end{cases}
\end{equation*}
The functions $U,V,W\in C^\infty(\mathcal{O}\times\mathcal{O})$ and, while $W$ encodes the local freedom in choosing a Hadamard states, the others are fully determined by the underlying geometry and equations of motion. More precisely, see \cite{Friedlander_1975}, starting from Equation % \eqref{Eq: Hadamard parametrix local form} 
\eqref{Eq: local Hadamard form}, it is convenient to expand $U$ and $V$ as  formal power series in $\sigma$, namely,
\begin{flalign}
	\label{Eq: U power series}
	U = \sum_{j=0}^{\infty} u_j(x, x') \left( \frac{\sigma}{\lambda} \right)^j\;\textrm{and}\; V = \sum_{j=0}^{\infty} v_j(x, x') \left( \frac{\sigma}{\lambda} \right)^j,\quad u_j,v_j \in C^{\infty}(\mathcal{O} \times \mathcal{O}).
\end{flalign}
By imposing the equations of motion, one obtains {\em the Hadamard recursion relations} for the coefficients $\{u_j\}_{j=0}^{\infty}$ and
$\{v_j\}_{j=0}^{\infty}$, see also \cite{Decanini:2005eg}. In \emph{even} dimensions, setting $u_{-1}=v_{-1}=0$, they read for $j \in \mathbb{N}_0\cup\{-1\}$,
\begin{equation}
\label{Eq: Hadamard recursion relations on M even}
      \begin{cases}
        Pu_j + (2j + 4 -d) \sigma^{\mu} \partial_{\mu} u_{j+1} + (j+1) \left( \sigma^{\mu}{}_{\mu} + 2j + 4 - 2d\right) u_{j+1} + \frac{(2-d)}{2} (\sigma^{\mu}{}_{\mu} -d) u_{j+1} = 0, \\
        [u_0]=1,\;\textrm{and}\;[u_{j+1}] = -\frac{ [Pu_j]}{(j+1)(2j+4-d)},  \\\\
        Pv_j + 2(j+1) \sigma^\mu \partial_\mu v_{j+1} + (j+1) (\sigma^{\mu}{}_{\mu} + 2j) v_{j+1} = 0, \\
        [v_0] = - \frac{[Pu_{\frac{d}{2} - 2}]}{(d-2)}\;\textrm{and}\;[v_{j+1}] = - \frac{[Pv_j]}{(j+1)(d+2j)},
    \end{cases}
\end{equation}
whilst in \emph{odd} dimensions they become 
\begin{equation}
    \label{Eq: Hadamard recursion relations on M odd}
     \begin{cases}
        Pu_j + (2j + 4 -d) \sigma^{\mu} \partial_{\mu} u_{j+1} + (j+1) \left( \sigma^{\mu}{}_{\mu} + 2j + 4 - 2d\right) u_{j+1} + \frac{(2-d)}{2} (\sigma^{\mu}{}_{\mu} -d) u_{j+1} = 0, \\
        [u_0]=1,\;\textrm{and}\;[u_{j+1}] = -\frac{ [Pu_j]}{(j+1)(2j+4-d)}.
    \end{cases}
\end{equation}
Here the bracket $[\cdot]$ is defined as per Equation \eqref{Eq: Coinciding Point limits}. Observe that Equations \eqref{Eq: Hadamard recursion relations on M even} and \eqref{Eq: Hadamard recursion relations on M odd} are a set of transport equations with prescribed initial conditions in terms of coinciding point limits. 

\begin{remark}\label{Rem: Asymptotic series}
	Although individually the Hadamard recursion relations admit a solution and in addition all the coefficients are symmetric functions \cite{Moretti_2000}, the series in Equation \eqref{Eq: U power series} are in general only asymptotic, unless the underlying background is analytic. In this case convergence to an analytic function can be established, see \cite[Thm 4.3.1]{Friedlander_1975}.
\end{remark}

\noindent Definition \ref{Def: Hadamard States} and Equation \eqref{Eq: local Hadamard form} are actually two sides of the same coin as established by the celebrated Radzikowski theorem \cite{Radzikowski_1996,Radzikowski_1996_1}.

\begin{theorem}\label{Thm: Radzikowski}
	Let $(\mcM,g)$ be a globally hyperbolic spacetime and let $P$ be as per Equation \eqref{Eq: Cauchy initial value problem}. Then a bi-distribution $\omega_2\in\mathcal{D}^\prime(\mcM\times\mcM)$ is of global Hadamard form if and only if it is of local Hadamard form. 
\end{theorem}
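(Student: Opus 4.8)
The plan is to establish the two implications separately, with the common engine a sharp computation of the wavefront set of the local Hadamard parametrix. Write
\[
	\widetilde{H}(x,x') \;=\; \lim_{\epsilon\to0^+}\frac{U(x,x')}{4\pi\,\sigma_\epsilon^{\frac{d-2}{2}}(x,x')} \;+\; \delta_d\, V(x,x')\,\ln\frac{\sigma_\epsilon(x,x')}{\lambda^2}
\]
for the singular part on the right-hand side of \eqref{Eq: local Hadamard form}, and $\widetilde{H}_N$ for the same expression with $U,V$ replaced by the partial sums $U_N=\sum_{j=0}^{N}u_j(\sigma/\lambda)^j$, $V_N=\sum_{j=0}^{N}v_j(\sigma/\lambda)^j$ of the series \eqref{Eq: U power series}. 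I would record two facts about $\widetilde{H}_N$. First, the recursion relations \eqref{Eq: Hadamard recursion relations on M even}--\eqref{Eq: Hadamard recursion relations on M odd} are arranged precisely so that $(P\otimes\mathbb{I})\widetilde{H}_N$ and $(\mathbb{I}\otimes P)\widetilde{H}_N$ are $C^{k(N)}$ functions with $k(N)\to\infty$ as $N\to\infty$ (Friedlander, \cite{Friedlander_1975}); since the wavefront set is insensitive to $C^\infty$ perturbations and $\widetilde{H}_{N'}-\widetilde{H}_N$ is arbitrarily regular for $N'>N$, ``$\mathrm{WF}(\widetilde{H})$'' is well defined and one may raise $N$ at will. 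Second --- and this is where the prescription $\sigma_\epsilon=\sigma+i\epsilon(t-t')+\epsilon^2$ enters decisively --- $\widetilde{H}$ is a boundary value of a function holomorphic in a one-sided tube, so that Hörmander's theorem on boundary values of holomorphic functions \cite{Hormander_1990} together with propagation of singularities for $P$, equivalently the Duistermaat--Hörmander classification of distinguished parametrices \cite{Duistermaat_1972}, gives
\[
	\mathrm{WF}(\widetilde{H})\subseteq\mathcal{N}^+:=\{(x,k_x,x',-k_{x'})\in T^*(\mcM\times\mcM)\setminus 0 : (x,k_x)\sim(x',k_{x'}),\ k_x\triangleright 0\},
\]
the set in \eqref{Eq: omega2 WFset}. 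The transposed parametrix $\widetilde{H}^{\,t}(x,x'):=\widetilde{H}(x',x)$ --- obtained by the conjugate prescription $\sigma-i\epsilon(t-t')+\epsilon^2$, and using that the Hadamard coefficients are symmetric --- then satisfies $\mathrm{WF}(\widetilde{H}^{\,t})\subseteq\mathcal{N}^-$, the image of $\mathcal{N}^+$ under $(x,k_x,x',k_{x'})\mapsto(x',k_{x'},x,k_x)$; and $\mathcal{N}^+\cap\mathcal{N}^-=\emptyset$, since a nonzero covector cannot be simultaneously future- and past-directed causal. Finally I would note the identity $\widetilde{H}-\widetilde{H}^{\,t}=iG \bmod C^\infty(\mcM\times\mcM)$: the left side is an antisymmetric bi-solution modulo $C^\infty$ whose wavefront set lies in $\mathcal{N}^+\cup\mathcal{N}^-$ and whose coinciding-point data agree with those of $iG$, and $iG$ is the unique such object by the characterisation of $G$ in Remark \ref{Rem: Propagators} together with \cite[Thm. 8.2.4]{Hormander_1990}.

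For the implication ``local $\Rightarrow$ global'', suppose $\omega_2$ satisfies $(P\otimes\mathbb{I})\omega_2=(\mathbb{I}\otimes P)\omega_2=0$, the canonical commutation relation, positivity, and $\omega_2=\widetilde{H}+W$ with $W\in C^\infty(\mcM\times\mcM)$. Conditions (2)--(4) of Definition \ref{Def: Hadamard States} involve only $\omega_2$ and the fixed propagator $G$, hence are common to both notions (condition (2), if one prefers, also follows from $\widetilde{H}-\widetilde{H}^{\,t}=iG\bmod C^\infty$ and the symmetry of $W$ modulo $C^\infty$). The inclusion $\mathrm{WF}(\omega_2)\subseteq\mathcal{N}^+$ is immediate from $\mathrm{WF}(\omega_2)=\mathrm{WF}(\widetilde{H})$. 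For the reverse inclusion one uses the commutator $\omega_2-\omega_2^{\,t}=iG$ at the level of kernels: since $\mathrm{WF}(\omega_2)\subseteq\mathcal{N}^+$, $\mathrm{WF}(\omega_2^{\,t})\subseteq\mathcal{N}^-$ and $\mathrm{WF}(G)=\mathcal{N}^+\cup\mathcal{N}^-$, disjointness of $\mathcal{N}^\pm$ forces $\mathcal{N}^+\subseteq\mathrm{WF}(\omega_2)$, so \eqref{Eq: omega2 WFset} holds and $\omega_2$ is of global Hadamard form.

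For ``global $\Rightarrow$ local'', suppose $\omega_2$ is of global Hadamard form and set $d:=\omega_2-\widetilde{H}\in\mathcal{D}'(\mcM\times\mcM)$. On the one hand $\mathrm{WF}(d)\subseteq\mathrm{WF}(\omega_2)\cup\mathrm{WF}(\widetilde{H})\subseteq\mathcal{N}^+$. On the other hand, combining $\omega_2^{\,t}=\omega_2-iG$ (the CCR) with $\widetilde{H}^{\,t}=\widetilde{H}-iG\bmod C^\infty$ gives $d^{\,t}=\omega_2^{\,t}-\widetilde{H}^{\,t}=d\bmod C^\infty$, so $\mathrm{WF}(d)=\mathrm{WF}(d^{\,t})$; but $\mathrm{WF}(d^{\,t})$ is the coordinate flip of $\mathrm{WF}(d)$, hence $\mathrm{WF}(d)\subseteq\mathcal{N}^-$ as well. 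Therefore $\mathrm{WF}(d)\subseteq\mathcal{N}^+\cap\mathcal{N}^-=\emptyset$, i.e. $d\in C^\infty(\mcM\times\mcM)$, and $\omega_2=\widetilde{H}+W$ with $W:=d$ smooth is exactly the local Hadamard form. Two routine points close the argument: independence of the notion from the reference length $\lambda$ (changing $\lambda$ modifies $\widetilde{H}$ by a smooth term, absorbed into $W$) and the passage from a single geodesically convex neighbourhood to all of $\mcM$, done by a partition of unity on the globally hyperbolic $\mcM$ as in \cite{Kay_Wald:1991,Radzikowski_1996}.

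The heart of the matter, and the only genuinely hard step, is the identification $\mathrm{WF}(\widetilde{H})=\mathcal{N}^+$ in the first paragraph. It requires controlling the wavefront sets of the tube boundary values $\sigma_\epsilon^{-(d-2)/2}$ and $\ln\sigma_\epsilon$ on a curved background and then using propagation of singularities for $P$ to upgrade the ``conormal-to-the-light-cone'' information into the full parallel-transport description of $\mathcal{N}^+$; equivalently, one invokes the Duistermaat--Hörmander uniqueness of the distinguished parametrix whose wavefront set is confined to one connected component of the characteristic relation and checks that the chosen $i\epsilon$-prescription selects the positive-frequency component. All remaining ingredients --- the recursion relations rendering $(P\otimes\mathbb{I})\widetilde{H}_N$ smooth of arbitrarily high order, the transpose identity $\widetilde{H}-\widetilde{H}^{\,t}=iG\bmod C^\infty$, the disjointness $\mathcal{N}^+\cap\mathcal{N}^-=\emptyset$, and the $\lambda$- and patching-bookkeeping --- are classical or elementary.
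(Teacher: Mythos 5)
A preliminary remark: the paper does not prove Theorem \ref{Thm: Radzikowski} at all — it imports it from \cite{Radzikowski_1996,Radzikowski_1996_1} (its own new contribution is the boundary analogue, Theorem \ref{Thm: Global-to-Local}). So your proposal must be measured against the classical argument. Your ``global $\Rightarrow$ local'' direction, the truncation handling of the asymptotic series, and the identification $\mathrm{WF}(\tilde{H})\subseteq\mathcal{N}^+$ via the $i\epsilon$-prescription, H\"ormander's boundary-value theorem and the Duistermaat--H\"ormander theory are all in line with Radzikowski's route, and the flip/antisymmetry trick you use is the same device the paper itself uses in the proof of Theorem \ref{Thm: Global-to-Local}. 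Up to the admitted sketchiness of $\tilde{H}-\tilde{H}^{\,t}=iG \bmod C^\infty$, that half is sound.

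The genuine gap is in ``local $\Rightarrow$ global''. The local Hadamard form yields $\omega_2=\tilde{H}+W$ with $W$ smooth only on $\mathcal{O}\times\mathcal{O}$ for each geodesically convex $\mathcal{O}$; $\tilde{H}$ is not a bi-distribution on $\mcM\times\mcM$ (see the remark following Equation \eqref{Eq: Hadamard parametrix local form}), so writing $W\in C^\infty(\mcM\times\mcM)$ already assumes more than the hypothesis. Consequently your equality $\mathrm{WF}(\omega_2)=\mathrm{WF}(\tilde{H})$ controls $\mathrm{WF}(\omega_2)$ only over $\bigcup_{\mathcal{O}}\mathcal{O}\times\mathcal{O}$, i.e.\ over a neighbourhood of the diagonal, whereas the global condition \eqref{Eq: omega2 WFset} constrains the wavefront set over all of $\mcM\times\mcM$: it must exclude singularities at widely separated pairs $(x,x')$ — for instance spacelike related points lying in no common convex neighbourhood. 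Your argument never touches these, and the commutator trick cannot help there, since $G$ vanishes at spacelike separation and the disjointness of $\mathcal{N}^\pm$ yields no constraint off the diagonal region. Excluding such singularities is precisely the content of Radzikowski's \emph{local-to-global singularity theorem} \cite{Radzikowski_1996_1}, whose proof needs the positivity/Cauchy--Schwarz property (item 4 of Definition \ref{Def: Hadamard States}) together with the propagation of singularities for $P$ in each entry — neither of which enters your proposal (you explicitly set conditions (2)--(4) aside as ``common to both notions''). The ``partition of unity'' you relegate to routine bookkeeping cannot close this, because the sets $\mathcal{O}\times\mathcal{O}$ do not cover $\mcM\times\mcM$. (The paper avoids this issue in its own setting only because on half-Minkowski $\sigma$ and $\sigma_-$ are globally defined, so the decomposition in Theorem \ref{Thm: State is of local Hadamard form} holds with $\mathcal{O}=\bH^d$.)
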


\begin{remark}
	One of the most important aspects of Equation \eqref{Eq: local Hadamard form} is the realization that, locally, the singular component of any Hadamard state as per Definition \ref{Def: Hadamard States} is fully determined by the underlying geometry and dynamics. For this reason it is customary to call {\bf Hadamard parametrix} the singular component in Equation \eqref{Eq: local Hadamard form}, \textit{i.e.},
	\begin{equation}
		\label{Eq: Hadamard parametrix local form}
		H_\lambda(x,x') := \lim_{\epsilon \rightarrow 0^+} \frac{U(x,x')}{4 \pi \sigma_{\epsilon}^{\frac{d-2}{2}}(x,x')} + \delta_d V(x,x') \ln \frac{\sigma_{\epsilon}(x,x')}{\lambda^2}, \, \, \lambda \in \mathbb{R}.
	\end{equation}
	In comparison to Definition \ref{Def: Parametrix}, this nomenclature might appear ambiguous since $H_\lambda$ does not identify a bi-distribution on the whole manifold $\mcM$, but rather only on any but fixed convex geodesic neighborhood $\mathcal{O}_x$. In addition $H_\lambda$ is a bi-solution of the equation of motion rather than an inverse up to a smooth contribution. Yet, since this nomenclature is ubiquitous in the literature, we shall employ it.
\end{remark}

Following this analysis we can combine point {\em 2.} in Definition \ref{Def: Hadamard States} with Equation \eqref{Eq: local Hadamard form} to infer a local expression for the integral kernel of $G$, the advanced-minus-retarded fundamental solution of the Klein-Gordon operator, which reads
\begin{equation}
	\label{Eq: local propagator form}
	G(x,x') := \begin{cases}
		\mathrm{sgn}(\Delta t)\frac{V(x,x')}{2}\Theta(\sigma(x,x^\prime)), & d=2, \\[1.2em]
		\mathrm{sgn}(\Delta t)\left(\frac{(-1)^{\,\frac{d-2}{2}}}{2(2\pi)^{\frac{d-2}{2}}}\,U(x,x')\delta^{\frac{d-4}{2}}(\sigma)+\pi V(x,x') \Theta(\sigma(x,x^\prime))\right), & d=2k,\;k>1, \\[1.2em]
		\mathrm{sgn}(\Delta t)\,\frac{(-1)^{\frac{d-1}{2}}}{2\pi^\frac{d}{2}}\frac{U(x,x^\prime)}{(2\sigma)^{\frac{d-2}{2}}(x,x^\prime)}\Theta(\sigma(x,x^\prime)),
        & d=2k+1.
	\end{cases}	 
\end{equation}

\paragraph{Feynman Parametrix --} Having established the concept of Hadamard two-point function as per Definition \ref{Def: Hadamard States}, we can now establish a different albeit equivalent characterization of the Feynman propagator for the Klein-Gordon operator on a globally hyperbolic spacetime $(\mcM,g)$, namely \cite{Rejzner}
\begin{equation}\label{Eq: Feynman propagator}
G^F :=  G^+-i\omega_2
\end{equation}
 where $\omega_2$ is as per Definition \ref{Def: Hadamard States} while $G^+$ is the retarded propagator as per Proposition \ref{Prop: Advanced and Retarded fundamental Solutions}. Observe that Equation \eqref{Eq: Feynman propagator} differs from part of the literature by a multiplicative constant $i$. Bearing in mind \cite{Duistermaat_1972}, there exists an associated parametrix $H^F \in \mathcal{D}'(\mathcal{M} \times \mathcal{M})$, the \emph{Feynman parametrix}, such that, on every but fixed geodesically convex open neighborhood $\mathcal{O} \subset \mathcal{M}$, the associated integral kernel reads
\begin{equation}
    \label{Eq: local form of the Feynman parametrix}
    H^F_\lambda(x,x') = \lim_{\epsilon \rightarrow 0^+} \frac{U(x,x')}{ \sigma_{\epsilon, F}^{\frac{d-2}{2}}(x,x')} + \delta_d V(x,x') \ln \frac{\sigma_{\epsilon, F}(x,x')}{\lambda^2},
\end{equation}
where $\sigma_{\epsilon, F}(x,x') := \sigma(x,x') + i \epsilon$. Here $\lambda \in \mathbb{R}$, $\sigma(x,x')$ is the Synge's world function on $(\mcM,g)$ while $U$ and $V$ abide by the Hadamard recursion relations as per Equations \eqref{Eq: Hadamard recursion relations on M even} and \eqref{Eq: Hadamard recursion relations on M odd}.

\section{Fundamental Solutions on half-Minkowski Spacetime}\label{Sec: BV problem}

In this section we formulate the problem that we investigate in this work, focusing henceforth our attention to half-Minkowski spacetime $(\bH^d,\eta)$ as in Equation \eqref{Eq: half Minkowski}. In particular we consider a real scalar field abiding by the Klein-Gordon equation with Robin boundary conditions and we adress the problem of constructing the advanced and retarded propagators proving in particular their causal support properties.

\subsection{Formulation of the Problem}\label{Sec: Formulation of the Problem}

On top of $(\bH^d,\eta)$ we consider a real, scalar field $\Phi:\bH^d\to\bR$ which abides by the Klein-Gordon equation
\begin{equation}
\label{Eq: KG equation}
    P\Phi=(\Box_\eta+m^2)\Phi=0,
\end{equation}
where $\Box_\eta$ is the D'Alembert wave operator, while $m^2\geq 0$. Yet, contrary to the scenario considered in Section \ref{Sec: Cauchy Initial Value Problem and Fundamental solutions}, $\partial\bH^d\neq\emptyset$ entails that the Cauchy problem is not well-defined and, in order to guarantee both existence and uniqueness of the solutions, one needs to supplement initial data with boundary conditions assigned at $\partial\bH^d$. 

This hurdle has therefore a deep impact on all those structures which we have introduced in Section \ref{Sec: Cauchy Initial Value Problem and Fundamental solutions} starting from an initial value problem, namely the advanced and retarded propagators $G^\pm$, {\it cf.} Equations \eqref{Eq: Smart Trick} and \eqref{Eq: Initial Value Problem for G}, and the Hadamard distributions, see Definition \ref{Def: Hadamard States} . Among the infinitely many boundary conditions which can be consistently assigned, in this work we shall consider those of {\bf Robin type} and, to start with, we give a definition of advanced and retarded propagators, tailored to this choice, although generalization to other boundary conditions are straightforward. Similarly the following concept can be extended almost verbatim to any globally hyperbolic spacetime with a timelike boundary, see \cite{Ak_Hau_2020, Dappiaggi-Drago_2019}.

\begin{definition}\label{Def: SolFond on boundary}
	Given half-Minkowski spacetime $(\bH^d,\eta)$ as per Equation \eqref{Eq: half Minkowski} as well as $P$, the Klein-Gordon operator as per Equation \eqref{Eq: Cauchy initial value problem}, we call {\bf advanced $(-)$ and retarded (+) propagators with Robin boundary conditions} any $G^\pm_\kappa:\mathcal{D}^\prime(\bH^d\times\bH^d)$ such that, denoting by $\mathring{\bH}^d=\bH^d\setminus\partial\bH^d$,
	\begin{equation}\label{Eq: Boundary and PDE}
	(P\otimes\mathbb{I})G^\pm_\kappa\vert_{\mathring{\bH}^d\times\mathring{\bH}^d}=\delta\vert_{\mathring{\bH}^d\times\mathring{\bH}^d}\quad\textrm{and}\quad(\mathbb{I}\otimes P)G^\pm_\kappa\vert_{\mathring{\bH}^d\times\mathring{\bH}^d}=\delta\vert_{\mathring{\bH}^d\times\mathring{\bH}^d},
	\end{equation}
	and
	\begin{equation}\label{Eq: Boundary conditions}
	(\partial_{\mathbf{n}}\otimes\mathbb{I})G^\pm_\kappa|_{\partial\bH^d}=(-\kappa\otimes\mathbb{I})G^\pm_\kappa|_{\partial\bH^d},\quad\kappa\in\bR,
	\end{equation}
	where $\partial_{\mathbf{n}}\equiv\partial_z$ denotes the derivative along the direction normal to $\partial\bH^d$ while $\vert_{\partial\bH^d}$ stands for the pull back of $\mathcal{G}^\pm_\kappa$ to $\partial\bH^d$ along the first entry. In addition we require that, for all $f\in\mathcal{D}(\mathring{\bH}^d)$,
	\begin{equation}\label{Eq: Support at the boundary}
	\textrm{supp}(G^\pm_\kappa(f))\subseteq J^\mp(\textrm{supp}(f)),
	\end{equation}
	where the partial evaluation is on the second entry.
\end{definition}

\begin{remark}\label{Rem: Dirichlet boundary conditions}
	Equation \eqref{Eq: Boundary conditions} can be generalized including formally $\kappa\to\pm\infty$ as the cases corresponding to Dirichlet boundary conditions, that is $G^\pm_{\infty}|_{\partial\bH^d}=0$.
\end{remark}

\noindent In the same spirit of Proposition \ref{Prop: Advanced and Retarded fundamental Solutions} it is convenient to introduce the {\em advanced-minus-retarded propagator} $G_\kappa\doteq G^-_\kappa-G^+_\kappa\in\mathcal{D}^\prime(\bH^d\times\bH^d)$ abiding by the following distributional mixed initial and boundary value problem: 

\begin{equation}
	\label{Eq: BV for G}
	\begin{cases}
		[P \otimes \mathbb{I}] G_\kappa= [\mathbb{I} \otimes P] G_\kappa = 0, \\
		G_\kappa \vert_{t=t'} = 0, \, \partial_t G_\kappa \vert_{t=t'} = - \partial_{t'} G_\kappa \vert_{t=t'} = \delta_{\mathring{\Sigma}},\\
		(\partial_{\mathbf{n}} G_\kappa + \kappa G_\kappa) \vert_{\partial\bH^d} = 0, \, \, \kappa \in \mathbb{R},
	\end{cases}
\end{equation}
where the symbol $\vert_{t=t'}$ denotes the pull back of $G$ on the initial value surface while $\mathring{\Sigma}=\Sigma\setminus\partial\Sigma$ denotes the interior of a constant time Cauchy surface of $\bH^d$. Observe that, since the first two lines in Equation \eqref{Eq: BV for G} correspond to Equation \eqref{Eq: Boundary and PDE}, we should be denoting a restriction of $G_\kappa$ to $\mathring{\bH}^d\times\mathring{\bH}^d$. Yet, we prefer avoiding it since it would make the notation too heavy and we feel that the meaning is clear from the context. As mentioned in the introduction, we reiterate two remarkable facts concerning advanced and retarded propagators on globally hyperbolic manifolds with a timelike boundary and abiding by Robin boundary conditions: 
\begin{itemize}
	\item[\ding{104}] existence of $G_\kappa$ has already been proven in \cite[Thm. 30]{Dappiaggi-Drago_2019} on every static, globally hyperbolic spacetime with a timelike boundary using spectral analytic techniques, but it is still unknown whether the ensuing propagators $G^\pm_\kappa$ obtained using Equation \eqref{Eq: Smart Trick} obey the causal support properties listed in Proposition \ref{Prop: Advanced and Retarded fundamental Solutions} unless $\kappa=0$ or $\kappa\to \pm \infty$. 
	\item[\ding{104}] there is no known counterpart for Equation \eqref{Eq: local propagator form}, in particular if one considers convex geodesic neighborhoods centered at a point $x \in\partial\bH^d$.
\end{itemize}

\noindent In the following we devise a procedure which will allows us to give an answer to both these questions. 

\begin{remark}\label{Rem: Positive kappa}
    In Equation \eqref{Eq: BV for G} we have formulated the Robin Boundary conditions for every value of $\kappa\in\bR$. As far we work at the level of fundamental solutions, this requirement is harmless, but, when considering instead two-point correlation functions as in Section \ref{Sec: Hadamard Recursion Relations}, for $\kappa<0$ modes exponentially growing in time are present and they lead to models which are considered to be unstable from the physical viewpoint. For this reason, henceforth we shall restrict our attention to $\kappa\geq 0$.
\end{remark}

\subsection{Fundamental solutions on \texorpdfstring{$\bH^d$}{Hd} with Robin boundary conditions}\label{Sec: Existence of Robin fundamental solutions} 

As explained in the introduction the strategy that we follow is based on several separate steps which we discuss in the following.

\paragraph{Dirichlet and Neumann boundary conditions --} Our first step consists of considering on $\bH^d$ Equation \eqref{Eq: BV for G} though with two distinguished boundary conditions: {\em Dirichlet} and {\em Neumann}, corresponding respectively to $\kappa\to\pm\infty$ and $\kappa=0$ in Equation \eqref{Eq: Boundary conditions}. Denoting for simplicity the associated propagators respectively by $G_D$ and $G_N$, we are imposing
\begin{equation}\label{Eq: D and N Boundary Conditions}
G_D\vert_{\partial\bH^d} = 0\quad\textrm{and}\quad\partial_z G_N\vert_{\partial\bH^d}, = 0
\end{equation}
where the pull-back to $\partial\bH^d$ has to be referred only to the first entry of the bi-distributions. Adapting to the case in hand the {\em method of images} the following statement holds true, see also \cite{Dappiaggi-Nosari_2016}:

\begin{proposition}\label{Prop: D and N Causal Propagator}
	Let $(\bH^d,\eta)$, $d\geq 2$, denote half-Minkowski spacetime as per Equation \eqref{Eq: half Minkowski}, $P$ the Klein-Gordon operator as per Equation \eqref{Eq: Cauchy initial value problem} and  $G_{\bR^d}$ the advanced-minus-retarded fundamental solution of $P$ on the whole Minkowski spacetime as per Equation \eqref{Eq: Causal Propagator in d Minkowski}. It holds that, working at the level of integral kernel, for every $(\underline{x}, z), (\underline{x'}, z')\in\bH^d$, the bi-distributions 
	\begin{flalign}
		\label{Eq: G_D}
		G_D (\underline{x}, z, \underline{x'}, z') &= G_{\bR^d} (\underline{x}- \underline{x'}, z - z') - G_{\bR^d} (\underline{x}- \underline{x'}, z + z'), \\ 
		\label{Eq: G_N}  
		G_N (\underline{x}, z, \underline{x'}, z') &= G_{\bR^d} (\underline{x}- \underline{x'}, z-z') + G_{\bR^d} (\underline{x} - \underline{x'}, z+z'),
	\end{flalign}
are the unique solution of Equation \eqref{Eq: BV for G} with boundary condition as in Equation \eqref{Eq: D and N Boundary Conditions}. 
\end{proposition}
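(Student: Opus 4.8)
\section*{Proof proposal}

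The plan is to prove existence by a direct verification on the closed-form expressions \eqref{Eq: G_D}--\eqref{Eq: G_N}, exploiting that the Minkowski metric $\eta$ and all the data of the problem are invariant under the reflection $\iota_z$, and then to prove uniqueness by a reflection argument that reduces the statement to the uniqueness of the distributional Cauchy problem on the whole of Minkowski spacetime, i.e.\ to Proposition~\ref{Prop: Advanced and Retarded fundamental Solutions} and Remark~\ref{Rem: Propagators} (equation \eqref{Eq: Initial Value Problem for G}). Throughout I would use two elementary facts about the translation-invariant kernel $w\mapsto G_{\bR^d}(\underline{x}-\underline{x}',w)$: first, by \eqref{Eq: Causal Propagator in d Minkowski} it depends on $w$ only through $\sigma$, which is quadratic in the $z$-difference, so it is \emph{even} in $w$ (equivalently $\iota_z$-invariant); second, $G_{\bR^d}$ is a genuine bi-solution, $(P\otimes\mathbb{I})G_{\bR^d}=(\mathbb{I}\otimes P)G_{\bR^d}=0$, being the difference of two fundamental solutions whose source terms cancel.

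For existence, note that the maps $(\underline{x},z,\underline{x}',z')\mapsto(\underline{x}-\underline{x}',z\mp z')$ are submersions commuting with $P$ in either entry (since $P$ is invariant under translations and under $\iota_z$), so each of the two summands of \eqref{Eq: G_D}--\eqref{Eq: G_N}, hence $G_D$ and $G_N$ themselves, solve the homogeneous equation in both arguments; the wavefront set of each summand is the submersion-pullback of $\mathrm{WF}(G_{\bR^d})$, which lies on the light cone and therefore contains no covector conormal to $\{z=0\}$ in the first factor (such covectors being spacelike), so all the pullbacks to $\partial\bH^d$ and to $\{t=t'\}$ used in \eqref{Eq: BV for G} and \eqref{Eq: D and N Boundary Conditions} are well defined by \cite[Thm 8.2.4]{Hormander_1990}. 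Evenness of $w\mapsto G_{\bR^d}(\cdot,w)$ then gives $G_D|_{z=0}=G_{\bR^d}(\cdot,-z')-G_{\bR^d}(\cdot,z')=0$, while oddness of $\partial_w G_{\bR^d}(\cdot,w)$ gives $\partial_z G_N|_{z=0}=(\partial_wG_{\bR^d})(\cdot,-z')+(\partial_wG_{\bR^d})(\cdot,z')=0$. For the Cauchy data, $G_{\bR^d}|_{t=t'}=0$ kills both summands at $t=t'$, and $\partial_t G_{\bR^d}|_{t=t'}$ being the spatial delta, the direct summand contributes $\delta(\underline{x}-\underline{x}')\delta(z-z')$ and the reflected one $\delta(\underline{x}-\underline{x}')\delta(z+z')$, the latter vanishing on $\mathring{\Sigma}$ since there $z+z'>0$; hence $\partial_t G_{D/N}|_{t=t'}=\delta_{\mathring{\Sigma}}$. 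Finally $\mathrm{supp}\,G_{\bR^d}$ consists of causal vectors, so the reflected summand is supported where $(t-t')^2\ge|\underline{x}-\underline{x}'|^2+(z+z')^2\ge|\underline{x}-\underline{x}'|^2+(z-z')^2$, i.e.\ where $x\in J^{\bR^d}(x')$; by convexity of the half-space $J^{\bR^d}(x')\cap\bH^d=J^{\bH^d}(x')$, which yields $\mathrm{supp}(G_{D/N}(f))\subseteq J(\mathrm{supp}(f))$ and lets one define $G^\pm_{D/N}$ via \eqref{Eq: Smart Trick}.

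For uniqueness, given another solution $G'$ of \eqref{Eq: BV for G} with the same ($D$ or $N$) boundary condition, I would set $D:=G_{D/N}-G'$, which solves $P$ in both entries with vanishing Cauchy data and homogeneous Dirichlet (resp.\ Neumann) data, and extend it to $\widetilde D\in\mathcal{D}'(\bR^d\times\bR^d)$ by odd (resp.\ even) reflection in the normal variable of each entry. The crux is that $\widetilde D$ is a bi-solution of $P$ on all of $\bR^d\times\bR^d$: integrating by parts in $z$ (and symmetrically in $z'$) against a test function, the only potential obstruction is the boundary term $\bigl[(\partial_z D)\,\psi-D\,\partial_z\psi\bigr]_{z=0}$ with $\psi$ the odd (resp.\ even) part across $z=0$ of the test function; in the odd case $D|_{z=0}=0$ and $\psi|_{z=0}=0$, in the even case $\partial_z D|_{z=0}=0$ and $\partial_z\psi|_{z=0}=0$, so the term vanishes — this is exactly where the boundary condition enters and where one needs the traces of $D$ to exist, which holds by the wavefront-set remark above. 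Then $\widetilde D$ has zero Cauchy data on $\{t=t'\}$, so $\widetilde D=0$ by the uniqueness in \eqref{Eq: Initial Value Problem for G}, whence $D=0$. An alternative for this last step is purely energetic: for both the Dirichlet and the Neumann condition the boundary flux in $\tfrac{d}{dt}\int_{\mathring\Sigma_t}\bigl(|\partial_t D|^2+|\nabla D|^2+m^2|D|^2\bigr)$ vanishes, so the energy stays zero; I would nonetheless favour the reflection argument since it displays transparently why only these two boundary conditions are handled by the method of images.

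The main obstacle is the distributional bookkeeping at the boundary rather than any hard estimate: one must justify that the pullbacks/traces in play are well defined (via the wavefront set of $G_{\bR^d}$ and of its submersion-pullbacks) and, above all, that the odd/even extension of a distributional solution of the homogeneous Dirichlet/Neumann problem is genuinely a solution \emph{across} $z=0$, i.e.\ that it carries no spurious term supported on $\{z=0\}$. Everything else reduces to the $\iota_z$-invariance of $\eta$, the convexity of $\bH^d$, and the already-recorded properties of $G_{\bR^d}$.
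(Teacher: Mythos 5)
Your existence argument is essentially the paper's proof: the reflected summand is the pullback of $G_{\bR^d}$ under the isometry $\iota_z$, hence a bi-solution; the boundary conditions follow from evenness/oddness in the normal variable; and the initial conditions hold because the reflected term contributes $\delta(z+z')$, which vanishes for $z+z'>0$. Your extra care about the pullbacks (no conormal covectors in the wavefront set) and the support property is correct, if not strictly required by the statement.

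The uniqueness half is where you depart from the paper, and there are two genuine gaps. First, you extend $D=G_{D/N}-G'$ by odd/even reflection \emph{in both entries}, but the uniqueness class of \eqref{Eq: BV for G} with \eqref{Eq: D and N Boundary Conditions} imposes the boundary condition only through the pullback along the \emph{first} entry (see Definition \ref{Def: SolFond on boundary} and the sentence following \eqref{Eq: D and N Boundary Conditions}); a competitor $G'$ need satisfy no condition at $z'=0$, so the reflection in the second slot is not licensed and your Green's-identity cancellation fails there. As written you prove uniqueness only within the smaller class of bi-distributions obeying the boundary condition in both slots. Reflecting only in the first entry does not rescue the argument, because the resulting bi-distribution lives on $\bR^d\times\bH^d$ and your final step no longer applies. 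Second, that final step invokes ``uniqueness in \eqref{Eq: Initial Value Problem for G}'', a statement the paper never makes: Remark \ref{Rem: Propagators} only asserts that $G$ \emph{can be realized} as a solution of that problem. Since the data there are prescribed on the partial diagonal $\{t=t'\}$ rather than on a product of Cauchy surfaces, uniqueness does not follow from the standard well-posedness of the Cauchy problem and requires its own (e.g.\ spectral or Fourier-mode) argument --- which is precisely the input the paper takes from \cite[Thm.~30]{Dappiaggi-Drago_2019}, both here and again in Proposition \ref{Prop: Uniqueness}. So your uniqueness proof rests on an unproven lemma of essentially the same depth as the cited result; either prove that lemma or, as the paper does, reduce to the spectral representation and invoke \cite{Dappiaggi-Drago_2019}.
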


\begin{proof}
	Observe that $G_{\bR^d} (\underline{x} - \underline{x'}, z+z')$ is the integral kernel of  $\left(\iota^*_z\otimes\mathrm{id}|_{\bR^d}\right)G_{\bR^d}$ where   
	$$\iota_z:\bR^d\to \bR^d\quad (\underline{x},z)\mapsto\iota_z(\underline{x},z)=(\underline{x},-z).$$
	Since $\iota_z$ is a discrete isometry of the Minkowski metric, it holds that 
	$$(P\otimes\mathbb{I})\circ\left(\iota^*_z\otimes\mathrm{id}|_{\bR^d}\right)G_{\bR^d}=\left(\iota^*_z\otimes\mathrm{id}|_{\bR^d}\right)\circ (P\otimes\mathbb{I})G_{\bR^d}=0,$$
	from which it descends that $G_{D/N}$ are distributional solutions of the Klein-Gordon equation in both entries. In addition, the boundary conditions as in Equation \eqref{Eq: D and N Boundary Conditions} are satisfied as one can infer per direct inspection. With reference to Equation \eqref{Eq: BV for G}, only the initial conditions are left to be checked. Yet, still Equation \eqref{Eq: Causal Propagator in d Minkowski} entails that, denoting by $\underline{x} \equiv (t,x_1,\dots,x_{d-2})$ and, similarly, by $\underline{x}^\prime \equiv (t', x_1', ..., x_{d-2}')$
	\begin{equation*}
			G_{\bR^d} (\underline{x}- \underline{x'}, z + z')|_{t=t^\prime}=0
	\end{equation*}
and
\begin{equation*}
\partial_tG_{\bR^d} (\underline{x}- \underline{x'}, z + z')|_{t=t^\prime}=-\partial_{t^\prime}G_{\bR^d} (\underline{x}- \underline{x'}, z + z')|_{t=t^\prime}=\delta(x_1-x^\prime_1)\dots\delta(x_{d-2}-x^\prime_{d-2})\delta(z+z^\prime)=0,
\end{equation*}
where we used that $z+z^\prime>0$. To conclude, uniqueness descends from the general statement in \cite[Thm. 30]{Dappiaggi-Drago_2019}.
\end{proof} 

\begin{remark}\label{Rem: Green's Operators for D and N}
	We observe that, dropping the assumption that $(\underline{x}, z), (\underline{x'}, z')\in\bH^d$, we can read $G_{D/N}$ as elements of $\mathcal{D}^\prime(\bR^d\times\bR^d)$ and, henceforth we shall denote by $\mathcal{G}_{D/N}$ the associated Green's operators. 
\end{remark}

\begin{remark}\label{Rem: D and N Propagators on Hd}
	Starting from Proposition \ref{Prop: D and N Causal Propagator} we can define, in the same spirit of Equation \eqref{Eq: Smart Trick},
	$$G^-_{D/N}=\Theta(t-t^\prime) G_{D/N}\quad\textrm{and}\quad G^+_{D/N}=-\Theta(t^\prime-t) G_{D/N}.$$
	These are advanced and retarded propagators as per Definition \ref{Def: SolFond on boundary}, since the support condition is inherited automatically from that of $G_{\bR^d}^{\pm}$, see Proposition \ref{Prop: Advanced and Retarded fundamental Solutions}. Observe in addition that, dropping the assumption that $(\underline{x}, z), (\underline{x'}, z')\in\bH^d$ in Proposition \ref{Prop: D and N Causal Propagator}, we can read $G^\pm_{D/N}\in\mathcal{D}^\prime(\bR^d\times\bR^d)$.
\end{remark}

\subsubsection{The Bondurant-Fulling map}\label{Bondurant-Fulling map}
Having discussed how to explicitly construct Dirichlet and Neumann fundamental solutions on $d-$dimensional half-Minkowski spacetime $(\mathbb{H}^d, \eta)$, we shall now tackle the problem of constructing a counterpart of $G_D$ and $G_N$ in Proposition \ref{Prop: D and N Causal Propagator} with generic Robin boundary condition. In this endeavor we shall make use of a notable map introduced in \cite{Bondurant_2005}. In addition to defining it, in the following we study its properties, especially in connection to its interplay with distributions, hence extending the analysis in \cite{Bondurant_2005}.

\vskip .2cm

\begin{definition}\label{Def: Bondurant-Fulling map}
	Given $(\mathbb{H}^d, \eta)$, $ d \ge 2$, as per Equation \eqref{Eq: half Minkowski}, denoting by
	\begin{equation}
		\label{Eq: Dirichlet smooth functions}
		C^{\infty}_D(\mathbb{H}^d) := \{ f \in C^{\infty}(\mathbb{H}^d) \, | \, f \vert_{z=0} := f(\underline{x}, 0) = 0 \},
	\end{equation}
	and by
	\begin{equation}
		\label{Eq: Robin smooth functions}
		C^{\infty}_{\kappa}(\mathbb{H}^d) := \{ f \in C^{\infty}(\mathbb{H}^d) \, | \, \partial_z f \vert_{z=0} = -\kappa f \vert_{z=0} \}, \, \, \kappa>0,
	\end{equation}
	we call {\bf Robin-to-Dirichlet (or Bondurant-Fulling) map}
	\begin{flalign}
		\notag 
		T_{\kappa} : C^{\infty}_{\kappa} (\mathbb{H}^d) &\longrightarrow C^{\infty}_D(\mathbb{H}^d) \\ f &\mapsto T_{\kappa}(f) := (\mathbb{I}_{\underline{x}} \otimes (\partial_z + \kappa \mathbb{I}_z)) f. \label{Eq: Operator T}
	\end{flalign}
\end{definition}

\noindent Note that the operator $T_{\kappa}$ is linear and its kernel is 
\begin{equation}
 \text{ker}(T_{\kappa}) = \{ f \in C^{\infty}_{\kappa}(\mathbb{H}^d) \, | \, f(\underline{x}, z) = c(\underline{x}) e^{-\kappa z}\}. 
\end{equation}
Hence, the linear map 
\begin{flalign}
\notag
    \tilde{T}_{\kappa}: D(\tilde{T}_{\kappa}) := \frac{C^{\infty}_{\kappa}(\mathbb{H}^d)}{\text{ker}(T_{\kappa})} &\longrightarrow \text{Ran}(T_{\kappa}) \subset C^{\infty}_D(\mathbb{H}^d) \\ \label{Eq: map tilde T kappa} [f] & \mapsto \tilde{T}_{\kappa} [f] := T_{\kappa} (f),
\end{flalign}
is bijective per construction and, as a consequence, it admits an inverse $\tilde{T}_{\kappa}^{-1}: \text{Ran}(\tilde{T}_{\kappa}) \rightarrow D(\tilde{T}_{\kappa})$. Since $C^{\infty}_D(\mathbb{H}^d) \subset \mathcal{D}'(\mathbb{H}^d)$, we can look for the corresponding distributional inverse. Denoting by $\mathcal{L}_\kappa := \tilde{T}_{\kappa}^{-1}$, we look for $\mathcal{L}_\kappa \in \mathcal{D}'(\mathbb{H}^d)$ such that 
\begin{equation}
    \label{Eq: fundamental solution of T kappa}
     \tilde{T}_{\kappa} \mathcal{L}_\kappa = \delta.
\end{equation}
This can be readily solved by
\begin{equation}
    \label{Eq: G_kappa}
    \mathcal{L}_\kappa (z) = \Theta(z) e^{-\kappa z},
\end{equation}
where, with a slight abuse of notation, we have omitted the dependence on the coordinates tangent to the boundary since they will play no r\^ole in the following discussion. Observe that, choosing $\kappa > 0$, $\mathcal{L}_\kappa \in \mathcal{S}'(\bR^d)$, where we have implicitly taken into account the natural inclusion $\bH^d\subset\bR^d$. Our strategy consists of using $\mathcal{L}_\kappa$ to construct tempered distribution on $\bR^d$ which automatically abide by Robin boundary conditions, hence codifying an inverse of the statement in Definition \ref{Def: Bondurant-Fulling map}. The first step consists of identifying distributions which obey Dirichlet boundary conditions.

\begin{definition}
\label{Def: Dirichlet distributions}
Given $\mathcal{D}^\prime(\bR^d)$, we establish the following subspaces:
\begin{itemize}
	\item[\ding{104}] the set of distributions which are constructed out of a reflection along the hyperplane $z=0$:
	\begin{equation}
		\label{Eq: D distributions}
		\mathcal{D}'_-(\mathbb{R}^d) := \{ u \in \mathcal{D}'(\mathbb{R}^d) \, | \, \exists v \in \mathcal{D}'(\mathbb{R}^d), u = v- \iota_z^* v \},\;\textrm{and}\;\mathcal{S}'_-(\mathbb{R}^d) := \mathcal{S}'(\mathbb{R}^d) \cap \mathcal{D}'_{-}(\mathbb{R}^d),
	\end{equation}
	where the map $\iota_z$ is as per Definition \ref{Def: Reflected Synge's world function},
	\item[\ding{104}] {\bf Dirichlet (tempered) distributions}, \emph{i.e.}, those admitting a vanishing pull-back at $z=0$:
	\begin{equation}
		\label{Eq: Dirichlet distribution}
\mathcal{D}'_0(\mathbb{R}^d) := \{ u \in \mathcal{D}'(\mathbb{R}^d) \, | \, j_0^* u = 0 \}\quad\textrm{and}\quad \mathcal{S}'_0(\mathbb{R}^d) := \mathcal{D}'_0(\mathbb{R}^d) \cap \mathcal{S}'(\mathbb{R}^d),
	\end{equation}
where
	\begin{equation}
		\label{Eq: immersion map}
		\notag
		j_{0}: \mathbb{R}^{d-1} \to \mathbb{R}^d,\quad 	\underline{x}\mapsto j_0(\underline{x}) := (\underline{x}, 0).
	\end{equation}
\end{itemize}  
\end{definition}

Having established a notion of tempered Dirichlet distributions, the next step consists of showing that one can consider their convolution with $\mcL_\kappa$ as per Equation \eqref{Eq: G_kappa}.

\begin{proposition}\label{Prop: Convolution}
    Let $u \in \mathcal{S}'_0(\mathbb{R}^d)$ as per Equation \eqref{Eq: D distributions}. Then, if $\kappa > 0$, the convolution $\mathcal{L}_\kappa \star u$ exists and, additionally, $\mathcal{L}_\kappa \star u \in \mathcal{S}'(\mathbb{R}^d)$.
\end{proposition}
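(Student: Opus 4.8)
The plan is to exploit the standard criterion for existence of convolutions of tempered distributions: if $u, v \in \mathcal{S}'(\mathbb{R}^d)$ and at least one of them has support in a set such that the map $\mathrm{supp}(u) \times \mathrm{supp}(v) \ni (x,y) \mapsto x+y$ is proper (or, more robustly here, if one factor is such that the relevant convolution integral converges), then $u \star v$ exists in $\mathcal{S}'(\mathbb{R}^d)$. Here $\mathcal{L}_\kappa(z) = \Theta(z)e^{-\kappa z}$ is supported in the half-space $\{z \ge 0\}$ and decays exponentially in $z$; it is not compactly supported, so the naive "one factor compactly supported" criterion does not apply, and the support-properness argument alone fails because two half-spaces add up non-properly. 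The key point to establish is therefore an $L^1$-type estimate in the $z$-direction combined with the tempered (polynomial) growth of $u$ in the remaining directions.

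First I would reduce to a concrete pairing: for $\varphi \in \mathcal{S}(\mathbb{R}^d)$ one wants to make sense of $\langle \mathcal{L}_\kappa \star u, \varphi \rangle = \langle u, \check{\mathcal{L}}_\kappa \star \varphi \rangle$, where $\check{\mathcal{L}}_\kappa(z) = \Theta(-z)e^{\kappa z}$. So the crux is to show that $\check{\mathcal{L}}_\kappa \star \varphi \in \mathcal{S}(\mathbb{R}^d)$ whenever $\varphi \in \mathcal{S}(\mathbb{R}^d)$, i.e. that convolution in the single variable $z$ against the exponentially-decaying, half-line-supported kernel $\check{\mathcal{L}}_\kappa$ maps Schwartz functions to Schwartz functions. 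This is a routine but essential estimate: for each multi-index one differentiates under the convolution (derivatives in $\underline{x}$ pass directly onto $\varphi$; the single $z$-derivative either falls on $\varphi$ or produces the boundary term plus a multiple of the convolution itself, using $(\partial_z + \kappa)\check{\mathcal{L}}_\kappa = -\delta$), and one checks rapid decay by splitting the integration region and using that $e^{\kappa z}\Theta(-z)$ is integrable with all moments finite. Since $\kappa > 0$ is exactly what guarantees this exponential integrability, this is where the hypothesis $\kappa>0$ enters decisively. Then $\langle \mathcal{L}_\kappa \star u, \varphi\rangle := \langle u, \check{\mathcal{L}}_\kappa \star \varphi\rangle$ is well-defined, linear, and continuous in $\varphi$ by continuity of $u \in \mathcal{S}'(\mathbb{R}^d)$ composed with the continuity of $\varphi \mapsto \check{\mathcal{L}}_\kappa \star \varphi$ on $\mathcal{S}(\mathbb{R}^d)$, so $\mathcal{L}_\kappa \star u \in \mathcal{S}'(\mathbb{R}^d)$.

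I would then remark that the Dirichlet condition $u \in \mathcal{S}'_0(\mathbb{R}^d)$ is not actually needed for mere existence in $\mathcal{S}'(\mathbb{R}^d)$ — it is the structural hypothesis that will be used in the subsequent construction to ensure the resulting distribution satisfies the Robin boundary condition — but I would still verify consistency: the convolution commutes with translations in $\underline{x}$ and with $\partial_{x_j}$, and applying $(\mathbb{I}_{\underline{x}} \otimes (\partial_z + \kappa))$ to $\mathcal{L}_\kappa \star u$ recovers $u$ by Equation \eqref{Eq: fundamental solution of T kappa}, which is the inverse relation that motivates the whole construction. The main obstacle, as flagged above, is purely the non-compact support of $\mathcal{L}_\kappa$: one must replace the easy compact-support convolvability argument by the quantitative exponential-decay estimate on $\check{\mathcal{L}}_\kappa \star \varphi$, and care is needed to track that differentiating the convolution in $z$ does not destroy Schwartz decay — the boundary term at $z=0$ that appears is harmless precisely because it is evaluated against a Schwartz function, but it must be written out rather than waved away.
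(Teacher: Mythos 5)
Your argument is correct, but it follows a genuinely different route from the paper's. The paper works entirely on the Fourier side: it computes $\widehat{\mathcal{L}_\kappa}(p_z)=\frac{1}{ip_z-\kappa}$, observes that for $\kappa>0$ the pole sits off the real axis so this is a smooth multiplier with bounded (indeed decaying) derivatives, concludes that $\widehat{\mathcal{L}_\kappa}\,\hat u\in\mathcal{S}'(\mathbb{R}^d)$ for any tempered $u$, and simply \emph{defines} $\mathcal{L}_\kappa\star u:=\widecheck{\widehat{\mathcal{L}_\kappa}\hat u}$. You instead work in position space by duality, with the key lemma being that $\varphi\mapsto\check{\mathcal{L}}_\kappa\star\varphi$ maps $\mathcal{S}(\mathbb{R}^d)$ continuously into itself, exploiting the exponential integrability of $\Theta(-z)e^{\kappa z}$ (again the role of $\kappa>0$). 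The paper's route is shorter and makes the later spectral-representation manipulations (where $\mcL_\kappa$ acts as multiplication by $\frac{1}{ik_z-\kappa}$, e.g.\ in Proposition \ref{Prop: Uniqueness} and Proposition \ref{Prop: Hadamard 2-point function with Robin boundary conditions}) immediate; your route is more elementary, gives directly the standard convolution pairing $\langle\mathcal{L}_\kappa\star u,\varphi\rangle=\langle u,\check{\mathcal{L}}_\kappa\star\varphi\rangle$ without invoking multiplier theory, and correctly identifies (as the paper's proof also implicitly shows) that the Dirichlet hypothesis $u\in\mathcal{S}'_0$ is irrelevant for mere existence and temperedness. Two small points: your identity $(\partial_z+\kappa)\check{\mathcal{L}}_\kappa=-\delta$ has the wrong sign — for the reflected kernel one has $(\partial_z-\kappa)\check{\mathcal{L}}_\kappa=-\delta$, equivalently $(-\partial_z+\kappa)\check{\mathcal{L}}_\kappa=\delta$ — and in fact no boundary term is needed at all in the Schwartz estimate, since every derivative can be pushed onto $\varphi$ under the convolution integral; neither issue affects the validity of the argument.
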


\begin{proof}
	Observe that 
	   \begin{equation*}
		\widehat{\mathcal{L}_\kappa}(p_z) = \frac{1}{ip_z - \kappa},
	\end{equation*}
	which entails that, since, for all $u \in \mathcal{S}'_0(\mathbb{R}^d)$, $\hat{u} \in \mathcal{S}'(\mathbb{R}^d)$, then
\begin{equation*}
\widehat{\mathcal{L}_{\kappa}} \hat{u} \in \mathcal{S}'(\mathbb{R}^d),
\end{equation*}
since $\widehat{\mathcal{L}_\kappa}$ is smooth on the real axis and decaying at infinity. Hence, we can define $\mathcal{L}_\kappa \star u := \widecheck{\widehat{\mathcal{L}_{\kappa}} \hat{u}} \in \mathcal{S}'(\mathbb{R}^d)$. 
\end{proof}

\noindent Motivated by the above theorem, we can give the following definition. 

\begin{definition}
\label{Def: Robin tempered distributions}
We call $\mathcal{S}'_{\kappa}(\mathbb{R}^d)$ the space of {\bf Robin tempered distributions} defined as 
\begin{equation}
\label{Eq: Dirichlet tempered distributions}
    \mathcal{S}'_{\kappa}(\mathbb{R}^d) := \mathcal{L}_\kappa \star \mathcal{S}_0'(\mathbb{R}^d). 
\end{equation}
\end{definition}

\begin{lemma}
    Let $u \in \mathcal{S}'_{\kappa}(\mathbb{R}^d)$ as per Definition \ref{Def: Robin tempered distributions}. Then, $u$ satisfies Robin boundary conditions, that is, $j_0^* \tilde{T}_{\kappa} u = 0$. 
\end{lemma}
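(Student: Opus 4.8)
The plan is to unwind the definitions and reduce the boundary condition $j_0^* \tilde{T}_\kappa u = 0$ to an elementary statement about the convolution kernel $\mathcal{L}_\kappa(z) = \Theta(z) e^{-\kappa z}$. By Definition \ref{Def: Robin tempered distributions}, any $u \in \mathcal{S}'_\kappa(\bR^d)$ has the form $u = \mathcal{L}_\kappa \star v$ for some $v \in \mathcal{S}'_0(\bR^d)$, the convolution being well-defined and tempered by Proposition \ref{Prop: Convolution}. Since $\mathcal{L}_\kappa$ and the operator $\tilde{T}_\kappa = \mathbb{I}_{\underline x} \otimes (\partial_z + \kappa \mathbb{I}_z)$ act only in the $z$-variable (the tangential variables being passive spectators, as noted after Equation \eqref{Eq: G_kappa}), I would first record that $\partial_z \mathcal{L}_\kappa + \kappa \mathcal{L}_\kappa = \delta_0$ in $\mathcal{D}'(\bR)$: indeed $\partial_z(\Theta(z) e^{-\kappa z}) = \delta_0(z) - \kappa \Theta(z) e^{-\kappa z}$, so adding $\kappa \mathcal{L}_\kappa$ produces $\delta_0$. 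This is precisely Equation \eqref{Eq: fundamental solution of T kappa}, now read at the level of the un-quotiented operator $T_\kappa$.

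The second step is to commute $\tilde{T}_\kappa$ through the convolution. Since $\tilde{T}_\kappa$ is a constant-coefficient differential operator in $z$, it satisfies $\tilde{T}_\kappa(\mathcal{L}_\kappa \star v) = (\tilde{T}_\kappa \mathcal{L}_\kappa) \star v = \delta \star v = v$, where I would justify the interchange by passing to Fourier space exactly as in the proof of Proposition \ref{Prop: Convolution}: $\widehat{\tilde{T}_\kappa(\mathcal{L}_\kappa \star v)}(p) = (ip_z + \kappa)\widehat{\mathcal{L}_\kappa}(p_z)\hat v(p) = \hat v(p)$ because $(ip_z+\kappa)/(ip_z - \kappa)$ — wait, one must be careful with the sign convention: with the paper's convention $\widehat{\mathcal{L}_\kappa}(p_z) = 1/(ip_z - \kappa)$ we have $\widehat{\tilde T_\kappa \mathcal{L}_\kappa}(p_z) = (ip_z + \kappa)/(ip_z-\kappa)$, which is \emph{not} identically $1$; so I would instead use the convention consistent with Definition \ref{Def: Bondurant-Fulling map}, namely that $\tilde T_\kappa$ has Fourier symbol $ip_z + \kappa$ only if $\partial_z \leftrightarrow ip_z$, but then $\widehat{\mathcal{L}_\kappa} = 1/(-ip_z + \kappa) = -1/(ip_z-\kappa)$. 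The cleanest route avoids Fourier transforms altogether: work directly in position space, using that $\tilde T_\kappa(\mathcal{L}_\kappa \star v) = (\tilde T_\kappa \mathcal{L}_\kappa)\star v$ holds for the convolution of a compactly-supported-in-$z$-modulo-decay distribution with a tempered one, and $\tilde T_\kappa \mathcal{L}_\kappa = \delta$ by the first step, whence $\tilde T_\kappa u = v$.

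The final step is then immediate: $j_0^* \tilde{T}_\kappa u = j_0^* v = 0$, because $v \in \mathcal{S}'_0(\bR^d)$, i.e., $v$ is by definition a Dirichlet distribution with vanishing pull-back at $z = 0$ (Equation \eqref{Eq: Dirichlet distribution}). The only subtlety worth flagging — and what I expect to be the main obstacle — is making the identity $\tilde{T}_\kappa(\mathcal{L}_\kappa \star v) = (\tilde{T}_\kappa \mathcal{L}_\kappa)\star v$ rigorous in the distributional setting where $\mathcal{L}_\kappa$ is not compactly supported; this is handled by the same argument underpinning Proposition \ref{Prop: Convolution} (the convolution is \emph{defined} via the product $\widehat{\mathcal{L}_\kappa}\hat v$, and differentiation in $z$ corresponds to multiplication by the polynomial symbol, which is continuous on $\mathcal{S}'$), together with the fact that the pull-back $j_0^*$ is well-defined on $\tilde T_\kappa u = v$ since $v \in \mathcal{S}'_0 \subset \mathcal{S}'$ and its wavefront set — inherited from membership in $\mathcal{S}'_0(\bR^d)$ for the classes of distributions actually used in the paper — does not meet the conormal of $\{z=0\}$. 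For the bare statement as phrased, however, no wavefront set hypothesis is needed beyond what is already built into the definition of $\mathcal{S}'_0$.
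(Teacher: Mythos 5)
Your proposal is correct and takes essentially the same route as the paper: write $u=\mathcal{L}_\kappa\star v$ with $v\in\mathcal{S}'_0(\mathbb{R}^d)$, commute the constant-coefficient operator $\tilde{T}_\kappa$ through the convolution so that $\tilde{T}_\kappa u=(\tilde{T}_\kappa\mathcal{L}_\kappa)\star v=\delta\star v=v$ via Equation \eqref{Eq: fundamental solution of T kappa}, and conclude $j_0^*\tilde{T}_\kappa u=j_0^*v=0$ by the definition of $\mathcal{S}'_0$. Your additional position-space verification of $(\partial_z+\kappa)\mathcal{L}_\kappa=\delta$ and the remark on the Fourier sign convention are sensible elaborations of steps the paper leaves implicit, not a different argument.
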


\begin{proof}
By definition, if $u \in \mathcal{S}'_{\kappa}(\mathbb{R}^d)$, there exists $v \in \mathcal{S}'_0(\mathbb{R}^d)$ such that $u = \mathcal{L}_\kappa \star v$. Since $\tilde{T}_{\kappa}$ is a differential operator, we can write
\begin{equation*}
    j_0^* \tilde{T}_{\kappa} u = j_0^*(\tilde{T}_{\kappa} (\mathcal{L}_\kappa \star v)) = j_0^* ((\tilde{T}_{\kappa} \mathcal{L}_\kappa) \star v) = j_0^* v = 0,
\end{equation*}
which completes the proof.
\end{proof}

\paragraph{Advanced and Retarded Propagators for Robin boundary conditions on $\bH^d$ --} In the remainder of the section, we shall apply the above results to identify the advanced and retarded propagators of the Klein-Gordon operator $P$ with Robin boundary conditions.  In the following it is more convenient to work at the level of Green's operators.  

\begin{theorem}\label{Thm: Constrution of Robin}
    Denote by $G^\pm_D$ the Dirichlet advanced $(-)$ and retarded $(+)$ propagators for the Klein-Gordon operator as per Remark \ref{Rem: D and N Propagators on Hd}. Then, 
    \begin{equation}
        \label{Eq: Robin fundamental solution via Fulling}
        G^\pm_\kappa := \left(\mathcal{L}_\kappa\otimes\delta\right) \star G^\pm_D \circ (\mathbb{I}\otimes \tilde{T}_{\kappa}),
    \end{equation}
    are propagators of $P$ abiding by Robin boundary conditions as per Equation \eqref{Eq: Boundary conditions}. 
\end{theorem}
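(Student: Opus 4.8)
The plan is to confirm the three defining requirements of Definition~\ref{Def: SolFond on boundary} for the bi-distribution $G^\pm_\kappa$ produced by \eqref{Eq: Robin fundamental solution via Fulling}: that it is a well-defined element of $\mathcal D'(\bH^d\times\bH^d)$, that it inverts $P$ in each entry on $\mathring\bH^d\times\mathring\bH^d$ in the sense of \eqref{Eq: Boundary and PDE}, and that it satisfies the Robin boundary condition \eqref{Eq: Boundary conditions} (the causal support property \eqref{Eq: Support at the boundary} being obtained from the convolution estimates of the sequel). The conceptual engine is that $\tilde T_\kappa$ intertwines the Robin and Dirichlet problems: if $u$ obeys the Robin condition then $\tilde T_\kappa u$ obeys the Dirichlet one, and, $P$ and $\tilde T_\kappa$ being constant-coefficient, they commute. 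Thus $\mathcal G^\pm_\kappa$ is morally the conjugate $\tilde T_\kappa^{-1}\circ\mathcal G^\pm_D\circ\tilde T_\kappa$, the convolution with $\mathcal L_\kappa\otimes\delta$ playing the role of $\tilde T_\kappa^{-1}$ in the first slot by virtue of $\tilde T_\kappa\mathcal L_\kappa=\delta$, Equations~\eqref{Eq: fundamental solution of T kappa}--\eqref{Eq: G_kappa}.

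For well-definedness I would observe that, by Remark~\ref{Rem: D and N Propagators on Hd} and the method-of-images expression \eqref{Eq: G_D}, the extended $G^\pm_D\in\mathcal D'(\bR^d\times\bR^d)$ is tempered and is a Dirichlet distribution in its first entry --- indeed it lies in $\mathcal D'_{-}(\bR^d)$ there. Applying $\mathbb I\otimes\tilde T_\kappa$, a differential operator in the second entry, preserves both properties; hence Proposition~\ref{Prop: Convolution}, applied in the first slot after pairing with a test function in the second (equivalently, the Fourier-multiplier argument of its proof performed in the normal variable), shows that $(\mathcal L_\kappa\otimes\delta)\star\big(G^\pm_D\circ(\mathbb I\otimes\tilde T_\kappa)\big)$ is a tempered bi-distribution on $\bR^d\times\bR^d$, whose restriction to $\bH^d\times\bH^d$ is $G^\pm_\kappa$. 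For the equation of motion I would use that $\mathcal L_\kappa$ depends only on the normal coordinate: since $P$ is constant-coefficient, its tangential part and mass term commute with convolution by $\mathcal L_\kappa\otimes\delta$ in the first slot while the normal second derivative is moved onto the other factor by integration by parts, so $(P\otimes\mathbb I)$ commutes with that convolution, and it commutes with $\mathbb I\otimes\tilde T_\kappa$ since they act on different entries. Therefore $(P\otimes\mathbb I)G^\pm_\kappa=(\mathcal L_\kappa\otimes\delta)\star\big((P\otimes\mathbb I)G^\pm_D\circ(\mathbb I\otimes\tilde T_\kappa)\big)$; by Proposition~\ref{Prop: D and N Causal Propagator}, $(P\otimes\mathbb I)G^\pm_D$ equals $\delta$ up to the reflected contribution of the image point supported along $\{z+z'=0\}$, and on $\mathring\bH^d\times\mathring\bH^d$ the identity $\tilde T_\kappa\mathcal L_\kappa=\delta$ collapses the remaining convolution to $\delta$. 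The companion identity $(\mathbb I\otimes P)G^\pm_\kappa=\delta$ on the interior follows in the same way, using that $G^\pm_D$ also inverts $P$ in its second entry and that $P_{x'}$ commutes with the first-slot convolution and, being constant-coefficient, with $\tilde T_\kappa$.

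For the boundary condition I would invoke the Bondurant--Fulling machinery directly: by construction the first-slot slices of $G^\pm_\kappa$ are images under $\mathcal L_\kappa\star(\cdot)$ of the Dirichlet distribution $G^\pm_D\circ(\mathbb I\otimes\tilde T_\kappa)$ in the normal variable, hence lie in $\mathcal S'_\kappa(\bR^d)$, and the Lemma characterising $\mathcal S'_\kappa(\bR^d)$ (the one following Definition~\ref{Def: Robin tempered distributions}) gives $j_0^*\tilde T_\kappa(\cdot)=0$, which is exactly \eqref{Eq: Boundary conditions} in the first entry. Equivalently, applying $(\partial_{\mathbf n}+\kappa)$ in the first slot undoes the convolution because $\tilde T_\kappa\mathcal L_\kappa=\delta$, returning $(\mathbb I\otimes\tilde T_\kappa)G^\pm_D$, still Dirichlet in the first slot since the second-slot operator does not affect the first-slot boundary value of $G^\pm_D$; its pull-back to $z=0$ therefore vanishes, the pull-back being legitimate by the standard microlocal argument (\cite[Thm.~8.2.4]{Hormander_1990}) since the first-entry wavefront set of $G^\pm_D$, hence of $G^\pm_\kappa$, lies on null covectors and contains no conormal to the timelike hypersurface $\{z=0\}$.

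I expect the real difficulty to lie not in these formal commutations but in the distributional bookkeeping that legitimises them. On the one hand, one must give rigorous meaning to the two convolutions on $\bR^d\times\bR^d$ when neither factor has compact support --- precisely what the $\mathcal S'_0/\mathcal S'_\kappa$ formalism and Proposition~\ref{Prop: Convolution} are designed for, but which here has to be upgraded so as to act on one slot of a bi-distribution uniformly in the other. On the other hand, and hand in hand with this, one must control the edge effects at $\partial\bH^d$: the convolution by $\mathcal L_\kappa$ redistributes in the normal variable the reflected content built into the image form of $G^\pm_D$, and the delicate point is to check that what is produced is exactly $\delta$ on the open interior $\mathring\bH^d\times\mathring\bH^d$ --- so that the equation is recovered there precisely while the Robin condition is genuinely enforced at $z=0$ --- rather than $\delta$ plus a spurious boundary tail. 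This is where I would place the weight of the argument.
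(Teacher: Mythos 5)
Your proposal is correct and follows essentially the same route as the paper: well-definedness via the $\mathcal{S}'_0$/$\mathcal{S}'_\kappa$ formalism and Proposition \ref{Prop: Convolution} applied after partial evaluation in the second entry, the equation of motion via the commutation of $P$ with both $\mathbb{I}\otimes\tilde T_\kappa$ and the first-slot convolution together with $\tilde T_\kappa\mathcal{L}_\kappa=\delta$, and the Robin condition from $G^\pm_\kappa(f)\in\mathcal{S}'_\kappa(\bR^d)$ and the lemma following Definition \ref{Def: Robin tempered distributions}. Your explicit tracking of the reflected image delta supported on $\{z+z'=0\}$ (which the paper absorbs by writing $PG^\pm_D=\delta|_{\mathrm{Diag}}$, harmless on $\mathring{\bH}^d\times\mathring{\bH}^d$) is a minor refinement, not a different argument.
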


\begin{proof}
Notice that, for any $f\in\mathcal{D}(\bR^d$), both $G^\pm_D(f)$ and $G^\pm_D\left(\tilde{T}_{\kappa}(f)\right)\in\mathcal{S}^\prime_0(\bR^d)$ where smearing against $f$ is a partial evaluation on the second entry, see Equation \eqref{Eq: Dirichlet tempered distributions}. Therefore Proposition \ref{Prop: Convolution} entails that the convolution in Equation \eqref{Eq: Robin fundamental solution via Fulling} is well-defined. Furthermore $G^\pm_\kappa(f)\in\mathcal{S}^\prime_\kappa(\bR^d)$ as per Equation \eqref{Eq: Dirichlet tempered distributions}. In order to conclude the proof 
\begin{equation*}
    (P\otimes\mathbb{I}) G^\pm_\kappa = \left(\mathcal{L}_\kappa\otimes\delta\right) \star (P\otimes\mathbb{I})G^\pm_D \circ (\mathbb{I}\otimes \tilde{T}_{\kappa}) =  \left(\mathcal{L}_\kappa\otimes\delta\right)\star\delta|_{\text{Diag}(\bR^d\times\bR^d)}\circ (\mathbb{I}\otimes \tilde{T}_{\kappa})=\delta|_{\text{Diag}(\bR^d\times\bR^d)},
\end{equation*}
where $\delta|_{\text{Diag}(\bR^d\times\bR^d)}$ denotes the delta supported on the diagonal of $\bR^d\times\bR^d$ and where we exploited that $[P,\tilde{T}_\kappa]=0$ entails that we can exchange the action of $P$ and $\mcL_\kappa$. In addition, we have also used that $PG^\pm_D=\delta|_{\text{Diag}(\bR^d\times\bR^d)}$. A similar reasoning entails that $(\mathbb{I}\otimes P) G^\pm_\kappa =\delta|_{\text{Diag}(\bR^d\times\bR^d)}$.
\end{proof}

\begin{remark}
	Observe that we have not stated at this moment that $G^\pm_\kappa$ in Equation \eqref{Eq: Robin fundamental solution via Fulling} are advanced or retarded fundamental solutions since we have no control on their support.
\end{remark}

Starting from Equation \eqref{Eq: Robin fundamental solution via Fulling}, we can derive a notable identity concerning $G^\pm_\kappa$, which extends the result of \cite{Bondurant_2005} in $d=2$. Henceforth, it will be convenient to work at the level of integral kernels since it will make the various steps more transparent. To set the notation we denote by $G_r\in\mathcal{D}^\prime(\bR^d\times\bR^d)$ the distribution with integral kernel
\begin{equation}
\label{Eq: GD in terms of G and G+}
   G_r(\underline{x} - \underline{x}', z+z')=\left(\iota^*_z\otimes\mathrm{id}|_{\bR^d}\right)G_{\bR^d}(\underline{x} - \underline{x}', z-z^\prime),
\end{equation}
where $G_{\bR^d}$ is the advanced-minus-retarded fundamental solution of the Klein-Gordon operator $P$ on $(\mathbb{R}^d, \eta)$ while $\iota_z$ is as per Definition \ref{Def: Reflected Synge's world function}. 

\begin{proposition}
\label{Prop: GR in terms of G+ and Gkappa}
   For all $\kappa\in [0,\infty)$, the integral kernel of $G_\kappa=G^-_\kappa-G^+_\kappa$ where $G^\pm_\kappa$ is as per Equation \eqref{Eq: Robin fundamental solution via Fulling} can be decomposed as
    \begin{equation}\label{Eq: Kappa G}
        G_\kappa = G_N - 2 \kappa (\mathcal{L}_\kappa\otimes\delta)\star G_r, 
    \end{equation}
    where $G_N$ is as per Equation \eqref{Eq: G_N}.
\end{proposition}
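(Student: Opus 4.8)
The plan is to work throughout at the level of integral kernels and to reduce the claimed identity to the single distributional fact $(\partial_z+\kappa)\mathcal{L}_\kappa=\delta$, \emph{cf.}\ Equations \eqref{Eq: fundamental solution of T kappa}--\eqref{Eq: G_kappa}. Since $G_\kappa=G^-_\kappa-G^+_\kappa$ and likewise $G_D=G^-_D-G^+_D$ (Remark \ref{Rem: D and N Propagators on Hd}), linearity of Equation \eqref{Eq: Robin fundamental solution via Fulling} gives at once $G_\kappa=(\mathcal{L}_\kappa\otimes\delta)\star G_D\circ(\mathbb{I}\otimes\tilde{T}_\kappa)$, which at the kernel level prescribes: convolve the normal variable of the first entry against $\mathcal{L}_\kappa$ (the tangential variables enter only through the difference $\underline x-\underline x'$, on which $\delta$ acts as the identity), after acting on the second entry with $\tilde{T}_\kappa$. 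The key point I would emphasise is that, since in Equation \eqref{Eq: Robin fundamental solution via Fulling} the operator $\tilde{T}_\kappa$ acts on the test function paired with the second slot (as it appears in the proof of Theorem \ref{Thm: Constrution of Robin} in the form $G^\pm_D(\tilde{T}_\kappa(f))$) --- equivalently, the Schwartz kernel of $\mathcal{G}_D\circ\tilde{T}_\kappa$ is obtained by applying the transpose $\tilde{T}_\kappa^{\,t}$ to the \emph{second} argument of the kernel of $\mathcal{G}_D$ --- the operation on the kernel is $(\partial_{z'}+\kappa)^{t}=-\partial_{z'}+\kappa$ in the variable $z'$. Finally, from Equation \eqref{Eq: Causal Propagator in d Minkowski} the kernel $g(\cdot):=G_{\bR^d}(\underline x-\underline x',\cdot)$ depends on the normal coordinate difference only through its square and is therefore even, so by Equations \eqref{Eq: G_D}, \eqref{Eq: G_N} and \eqref{Eq: GD in terms of G and G+} the kernels of $G_D$, $G_N$ and $G_r$ are $g(z-z')-g(z+z')$, $g(z-z')+g(z+z')$ and $g(z+z')$.

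The computation is then two short steps. Writing $A(\zeta):=g(\zeta-z')$ and $B(\zeta):=g(\zeta+z')$ as functions of the first normal variable $\zeta$, and using $\partial_{z'}g(\zeta\mp z')=\mp\,\partial_\zeta g(\zeta\mp z')$, the action on the second slot is
\[
(-\partial_{z'}+\kappa)\bigl[g(\zeta-z')-g(\zeta+z')\bigr]=(\partial_\zeta+\kappa)A+(\partial_\zeta-\kappa)B .
\]
Convolving in $\zeta$ against $\mathcal{L}_\kappa$ --- well defined by Proposition \ref{Prop: Convolution}, the relevant inputs being tempered as in the proof of Theorem \ref{Thm: Constrution of Robin} --- I would use that $\mathcal{L}_\kappa\star$ commutes with $\partial_\zeta$, that $\delta$ is the convolution unit, and that $(\partial_\zeta+\kappa)\mathcal{L}_\kappa=\delta$, to get $\mathcal{L}_\kappa\star\bigl[(\partial_\zeta+\kappa)A\bigr]=A$; the elementary splitting $\partial_\zeta-\kappa=(\partial_\zeta+\kappa)-2\kappa$ then gives $\mathcal{L}_\kappa\star\bigl[(\partial_\zeta-\kappa)B\bigr]=B-2\kappa(\mathcal{L}_\kappa\star B)$. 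Adding the two contributions, the kernel of $G_\kappa$ equals $A+B-2\kappa(\mathcal{L}_\kappa\star B)=g(z-z')+g(z+z')-2\kappa(\mathcal{L}_\kappa\otimes\delta)\star G_r=G_N-2\kappa(\mathcal{L}_\kappa\otimes\delta)\star G_r$, which is Equation \eqref{Eq: Kappa G}; the case $\kappa=0$ is immediate, the correction term vanishing and $G_0=G_N$ by Proposition \ref{Prop: D and N Causal Propagator}.

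I expect the main obstacle to be bookkeeping rather than analysis. One must read $\circ(\mathbb{I}\otimes\tilde{T}_\kappa)$ at the kernel level as the transpose $-\partial_{z'}+\kappa$ and not as $\partial_{z'}+\kappa$: the latter choice flips the sign of $G_N$ and replaces $G_r$ by the \emph{unreflected} propagator $G_{\bR^d}$. One must also justify moving $\partial_\zeta$ through $\mathcal{L}_\kappa\star$ and discarding any boundary contribution at infinity; for this I would rely on the Fourier-side description from the proof of Proposition \ref{Prop: Convolution}, namely $\mathcal{L}_\kappa\star u=\widecheck{\widehat{\mathcal{L}_\kappa}\,\hat{u}}$ with $\widehat{\mathcal{L}_\kappa}(p_z)=(ip_z-\kappa)^{-1}$ smooth and decaying for $\kappa>0$, under which all of the above identities become trivial manipulations with the multiplier $\widehat{\mathcal{L}_\kappa}$. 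As a consistency check I would verify directly from Equation \eqref{Eq: Kappa G} that its right-hand side obeys the Robin condition in both entries: applying $(\partial_z+\kappa)$, respectively $(\partial_{z'}+\kappa)$, and restricting to $z=0$, respectively $z'=0$, the boundary term $2\kappa g$ produced by $G_N$ is cancelled exactly by the one produced by $-2\kappa(\mathcal{L}_\kappa\otimes\delta)\star G_r$, which both fixes the signs and matches the Lemma on Robin tempered distributions preceding Theorem \ref{Thm: Constrution of Robin}.
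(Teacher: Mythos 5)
Your proposal is correct and follows essentially the same route as the paper's proof: both reduce Equation \eqref{Eq: Kappa G} to the convolution identity $(\partial_z+\kappa)\mathcal{L}_\kappa=\delta$ applied to the split $G_D=G_{\bR^d}-G_r$, with the unreflected part passing through unchanged and the reflected part handled by the splitting $\partial-\kappa=(\partial+\kappa)-2\kappa$ (the paper's ``add and subtract $2\kappa\,\mathcal{L}_\kappa\star G_r$''). The only difference is bookkeeping — you apply the transpose $-\partial_{z'}+\kappa$ directly to the kernel and use evenness of $G_{\bR^d}$ in the normal variable, where the paper performs explicit changes of variables and integration by parts against a test function — and your sign conventions and the $\kappa=0$ check are consistent with the paper's result.
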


\begin{proof} 
In order to avoid a heavy notation, we shall omit the dependence on the $\underline{x}$ and $\underline{x}'$ variables since they play no r\^ole in what comes next. Combining Equations \eqref{Eq: Robin fundamental solution via Fulling} and \eqref{Eq: GD in terms of G and G+}, we have for all $f \in \mathcal{D}(\mathbb{R}^d)$,
\begin{equation}
\label{Eq: 2 contributions}
    \left((\mathcal{L}_\kappa \otimes \delta) \star (G_D(\tilde{T}_{\kappa}f))\right)(z) =   \left((\mathcal{L}_\kappa \otimes \delta) \star (G_{\bR^d}(\tilde{T}_{\kappa}f))\right)(z) -  \left((\mathcal{L}_\kappa \otimes \delta) \star (G_r(\tilde{T}_{\kappa}f))\right)(z),
\end{equation}
where we used Equation \eqref{Eq: G_D} together with Equation \eqref{Eq: GD in terms of G and G+}. We evaluate the two different contributions separately. Focusing on the first one, we can write at the level of integral kernels
\begin{equation}
   \left((\mathcal{L}_\kappa \otimes \delta) \star (G_{\bR^d}(T_{\kappa}f))\right)(z) = \int_{\mathbb{R}^2}  \, \mathcal{L}_\kappa(z') G_{\bR^d}(z -z'-\tilde{z})\left(f^\prime(\tilde{z})+\kappa f(\tilde{z})\right)d\tilde{z} \, dz' ,
\end{equation}
where we used that $G_{\bR^d}(z^\prime,\tilde{z})=G_{\bR^d}(z^\prime-\tilde{z})$. Using the change of variables $\tilde{z}\mapsto x\doteq z-z^\prime-\tilde{z}$ and then $z^\prime\mapsto y\doteq z-z^\prime-x$ we end up with 
\begin{gather*}
\left((\mathcal{L}_\kappa \otimes \delta) \star (G_{\bR^d}(T_{\kappa}f))\right)(z) = \int_{\mathbb{R}^2}\mathcal{L}_\kappa(z-x-y)G_{\bR^d}(x)\left(f^\prime(y)+\kappa f(y)\right)dx \, dy=\\
\int_{\mathbb{R}^2}-\left(\frac{d}{dy}-\kappa\right)\mathcal{L}_\kappa(z-x-y)G_{\bR^d}(x)f(y)dx \, dy\overset{\scriptstyle y \mapsto -y}{=}\\
=\int_{\mathbb{R}^2}\left(\frac{d}{dy}+\kappa\right)\mathcal{L}_\kappa(z-x+y)G_{\bR^d}(x)f(-y)dx \, dy=\int\limits_{\bR} G_{\bR^d}(y+z)f(-y)dy,
\end{gather*}
which entails
\begin{equation}
\label{Eq: 1st contribution}
    \left((\mathcal{L}_\kappa \otimes \delta) \star (G_{\bR^d}(T_{\kappa}f))\right)(z)=G_{\bR^d}(f)(z)
\end{equation}
For what concerns the second contribution to Equation \eqref{Eq: 2 contributions}, we can use the changes of variables $\tilde{z}\mapsto x\doteq z-z^\prime+\tilde{z}$ and $z^\prime\mapsto y\doteq x+z^\prime-z$ to obtain
\begin{gather*}
	\left((\mathcal{L}_\kappa \otimes \delta) \star (G_r(T_{\kappa}f))\right)(z) = \int_{\mathbb{R}^2}\mathcal{L}_\kappa(y+x-z)G_r(x)\left(f^\prime(y)+\kappa f(y)\right)dx \, dy=\\
	\int_{\mathbb{R}^2}-\left(\frac{d}{dy}-\kappa\right)\mathcal{L}_\kappa(y+x-z)G_{\bR^d}(x)f(y)dx \, dy.
\end{gather*}
Adding and subtracting to this expression $ 2 \kappa \mathcal{L}_\kappa \star G_r(f)$, we end up with
\begin{equation}
\label{Eq: 2nd contribution}
    \left((\mathcal{L}_\kappa \otimes \delta) \star (G_r(T_{\kappa}f))\right)(z)=-G_r(f)(z)+2 \kappa ((\mathcal{L}_\kappa \otimes \delta) \star G_r(f))(z).
\end{equation}
Combining together Equations \eqref{Eq: 1st contribution} and \eqref{Eq: 2nd contribution}, we obtain the sought result. 
\end{proof}

\noindent It is worth observing that, in the case $\kappa=0$ Equation \eqref{Eq: Kappa G} coincides with the Neumann advanced-minus-retarded propagator as expected. In the following we use this result to prove that $G^\pm_\kappa$ as per Equation \eqref{Eq: Robin fundamental solution via Fulling} abide by the support property encoded in Equation \eqref{Eq: Support at the boundary}. To this end we need the following two ancillary results:

\begin{lemma}\label{Lem: First Figata}
	Given the operator $\mcL_\kappa$ as per Equation \eqref{Eq: G_kappa} and denoting by $\delta(2\sigma_-)\in\mathcal{D}^\prime(\bR^d\times\bR^d)$ the bi-distribution whose integral kernel is $\delta((t-t^\prime)^2-(x_1-x^\prime_1)^2-\ldots-(x_{d-2}-x'_{d-2})^2+(z+z^\prime)^2)$, it holds that
	$$\text{supp}\left(\left(\mathcal{L}_\kappa \otimes\delta\right)\star\delta(2\sigma_-)\right)\cap(\bH^d\times\bH^d)\subset \text{supp}(\Theta(2\sigma_-))\cap(\bH^d\times\bH^d),$$
	where $\Theta$ denotes the Heaviside step function. 
\end{lemma}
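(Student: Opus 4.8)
The plan is to reduce the statement to an explicit one-dimensional computation in the normal coordinate $z$ and then to locate the support of the resulting distribution by solving the reflected light-cone equation $\sigma_-=0$. Since the second factor of $\mathcal{L}_\kappa\otimes\delta$ is the full Dirac distribution it acts as the unit for convolution in the second entry, while $\mathcal{L}_\kappa$ affects the first entry only through the normal variable (its tangential dependence being trivial, cf.\ the discussion after \eqref{Eq: G_kappa}); using $\mathcal{L}_\kappa(z)=\Theta(z)e^{-\kappa z}$ one therefore has, at the level of integral kernels,
\[
\big((\mathcal{L}_\kappa\otimes\delta)\star\delta(2\sigma_-)\big)(\underline x,z,\underline x',z')=\int_{0}^{\infty}e^{-\kappa s}\,\delta\!\big(2\sigma_-(\underline x,z-s,\underline x',z')\big)\,\dr s.
\]
I would first check that this convolution is a well-defined tempered distribution: for fixed tangential arguments $\delta(2\sigma_-)$ is, in the variable $z$, a locally finite sum of point masses carried by and transversal to the cone, so the convolution with the one-sided kernel $\mathcal{L}_\kappa$ is unambiguous; alternatively one decomposes $\delta(2\sigma_-)$ and invokes Proposition~\ref{Prop: Convolution}.

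Next I would compute the support explicitly. Writing $\rho^2:=(t-t')^2-\sum_j(x_j-x'_j)^2$, away from the vertex of the cone the map $s\mapsto 2\sigma_-(\underline x,z-s,\underline x',z')$ is a nondegenerate quadratic in $s$ with at most two simple zeros $s_\pm$, which are read off from \eqref{Eq: reflected Synge world function H^d}; performing the substitution, the convolution collapses to a sum of two contributions, each proportional to $\Theta(s_\pm)\,e^{-\kappa s_\pm}$ divided by the (nonzero) transversal derivative at $s_\pm$. On $\bH^d\times\bH^d$ the constraints $s\ge0$, $z\ge0$, $z'\ge0$ force each surviving contribution to be located at points $(\underline x,z,\underline x',z')$ lying in the zero set of $2\sigma_-$ together with one of its two sides, i.e.\ inside $\operatorname{supp}(\Theta(2\sigma_-))$; taking closures to reinstate the vertex yields the claimed inclusion. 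The sign bookkeeping throughout is fixed by the conventions of Definition~\ref{Def: Reflected Synge's world function} and \eqref{Eq: reflected Synge world function H^d}.

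The delicate point — and the reason the intersection with $\bH^d\times\bH^d$ is essential rather than cosmetic — is the interplay of two distributional features. First, $\mathcal{L}_\kappa$ has unbounded support $[0,\infty)$ in the normal direction, so the convolution genuinely redistributes the mass of $\delta(2\sigma_-)$ toward larger $z$ and, a priori on all of $\bR^d\times\bR^d$, deposits mass on both branches $s_\pm$; one must check that the branch which would place mass outside $\operatorname{supp}(\Theta(2\sigma_-))$ samples the input $\delta(2\sigma_-)$ at a negative first-entry normal coordinate and hence disappears once the whole configuration is restricted to $\bH^d\times\bH^d$ (consistently with the way $\sigma_-$ and $G_r$ enter the construction of $G_\kappa$). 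Second, $\delta(2\sigma_-)$ is singular at its vertex $\{z=z'=0,\ \underline x=\underline x'\}$, so the substitution $s\mapsto s_\pm$ is legitimate only off a null set and one must argue separately that the vertex contributes nothing to the support beyond its closure. I would organize both issues by testing against $\varphi\in\mathcal{D}(\bH^d\times\bH^d)$ supported in the open complement of $\operatorname{supp}(\Theta(2\sigma_-))$, splitting the $s$-integral near the vertex, applying the coarea formula on each smooth piece and verifying that the boundary and vertex terms vanish; a leaner alternative is to combine the general inclusion $\operatorname{supp}(u\star v)\subseteq\operatorname{supp}(u)+\operatorname{supp}(v)$ with the explicit description of $\operatorname{supp}(\mathcal{L}_\kappa\otimes\delta)$ and a direct verification that, after intersecting with $\bH^d\times\bH^d$, the resulting sum set lands inside the claimed region.
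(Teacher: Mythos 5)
Your reduction to a one--dimensional integral in the normal variable is the right idea and is how the paper proceeds, but you have the convolution acting in the wrong direction, and this is exactly the point on which the lemma turns. With your formula $\int_0^\infty e^{-\kappa s}\,\delta\big(2\sigma_-(\underline x,z-s,\underline x',z')\big)\,\dr s$ and $\rho^2=(t-t')^2-\sum_j(x_j-x_j')^2$, the zeros of $s\mapsto\rho^2-((z-s)+z')^2$ are $s=z+z'\mp|\rho|$. The branch $s=z+z'+|\rho|$ is admissible ($s\ge0$) for \emph{every} $(z,z')$ with $z,z'\ge 0$ and $\rho^2>0$, and contributes $e^{-\kappa(z+z'+|\rho|)}/(2|\rho|)$ there, in particular throughout the region $\sigma_-<0$, while the branch $s=z+z'-|\rho|$ is admissible precisely where $\sigma_-\le 0$. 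Hence for the distribution you wrote down the claimed support inclusion is simply false, and your proposed rescue --- that the offending branch samples the input at a negative first-entry normal coordinate and therefore drops out after restriction to $\bH^d\times\bH^d$ --- does not work: $\delta(2\sigma_-)$ is a distribution on all of $\bR^d\times\bR^d$, and the intersection with $\bH^d\times\bH^d$ in the statement constrains only the external points $(x,x')$, not the points at which the convolution integral samples the input. Your fallback via $\mathrm{supp}(u\star v)\subseteq\mathrm{supp}(u)+\mathrm{supp}(v)$ fails for the same reason: with $\mathrm{supp}\,\mcL_\kappa=[0,\infty)$ and that orientation one only obtains $z+z'\ge-|\rho|$, which carries no information.

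The representation that actually enters $G_\kappa$ (see the proof of Proposition \ref{Prop: GR in terms of G+ and Gkappa}, Example \ref{Ex: 4D massless case BF} and Proposition \ref{Prop: Uniqueness}) samples the reflected cone at the \emph{shifted-up} coordinate, i.e.
\begin{equation*}
\int_{\bR}\dr\tilde z\,\Theta(\tilde z-z)\,e^{-\kappa(\tilde z-z)}\,\delta\big(\rho^2-(\tilde z+z')^2\big),
\end{equation*}
so the two roots $\tilde z+z'=\pm|\rho|$ produce the factors $\Theta(|\rho|-z-z')$ and $\Theta(-|\rho|-z-z')$; on $\bH^d\times\bH^d$ the second vanishes identically because $z+z'\ge0$, and the first equals $\Theta(2\sigma_-)$ there, which is the entire content of the lemma --- this is the paper's argument. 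Once the orientation is fixed, the rest of your outline (the case $\rho^2\le0$, and the vertex $\rho\to0$ handled by a limiting argument or by testing against functions supported off $\{\sigma_-\ge0\}$) matches the paper's treatment; as written, however, the key step does not go through.
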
 

\begin{proof}
	We work at the level of integral kernels and, not to burden the notation, we omit to indicate explicitly the dependency on the coordinates relative to the directions tangent to the boundary. In addition, with a slight abuse of notation, for a reason which will become clear in a moment, we set $\tau^2=(t-t^\prime)^2-(x_1-x^\prime_1)^2-\ldots-(x_{d-2}-x'_{d-2})^2$.  Equation \eqref{Eq: Kappa G} entails that we are interested in 
	\begin{flalign*}
		 ((\mathcal{L}_\kappa\otimes\delta) \star \delta(2\sigma_-))(z,z') = - \int_{\mathbb{R}} d\tilde{z} \, \Theta(\tilde{z} - z) e^{-\kappa(z - \tilde{z})} \delta (\tau^2 - (\tilde{z} +z')^2),
	\end{flalign*}    
	which entails that $\tau^2\geq 0$, else $\delta(\tau^2-(\tilde{z}+z^\prime)^2)=0$. Bearing this in mind and considering $\tau^2>0$,
	\begin{flalign}
	 ((\mathcal{L}_\kappa\otimes\delta) \star \delta(2\sigma_-))(z,z') &= - \frac{1}{2|\tau|} \int_{\mathbb{R}} d\tilde{z} \, \Theta(\tilde{z} - z) e^{-\kappa(z-\tilde{z})}[\delta(\tilde{z} +z'-|\tau|) + \delta(\tilde{z}+z'+|\tau|)]\notag \\ 
		& =-\frac{1}{2|\tau|} \left[ \Theta(|\tau| -z -z') e^{-\kappa(-|\tau| + z + z')} + \Theta(-|\tau| -z-z')e^{-\kappa(|\tau|+z+z')} \right].\label{Eq: Aux2}
	\end{flalign}
Since the restriction to half-Minkowski spacetime entails that $z+z^\prime\geq 0$, it holds that on $\mathbb{H}^d\times\mathbb{H}^d$, 
$$((\mathcal{L}_\kappa\otimes\delta) \star \delta(2\sigma_-)))(z,z')=-\frac{1}{2|\tau|} \Theta(|\tau| -z -z') e^{-\kappa(-|\tau| + z + z')}=-\frac{1}{2|\tau|} \Theta(2\sigma_-) e^{-\kappa(-|\tau| +z+z')},$$
where we exploited that, on $\bH^d\times\bH^d$, $\Theta(2\sigma_-)=\Theta((|\tau| -z -z')(|\tau| +z +z'))=\Theta(|\tau| -z -z')$.

	In order to complete the analysis we need to investigate what happens as $\tau=0$. This can be done with a limiting procedure starting from Equation \eqref{Eq: Aux2}. Assuming that $z+z^\prime>0$, that is at least one of the two points lies in $\mathring{\bH}^d$, we obtain that, for $|\tau|$ sufficiently small $\Theta(|\tau| -z -z')$=0 and therefore the contribution in $(\mathcal{L}_\kappa\otimes\delta) \star \delta(2\sigma_-)$ vanishes. On the contrary, if $z+z^\prime=0$ such contribution is non-vanishing and it yields a term proportional up to smooth terms to $\frac{\Theta(\tau^2)}{|\tau|}$. Also in this case the sought statements descend and hence the proof is complete. 
\end{proof}

\begin{lemma}\label{Lem: Second Figata}
	Given the operator $\mcL_\kappa$ as per Equation \eqref{Eq: G_kappa} and denoting by $\Theta(2\sigma_-)\in\mathcal{D}^\prime(\bR^d\times\bR^d)$ the bi-distribution whose integral kernel is $\Theta((t-t^\prime)^2-(x_1-x^\prime_1)^2-\ldots-(x_{d-2}-x'_{d-2})^2+(z+z^\prime)^2)$, it holds that
	$$\text{supp}\left(\left(\mathcal{L}_\kappa \otimes\delta\right)\star\Theta(2\sigma_-)\right)\cap(\bH^d\times\bH^d)\subset \text{supp}(\Theta(2\sigma_-))\cap(\bH^d\times\bH^d),$$
	where $\Theta$ denotes the Heaviside step function.
\end{lemma}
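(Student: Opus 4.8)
The plan is to follow, almost verbatim, the argument in the proof of Lemma~\ref{Lem: First Figata}, the only structural change being that convolving $\mcL_\kappa$ against $\Theta(2\sigma_-)$ in the normal direction produces a primitive of an exponential rather than a finite sum of $\delta$-evaluations. As there, I would pass to integral kernels, drop the (inert) dependence on the coordinates tangent to $\partial\bH^d$, and write $\tau^2 := (t-t')^2 - \sum_{j=1}^{d-2}(x_j-x'_j)^2$, so that the kernel of $\Theta(2\sigma_-)$ is $\Theta(\tau^2 - (z+z')^2)$. With the same sign and Heaviside orientation as in Lemma~\ref{Lem: First Figata}, the object of interest is
\begin{equation*}
\big((\mcL_\kappa\otimes\delta)\star\Theta(2\sigma_-)\big)(z,z') = -\int_{\mathbb{R}} d\tilde z\; \Theta(\tilde z - z)\, e^{-\kappa(z-\tilde z)}\, \Theta\big(\tau^2 - (\tilde z + z')^2\big),
\end{equation*}
which already forces $\tau^2 \ge (\tilde z+z')^2 \ge 0$; I would therefore isolate the generic case $\tau^2>0$ and postpone the null locus $\tau^2=0$ to a limiting step at the end, exactly as in the quoted proof.

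For $\tau^2>0$ the factor $\Theta(\tau^2-(\tilde z+z')^2)=\Theta(|\tau|-|\tilde z+z'|)$ confines $\tilde z$ to $[-|\tau|-z',\,|\tau|-z']$, while $\Theta(\tilde z - z)$ imposes $\tilde z\ge z$; on $\bH^d\times\bH^d$ one has $z+z'\ge 0 \ge -|\tau|$, so the effective range is $\tilde z\in[z,\,|\tau|-z']$, which is non-empty precisely when $z+z'\le|\tau|$, that is, when $2\sigma_-\ge 0$. Carrying out the elementary integral of $e^{-\kappa(z-\tilde z)}$ over that interval then yields, on $\bH^d\times\bH^d$ and for $\tau^2>0$,
\begin{equation*}
\big((\mcL_\kappa\otimes\delta)\star\Theta(2\sigma_-)\big)(z,z') = \frac{1}{\kappa}\Big(1 - e^{-\kappa(z+z'-|\tau|)}\Big)\,\Theta(|\tau|-z-z') = \frac{1}{\kappa}\Big(1 - e^{-\kappa(z+z'-|\tau|)}\Big)\,\Theta(2\sigma_-),
\end{equation*}
where in the last step I use, exactly as in Lemma~\ref{Lem: First Figata}, that on $\bH^d\times\bH^d$ one has $\Theta(2\sigma_-)=\Theta\big((|\tau|-z-z')(|\tau|+z+z')\big)=\Theta(|\tau|-z-z')$. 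For $\tau^2<0$ the integrand vanishes identically, and for $\tau^2=0$ the limiting procedure copied from the end of the proof of Lemma~\ref{Lem: First Figata} gives a vanishing contribution whenever $z+z'>0$ and, when $z+z'=0$, a term proportional to $\bigl(1-e^{\kappa|\tau|}\bigr)\Theta(\tau^2)$, which is again supported in $\{2\sigma_-\ge 0\}$.

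The desired inclusion is then immediate: on $\bH^d\times\bH^d$ the kernel above is a smooth (indeed continuous, bounded) function multiplying $\Theta(2\sigma_-)$, so that $\operatorname{supp}\big((\mcL_\kappa\otimes\delta)\star\Theta(2\sigma_-)\big)\cap(\bH^d\times\bH^d)\subseteq\operatorname{supp}(\Theta(2\sigma_-))\cap(\bH^d\times\bH^d)$; in fact, since $1-e^{-\kappa(z+z'-|\tau|)}$ does not vanish on the interior $\{2\sigma_->0\}$ when $\kappa>0$, the two supports agree there. The only point that requires genuine care — precisely as in Lemma~\ref{Lem: First Figata} — is the handling of the null set $\tau^2=0$, together with the check that the convolution is meaningful across it; this is disposed of by the same limiting argument (and the fact that, for fixed $(z,z')$ with $\tau^2>0$, the $\tilde z$-integral runs over a bounded interval), so I anticipate no obstacle beyond the bookkeeping of the Heaviside factors and the single exponential quadrature already encountered in the proof of the first lemma.
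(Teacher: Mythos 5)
Your argument is correct and follows essentially the same route as the paper: both pass to integral kernels, observe that the two Heaviside factors in the $\tilde z$-integral are simultaneously satisfiable on $\bH^d\times\bH^d$ only when $z+z'\le|\tau|$, i.e.\ $2\sigma_-\ge 0$, and conclude the support inclusion. The only difference is that you additionally carry out the elementary exponential quadrature to get the explicit prefactor $\tfrac{1}{\kappa}\bigl(1-e^{-\kappa(z+z'-|\tau|)}\bigr)$, which the paper's proof does not need (it stops at the support bookkeeping); this extra step is harmless and, as you note, even shows the supports coincide on $\{2\sigma_->0\}$.
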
 

\begin{proof}
Also in this case we work at the level of integral kernels  and, for simplicity of the notation, we omit to indicate explicitly the dependency on the coordinates relative to the directions tangent to the boundary. In addition, with a slight abuse of notation, we set $\tau^2=(t-t^\prime)^2-(x_1-x^\prime_1)^2-\ldots-(x_{d-2}-x'_{d-2})^2$.  Equation \eqref{Eq: Kappa G} entails that we are interested in 
\begin{flalign*}
	((\mathcal{L}_\kappa\otimes\delta) \star \Theta(2\sigma_-))(z,z') = - \int_{\mathbb{R}} d\tilde{z} \, \Theta(\tilde{z} - z) e^{-\kappa(z-\tilde{z})} \Theta (\tau^2 - (\tilde{z} +z')^2),
\end{flalign*}    
which entails that $\tau^2\geq 0$, else $\Theta(\tau^2-(\tilde{z}+z^\prime)^2)=0$. We observe that $\Theta(\tilde{z}-z)$ entails $\tilde{z}>z > 0$ while $\Theta(\tau^2- (\tilde{z} +z')^2)$ entails that $\tilde{z}\in(-|\tau|-z^\prime,|\tau|-z^\prime)$. These conditions are compatible provided that $z<|\tau|-z^\prime$ which entails $\sigma_->0$ if we restrict the attention to $\bH^d\times\bH^d$ where $z+z^\prime>0$. As a consequence 
$$\Theta(\tilde{z} - z)\Theta (\tau^2 - (\tilde{z} +z')^2)=\Theta(|\tau|-(z+z^\prime))\Theta(\tilde{z} - z)\Theta (\tau^2 - (\tilde{z} +z')^2),$$
but, on half Minkowski spacetime $\Theta(|\tau|-(z+z^\prime))=\Theta(2\sigma_-)$ which completes the proof.
\end{proof}

In the following we discuss a notable example where Equation \eqref{Eq: Kappa G} can be made more explicit. This serves also as an inspiration for the analysis in Section \ref{Sec: Hadamard Recursion Relations}.

\begin{example}
\label{Ex: 4D massless case BF}
Let us consider the $4-$dimensional half-Minkowski spacetime $(\mathbb{H}^4, \eta)$ and the massless Klein-Gordon operator $m=0$. Observe that the two-dimensional scenario has been already analyzed in \cite{Bondurant_2005} and therefore we avoid discussing it. Working still at the level of integral kernels and omitting for simplicity of the notation to indicate all directions tangent to the boundary, Equation \eqref{Eq: Kappa G} entails that we are interested in evaluating
\begin{flalign*}
    2 \kappa (\mathcal{L}_\kappa \star G_r)(z,z') &= 2 \kappa \int_{\mathbb{R}} d\tilde{z} \, \mathcal{L}_\kappa (z, \tilde{z}) G_r(\tilde{z}, z') \\ 
    & = - \frac{2 \kappa}{\pi}\mathrm{sgn}(t-t^\prime) \int_{\mathbb{R}} d\tilde{z} \, \Theta(\tilde{z} - z) e^{-\kappa(z- \tilde{z})} \delta (\tau^2 - (\tilde{z} +z')^2).
\end{flalign*}    
We can therefore apply Lemma \ref{Lem: First Figata} to infer that
\begin{gather}
    2 \kappa (\mathcal{L}_\kappa \star G_r)(z,z') =\notag\\
    =-\frac{2\kappa}{\pi|\tau|}\mathrm{sgn}(t-t^\prime) \left[ \Theta(|\tau| -z -z') e^{-\kappa(-|\tau| +z+z')} + \Theta(-|\tau| -z-z')e^{-\kappa(|\tau|+z+z')} \right],\label{Eq: Aux1}
\end{gather}
where we set $|\tau| := +\sqrt{(t-t')^2 - (x-x')^2 - (y-y')^2}$. Hence the Robin propagator restricted to $\bH^d\times\bH^d$ reads
\begin{equation}\label{Eq: Gkappa sigma}
	G_\kappa(\tau, z,z') =\frac{\mathrm{sgn}(t-t^\prime)}{2\pi}\left(\delta(\sigma) + \delta(\sigma_-) - \frac{2\kappa}{|\tau|} \Theta(\sigma_-) e^{-\kappa(-|\tau| +z+z')}\right). 
\end{equation}
where $\sigma$ is the Synge's world function as per Definition \ref{Def: Synge's world function}, while $\sigma_-$ is the reflected counterpart as per Equation \eqref{Eq: reflected Synge world function H^d}. Observe that the limiting case $\tau=0$ can be dealt with as in Lemma \ref{Lem: First Figata}. To conclude this example, we make the following two important observations:
\begin{itemize}
	\item[\ding{104}] Equation \eqref{Eq: Gkappa sigma} entails that $G^-_\kappa=\Theta(t-t^\prime)G_\kappa$ and $G^+_\kappa=-\Theta(t^\prime-t)G_{\kappa}$ abide by Equation \eqref{Eq: Support at the boundary} and therefore they are retarded $(+)$ and advanced $(-)$ propagators in the sense of Definition \ref{Def: SolFond on boundary}.
	\item[\ding{104}] We can rewrite $G^-_\kappa$ as a series in powers of $\sigma_-$, namely Equation \eqref{Eq: Gkappa sigma} can be rewritten as
	\begin{gather}
		G^-_\kappa (t,\underline{x},t', \underline{x}') = \frac{\Theta(t-t')}{2\pi} \left[\delta(\sigma) + \delta(\sigma_-) - \frac{2\kappa}{|\tau|} \sum_{j=0}^{\infty} \frac{(-2\kappa)^j}{j!} \frac{1}{(|\tau| + z + z')^j} (\sigma_{-})^j \Theta(\sigma_-) \right]=\notag \\
        =\frac{\Theta(t-t')}{2\pi} \left[\delta(\sigma) + \delta(\sigma_-) + \Theta(\sigma_-)\sum_{j=0}^{\infty} d_j(z+z^\prime,\kappa)\,\sigma_-^{\,j}\right],
        \label{Eq: 4D massless case BF}
	\end{gather} 
\end{itemize}
where
$$d_j(z+z^\prime,\kappa)=2\kappa\sum_{n=0}^{j}A_n(z+z^\prime)\,E_{\,j-n}(z+z^\prime,\kappa),
\qquad
A_n(z+z^\prime)=\frac{(-1)^n}{(z+z^\prime)^{2n+1}}\frac{\binom{2n}{n}}{2^{\,n}},$$
while
$$E_q(z+z^\prime,\kappa)=\frac{1}{q!}\,B_q\!\big(1!c_1,\,2!c_2,\,\dots,\,q!c_q\big),
\quad
c_m(z+z^\prime,\kappa)=\frac{\kappa\,2^{\,m}}{(z+z^\prime)^{2m-1}}\binom{\tfrac12}{m}\ \ (m\ge1),
\quad E_0=1,$$
\(B_q\) being the complete Bell polynomials.
\end{example}

\noindent We conclude the section proving one of the main results of this paper, namely that Equation \eqref{Eq: Robin fundamental solution via Fulling} identifies advanced and retarded propagators as per Definition \ref{Def: SolFond on boundary}. For technical reasons we separate the odd from the even case, starting from the latter.

\begin{proposition}\label{Prop: Propagators in even dimension}
	Let $\bH^{d}$, $d\geq 2$ be the $d$-dimensional half Minkowski spacetime as per Equation \eqref{Eq: half Minkowski} and consider the massive Klein-Gordon operator $P$ as per Equation \eqref{Eq: Cauchy initial value problem}. Then $G^\pm_\kappa$ as per Equation \eqref{Eq: Robin fundamental solution via Fulling} are the advanced and retarded propagators for $P$ with Robin boundary conditions in the sense of Definition \ref{Def: SolFond on boundary}.
\end{proposition}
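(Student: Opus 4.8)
The plan is to reduce the statement to a single support estimate. Indeed, the other requirements in Definition~\ref{Def: SolFond on boundary} have already been secured in Theorem~\ref{Thm: Constrution of Robin}: there it is shown that $(P\otimes\mathbb{I})G^\pm_\kappa=(\mathbb{I}\otimes P)G^\pm_\kappa=\delta|_{\mathrm{Diag}(\bR^d\times\bR^d)}$ and, by the lemma preceding it, that $G^\pm_\kappa(f)\in\mathcal{S}'_\kappa(\bR^d)$ so that the Robin boundary condition \eqref{Eq: Boundary conditions} holds. It therefore only remains to establish \eqref{Eq: Support at the boundary}. As a preliminary step I would note that $\mathcal{L}_\kappa=\Theta(z)e^{-\kappa z}$ and $\tilde T_\kappa=\partial_z+\kappa$ act only on the variable $z$ transverse to $\partial\bH^d$ and not on the time coordinate; hence $(\mathcal{L}_\kappa\otimes\delta)\star$ in the first entry and $\mathbb{I}\otimes\tilde T_\kappa$ in the second entry commute with multiplication by $\Theta(\pm(t-t'))$. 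Combined with $G^-_D=\Theta(t-t')G_D$, $G^+_D=-\Theta(t'-t)G_D$ (Remark~\ref{Rem: D and N Propagators on Hd}) and the definition \eqref{Eq: Robin fundamental solution via Fulling}, this produces the analogue of \eqref{Eq: Smart Trick}, namely $G^-_\kappa=\Theta(t-t')G_\kappa$ and $G^+_\kappa=-\Theta(t'-t)G_\kappa$ with $G_\kappa:=G^-_\kappa-G^+_\kappa$. So it suffices to control $\mathrm{supp}(G_\kappa)\cap(\bH^d\times\bH^d)$.

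For this I would invoke the decomposition $G_\kappa=G_N-2\kappa(\mathcal{L}_\kappa\otimes\delta)\star G_r$ of Proposition~\ref{Prop: GR in terms of G+ and Gkappa}. The Neumann part is immediate: $G^\pm_N$ already satisfy the causal support property by Proposition~\ref{Prop: D and N Causal Propagator} and Remark~\ref{Rem: D and N Propagators on Hd}, and at the level of relations one has $\mathrm{supp}(G_N)\subseteq\{\sigma\ge 0\}\cup\{\sigma_-\ge 0\}$. On $\bH^d\times\bH^d$ the first set says that $x$ and $x'$ are joined by a straight causal geodesic (which stays in $\bH^d$ by convexity), while the second says that $x$ is causally related to the mirror point $\iota_z x'$, equivalently to $x'$ along a causal geodesic reflecting once off $\partial\bH^d$ (reflection being an isometry preserves the causal character); both situations lie within the causal relation of $\bH^d$ viewed as a globally hyperbolic spacetime with timelike boundary in the sense of \cite{Ak_Hau_2020}.

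The core of the argument is the support of the correction term $(\mathcal{L}_\kappa\otimes\delta)\star G_r$ in even dimension $d=2k$, and here I would use the closed form \eqref{Eq: Causal Propagator in d Minkowski}. Applying $(\iota_z^*\otimes\mathrm{id})$ as in \eqref{Eq: GD in terms of G and G+} amounts to replacing $\sigma$ by $\sigma_-$ (the sign $\mathrm{sgn}(t-t')$ being unaffected by the $z$-reflection), so that, expanding the $\sigma_-$-derivatives by Leibniz, $G_r=\mathrm{sgn}(t-t')\bigl(\sum_{j=0}^{k-2}a_j(m)\,\delta^{(j)}(\sigma_-)+\Theta(\sigma_-)\,\phi_m(\sigma_-)\bigr)$ with $\phi_m$ a smooth function built from derivatives of $J_0(m\sqrt{2\,\cdot\,})$. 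It then suffices to bound the support, on $\bH^d\times\bH^d$, of $(\mathcal{L}_\kappa\otimes\delta)\star\delta^{(j)}(\sigma_-)$ and of $(\mathcal{L}_\kappa\otimes\delta)\star\bigl[\Theta(\sigma_-)\phi_m(\sigma_-)\bigr]$. These are exactly Lemma~\ref{Lem: First Figata} (case $j=0$, up to the harmless rescaling $\delta(2\sigma_-)=\tfrac12\delta(\sigma_-)$) and Lemma~\ref{Lem: Second Figata} (case $\phi_m\equiv1$), and the general cases follow by rerunning the same explicit $z$-integral over the one-sided exponential: the bounded smooth factor $\phi_m$ and the extra $\sigma_-$-derivatives do not enlarge the support, the restriction to $z+z'>0$ (which always holds, since the test functions in \eqref{Eq: Support at the boundary} live in $\mathcal{D}(\mathring{\bH}^d)$) gives a contribution supported in $\{|\tau|\ge z+z'\}=\{\sigma_-\ge 0\}$, and the tip of the light cone $\tau\to 0$ is dealt with by the limiting argument used in Lemma~\ref{Lem: First Figata}. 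Hence $\mathrm{supp}\bigl((\mathcal{L}_\kappa\otimes\delta)\star G_r\bigr)\cap(\bH^d\times\bH^d)\subseteq\{\sigma_-\ge 0\}\cap(\bH^d\times\bH^d)$, and therefore $\mathrm{supp}(G_\kappa)\cap(\bH^d\times\bH^d)\subseteq\{\sigma\ge 0\}\cup\{\sigma_-\ge 0\}$, contained in the causal relation of $\bH^d$. Together with the time-ordering of the first paragraph this yields \eqref{Eq: Support at the boundary}, so that $G^\pm_\kappa$ are the advanced and retarded propagators.

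The main obstacle I anticipate is confined to the third paragraph: carefully verifying the extensions of Lemmas~\ref{Lem: First Figata} and~\ref{Lem: Second Figata} to the distributions $\delta^{(j)}(\sigma_-)$ with $j\ge 1$ and to smooth multiples of $\Theta(\sigma_-)$, while keeping track of the behaviour at the tip of the light cone, where the convolution produces extra, less regular but still causally supported remainders (of the type $\Theta(\tau^2)/|\tau|$) as already seen in Lemma~\ref{Lem: First Figata}. Everything else is bookkeeping on top of results already in place, in particular the decomposition of Proposition~\ref{Prop: GR in terms of G+ and Gkappa} and the causal support of $G_N$.
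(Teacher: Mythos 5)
Your proposal is correct and follows essentially the same route as the paper: reduce everything to the causal support of $G_\kappa$ via the decomposition of Proposition \ref{Prop: GR in terms of G+ and Gkappa}, and control the correction term $(\mathcal{L}_\kappa\otimes\delta)\star G_r$ with Lemmas \ref{Lem: First Figata} and \ref{Lem: Second Figata}, observing that $\sigma_-$-derivatives and smooth (Bessel-type) factors cannot enlarge the support; your explicit remark that $(\mathcal{L}_\kappa\otimes\delta)\star$ and $\mathbb{I}\otimes\tilde T_\kappa$ commute with the time-ordering Heaviside factors, yielding $G^\mp_\kappa=\pm\Theta(\pm(t-t'))G_\kappa$, is a useful clarification that the paper leaves implicit. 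The only addition needed is a sentence covering odd $d$ (the statement is for all $d\geq 2$): there $G_r$ consists solely of terms of the form $\Theta(2\sigma_-)F(\sigma_-)$ with $F$ smooth, so Lemma \ref{Lem: Second Figata} alone suffices, exactly as noted in the paper's proof.
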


\begin{proof}
Let us consider Equation \eqref{Eq: Robin fundamental solution via Fulling} and in particular the representation as in Equation \eqref{Eq: Kappa G} of the associated advanced-minus-retarded fundamental solution. Per construction this is compatible with the dynamics, see Equation \eqref{Eq: Boundary and PDE} and with the boundary conditions, see Equation \eqref{Eq: Boundary conditions}. We need to check the causal support of the $G_\kappa$. This distribution is comprised of two terms. The first one is $G_N$ whose associated $G^\pm_N$ are compatible with Equation \eqref{Eq: Support at the boundary} as one can infer per direct inspection from Equation \eqref{Eq: G_N}.
Therefore we need only to focus on $\textrm{supp}((\mathcal{L}_\kappa\otimes\delta)\star G_r)$. Using Equation \eqref{Eq: Causal Propagator in d Minkowski}, we can derive an exact expression for $G_r$, but, for our purposes, it is only relevant that, in even dimensions, it can be written as a linear combination of terms which are proportional either to $\delta^{(n)}(2\sigma_-)$, $n\geq 0$ or to $\Theta(2\sigma_-)F(\sigma_-)$ where $F$ is suitable linear combination of Bessel functions which is smooth in its arguments. In odd dimensions, instead, only terms of the second kind appear. Yet
\begin{itemize}
	\item[\ding{104}] using the properties of the convolution $\left(\mathcal{L}_\kappa \otimes\delta\right)\star\delta^{(n)}(2\sigma_-)=\frac{d^n}{d\sigma_-^n}\left(\mathcal{L}_\kappa \otimes\delta\right)\star\delta(2\sigma_-)$. Lemma \ref{Lem: First Figata} entails that $\text{supp}\left(\left(\mathcal{L}_\kappa\otimes\delta\right)\star\delta(2\sigma_-)\right)\cap(\bH^d\times\bH^d)\subseteq\text{supp}(\Theta(2\sigma_-))\cap(\bH^d\times\bH^d)$. Since derivatives shrink at most the support, these terms are causally supported, 
	\item[\ding{104}] if we consider $\Theta(2\sigma_-)F(\sigma_-)$, then we are interested in controlling the support properties of 
	$$\left(\mathcal{L}_\kappa \otimes\delta\right)\star\Theta(2\sigma_-)F(\sigma_-),$$
	where $F$ is such that the convolution is still well-defined. Yet, we can apply Lemma \ref{Lem: Second Figata} since we can include $F$ in the statement on account of the observation that the proof relies only on the support properties of the Heaviside distributions, which are left unchanged. This reasoning applies in particular to $G_r$ in odd spacetime dimensions.
\end{itemize}
Gathering these pieces of information we can infer the sought conclusion.
\end{proof}

\noindent As a last step, we prove uniqueness of $G^\pm_\kappa$ and, to this end, we need to make contact with the spectral representation of the advanced and retarded propagators employed in \cite{Dappiaggi-Drago_2019}.

\begin{proposition}\label{Prop: Uniqueness}
The bi-distributions $G^\pm_\kappa\in\mathcal{D}^\prime(\bH^d\times\bH^d)$ as per Equation \eqref{Eq: Robin fundamental solution via Fulling} are the unique advanced and retarded propagators for the massive Klein-Gordon operator $P$ with Robin boundary conditions in the sense of Definition \ref{Def: SolFond on boundary}.
\end{proposition}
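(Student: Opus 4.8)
The plan is to reduce the uniqueness statement for the boundary value problem on $\bH^d$ to the known uniqueness result of \cite[Thm.~30]{Dappiaggi-Drago_2019}, which was established via spectral-theoretic (boundary-triple) techniques for static, globally hyperbolic spacetimes with timelike boundary. Since $\bH^d$ is static and $\kappa \geq 0$ is an admissible Robin parameter, that theorem guarantees the existence of a \emph{unique} advanced-minus-retarded propagator $G_\kappa$ solving the mixed initial-boundary value problem \eqref{Eq: BV for G}. Hence the first step is to verify that any pair $G^\pm_\kappa$ of advanced and retarded propagators in the sense of Definition \ref{Def: SolFond on boundary} necessarily gives rise, via the difference $G_\kappa = G^-_\kappa - G^+_\kappa$, to a solution of \eqref{Eq: BV for G}: the PDE in both entries and the Robin boundary condition are immediate from \eqref{Eq: Boundary and PDE} and \eqref{Eq: Boundary conditions}, while the causal support property \eqref{Eq: Support at the boundary} together with the argument of Remark~\ref{Rem: Propagators} (adapted to the boundary case) forces the splitting $G^-_\kappa = \Theta(t-t')G_\kappa$, $G^+_\kappa = -\Theta(t'-t)G_\kappa$ and yields the correct Cauchy data on $\mathring{\Sigma}$.

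The second step is to run the converse: given the unique $G_\kappa$ from \cite{Dappiaggi-Drago_2019}, define $G^\pm_\kappa$ by the time-splitting as in Equation \eqref{Eq: Smart Trick}, check that these are well-defined distributions (the wavefront set of $G_\kappa$ transverse to $\{t=t'\}$ allows multiplication by $\Theta(\pm(t-t'))$, by \cite[Thm.~8.2.4]{Hormander_1990} adapted to account for the reflected bicharacteristics), and verify that they satisfy all the defining properties in Definition \ref{Def: SolFond on boundary}, including the support condition, which follows because $G_\kappa$ is supported in the union of the direct and reflected light cones. This shows the propagators built this way are unique. Finally, one must identify these with the $G^\pm_\kappa$ of Equation \eqref{Eq: Robin fundamental solution via Fulling}: by Propositions \ref{Prop: Propagators in even dimension} and the odd-dimensional counterpart, the Bondurant-Fulling construction produces objects that \emph{are} advanced and retarded propagators in the sense of Definition \ref{Def: SolFond on boundary}, so by the uniqueness just established they must coincide with the spectral ones.

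I expect the main obstacle to be bridging the two descriptions of $G_\kappa$: the spectral representation of \cite{Dappiaggi-Drago_2019} is formulated in terms of a self-adjoint extension of the spatial operator on $\{t\}\times\bH^{d-1}$ with Robin conditions, whereas our $G_\kappa$ is given concretely through convolution with $\mathcal{L}_\kappa$ as in Equation \eqref{Eq: Kappa G}. The delicate point is to confirm that the hypotheses of \cite[Thm.~30]{Dappiaggi-Drago_2019} genuinely apply here — in particular that the Robin boundary condition \eqref{Eq: RobinBC} with $\kappa \geq 0$ corresponds to a non-negative self-adjoint extension (so that the spectral calculus producing $G_\kappa$ is available and the solution is unique in the relevant distribution class) — and to check that the notion of solution used there matches Definition \ref{Def: SolFond on boundary}, including the fact that uniqueness is asserted within the space of distributions with the stated support and boundary behaviour rather than in a more restrictive energy space. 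Once the matching of solution concepts is in place, uniqueness is essentially a formal consequence; the remaining arguments (the time-splitting and the support bookkeeping) are routine.
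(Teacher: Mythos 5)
Your overall strategy---reducing uniqueness to \cite[Thm.~30]{Dappiaggi-Drago_2019}---is the same as the paper's, and your preliminary bookkeeping is fine: any pair $G^\pm_\kappa$ in the sense of Definition \ref{Def: SolFond on boundary} determines $G_\kappa=G^-_\kappa-G^+_\kappa$ solving Equation \eqref{Eq: BV for G}, and the support properties force $G^-_\kappa=\Theta(t-t')G_\kappa$, $G^+_\kappa=-\Theta(t'-t)G_\kappa$, so uniqueness of $G_\kappa$ implies uniqueness of the pair. The problem is that what you label the ``main obstacle''---bridging the convolution representation \eqref{Eq: Kappa G} with the spectral framework of \cite{Dappiaggi-Drago_2019}---is not an incidental verification to be deferred: it \emph{is} the proof. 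The paper's argument consists precisely of this computation: one writes $G_{\bR^d}$ and $G_r=(\iota_z^*\otimes\mathbb{I})G_{\bR^d}$ in Fourier modes, performs the $\tilde z$-integration in $(\mathcal{L}_\kappa\otimes\delta)\star G_r$ (which produces the factor $(ik_z-\kappa)^{-1}$), and resums the three contributions of \eqref{Eq: Kappa G} into
\begin{equation*}
G_\kappa(x,x')=\int_{\bR^{d-2}}\frac{dk_\perp}{(2\pi)^{d-2}}\,e^{ik_\perp(x_\perp-x'_\perp)}\int_0^\infty dk_z\,\Psi_\kappa(z)\overline{\Psi_\kappa(z')}\,\frac{\sin\omega(t-t')}{\omega},\qquad
\Psi_\kappa(z)=\frac{1}{\sqrt{2\pi}}\Bigl(e^{-ik_z z}-\frac{\kappa+ik_z}{\kappa-ik_z}\,e^{ik_z z}\Bigr),
\end{equation*}
with $\omega^2=|k_\perp|^2+k_z^2+m^2$, i.e.\ $G_\kappa(f,f')=\int dt\,dt'\,\bigl(f,A_\kappa^{-1/2}\sin(A_\kappa^{1/2}(t-t'))f'\bigr)$, where $A_\kappa$ is the self-adjoint extension of $\Delta+m^2$ on $(\bH^{d-1},\delta)$ with Robin boundary conditions. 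Only after this identification is \cite[Thm.~30]{Dappiaggi-Drago_2019} applicable to the Bondurant--Fulling object.

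Relatedly, your first step quietly assumes that the uniqueness statement of \cite[Thm.~30]{Dappiaggi-Drago_2019} holds within the full distributional class of Definition \ref{Def: SolFond on boundary}; you flag this ``matching of solution concepts'' yourself but leave it, together with the identification of the boundary condition with the correct self-adjoint extension, unresolved. Since the explicit mode computation above is exactly what settles both points (it exhibits $G_\kappa$ as the spectral propagator associated with the Robin extension $A_\kappa$), the proposal as written is an outline of the correct strategy rather than a proof: the central step is named but not carried out.
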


\begin{proof}
Starting from Equation \eqref{Eq: Kappa G}, we consider $G_N=G_{\bR^d}+(\iota^*_z\otimes\mathbb{I})G_{\bR^d}$. Setting the notation $x=(t,x_\perp,z)\in\bR^d$ and similarly for $x^\prime\in\bR^d$, we recall that on Minkowski spacetime we can rewrite $G_{\bR^d}$ in Fourier modes as
$$G_{\bR^d}(x,x^\prime)=\int\limits_{\bR^{d-2}}\frac{dk_\perp}{(2\pi)^{d-2}}e^{ik_\perp(x_\perp-x^\prime_\perp)}\int\limits_\bR \frac{dk_z}{2\pi} e^{ik_z(z-z^\prime)}\frac{\sin\omega(t-t^\prime)}{\omega},$$
where we are working at the level of integral kernels and where $\omega^2=|k_\perp|^2+k^2_z+m^2$. Observe that $k_\perp\in\bR^{d-2}$ and that, consequently,
$$G_r(x,x^\prime)=(\iota^*_z\otimes\mathbb{I})G_{\bR^d}(x,x^\prime)=\int\limits_{\bR^{d-2}}\frac{dk_\perp}{(2\pi)^{d-2}}e^{ik_\perp(x_\perp-x^\prime_\perp)}\int\limits_\bR\frac{dk_z}{2\pi} e^{ik_z(z+z^\prime)}\frac{\sin\omega(t-t^\prime)}{\omega}.$$
This allows to evaluate the remaining term, namely, up to the multiplicative factor $-2\kappa$
\begin{gather*}
(\mathcal{L}_\kappa\otimes\delta)\star G_r(x,x^\prime)=\int\limits_0^\infty d\tilde{z} e^{-\kappa\tilde{z}}\int\limits_{\bR^{d-2}}\frac{dk_\perp}{(2\pi)^{d-2}}e^{ik_\perp(x_\perp-x^\prime_\perp)}\int\limits_\bR dk_z e^{ik_z(z+z^\prime+\tilde{z})}\frac{\sin\omega(t-t^\prime)}{\omega}\\ =
\int\limits_{\bR^{d-2}} \frac{dk_\perp}{(2\pi)^{d-2}}e^{ik_\perp(x_\perp-x^\prime_\perp)}\int\limits_\bR\frac{dk_z}{2\pi} \frac{e^{ik_z(z+z^\prime)}}{ik_z - \kappa}\frac{\sin\omega(t-t^\prime)}{\omega}.
\end{gather*}
Putting together all these identities we obtain
\begin{gather*}
G_\kappa(x,x^\prime)=\int\limits_{\bR^{d-2}}\frac{dk_\perp}{(2\pi)^{d-2}}e^{ik_\perp(x_\perp-x^\prime_\perp)}\int\limits_\bR \frac{dk_z}{2\pi} \left(e^{ik_z(z-z^\prime)}+e^{ik_z(z+z^\prime)}\left(1-\frac{2\kappa}{\kappa-ik_z}\right)\right)\frac{\sin\omega(t-t^\prime)}{\omega}\\
=\int\limits_{\bR^{d-2}}\frac{dk_\perp}{(2\pi)^{d-2}}e^{ik_\perp(x_\perp-x^\prime_\perp)}\int\limits_0^\infty dk_z \Psi_\kappa(z)\overline{\Psi_\kappa(z^\prime)} \frac{\sin\omega(t-t^\prime)}{\omega},
\end{gather*}
where $\Psi_\kappa(z)=\frac{1}{\sqrt{2\pi}}\left(e^{-ik_zz}-\frac{\kappa+ik_z}{\kappa-ik_z}e^{ik_z z}\right)$. One can recognize that, given two test-functions $f,f^\prime\in\mathcal{D}(\mathring{\mcM})$, this last formula entails
\begin{gather*}
	G_\kappa(f,f^\prime)=\int\limits_{\bR^2}dt \, dt^\prime\left(f,A_\kappa^{-\frac{1}{2}}\sin(A^{\frac{1}{2}}_\kappa(t-t^\prime))f^\prime\right),
\end{gather*}
where $A_\kappa$ is the self adjoint extension of $\Delta+m^2$ on $(\bH^{d-1},\delta)$ with Robin boundary conditions, see Remark \ref{Rem: Euclidean half space}. Here $\Delta$ is the Laplace operator, while $(,)$ denotes the standard $L^2$-pairing on $\bH^{d-1}$. We can apply \cite[Thm. 30]{Dappiaggi-Drago_2019} to establish the sought uniqueness.   
\end{proof}

\section{Robin Hadamard States on \texorpdfstring{$\bH^d$}{Hd}}\label{Sec: Hadamard Recursion Relations}
\label{Hadamard recursion relations for Robin fundamental solutions}

\noindent The construction in Section \ref{Sec: Existence of Robin fundamental solutions} has led to the identification of advanced and retarded propagators on $\bH^d$, see Proposition \ref{Prop: Propagators in even dimension}, but, in comparison with the content of Section \ref{Sec: Cauchy Initial Value Problem and Fundamental solutions}, there are at least three key questions left open. First of all we have not identified a suitable counterpart of the notion of Hadamard state, see Definition \ref{Def: Hadamard States}, but, even more importantly, one is left to wonder also whether it is possible to devise a local Hadamard form, similar to that in Equation \eqref{Eq: Hadamard parametrix local form}, tailored to the underlying Robin boundary condition. This is strongly connected to the last open question, namely if such a local form also exists for the advanced-minus-retarded propagator, similarly to Equation \eqref{Eq: local propagator form} on a globally hyperbolic spacetime and if these local forms can be determined up to smooth functions which can be determined as asymptotic power series whose coefficients abide by recursion relations similarly to Equations \eqref{Eq: Hadamard recursion relations on M even} and \eqref{Eq: Hadamard recursion relations on M odd}.

While, at the level of propagators, Equation \eqref{Eq: G_kappa} combined with Lemmas \ref{Lem: First Figata} and \ref{Lem: Second Figata} would allow us to give directly a positive answer, for the economy of this work, we feel worth discussing all these issues simultaneously.

\subsection{Hadamard states with Robin boundary conditions}
Devising a notion of Hadamard states on a manifold with a timelike boundary, such as half-Minkowski spacetime, which could replace Definition \ref{Def: Hadamard States} has been thoroughly studied in the past years by different research groups both for a wider class of boundary conditions and for a more general set of backgrounds. The reason lies in the great interest towards models constructed on asymptotically anti-de Sitter spacetimes which possess a conformal timelike boundary, see Remark \ref{Rem: AdS}. In this section, we shall use the same geometric language and tools employed in \cite{Dappiaggi-Marta_2021}, summarized in Appendix \ref{Sec: GBB} although a reader is encouraged to consult also \cite{Dappiaggi:2017wvj,Dappiaggi-Marta_2020,Dybalski:2018egv,Gannot_2022,Wrochna} for further details. At an heuristic level, we recall that Hadamard states capture the singular structure of a distinguished class of physically sensible bi-distributions in terms of a constraint on the underlying wavefront set and the following definition is an adaptation to the case in hand of \cite[Def. 5.1]{Dappiaggi-Marta_2021}.

\begin{definition}\label{Def: Hadamard 2-pt function}
	Let $(\bH^d,\eta)$ be half-Minkowski spacetime as per Equation \eqref{Eq: half Minkowski}. A bi-distribution $\omega_2\in\mathcal{D}^\prime(\bH^d\times\bH^d)$ is called of {\bf global Hadamard form} if its restriction to $\mathring{\bH}^d$ has the following wavefront set:
	\begin{equation}\label{Eq: Hadamard_Wavefront_Set}
		\text{WF}( \omega_2 ) = \left\{ (x,k,x^\prime,-k^\prime) \in T^*(\mathring{\bH}^d\times\mathring{\bH}^d) \setminus \{ 0 \}\; |\; (x,k) \sim (x^\prime,k^\prime)\;\textrm{and}\; k \triangleright 0  \right\},
	\end{equation}
	where $\sim$ entails that $(x,k)$ and $(x^\prime,k^\prime)$ are connected by a generalized broken bicharactersitic, see Definition \ref{Def: generalized broken bicharacteristics}, while $k\triangleright 0$ means that the co-vector $k$ at $x\in\mathring{\mathbb{H}}^d$ is future-pointing. Furthermore we call $\omega_{2,\kappa}\in\mathcal{D}^\prime(\bH^d\times\bH^d)$ a {\em Hadamard two-point function} associated to $P$ with Robin boundary conditions, if, in addition to Equation \eqref{Eq: Hadamard_Wavefront_Set}, it satisfies
	\begin{flalign}
		\label{Eq: Equation of motion Robin}(P\otimes\mathbb{I})\omega_{2,\kappa}=(\mathbb{I}\otimes P)\omega_{2,\kappa}=0,\\
		(\partial_{\mathbf{n}}\omega_{2,\kappa} + \kappa \omega_{2,\kappa}) \vert_{\partial\bH^d} = 0, \, \, \kappa\geq 0,\label{Eq: State boundary condition}
	\end{flalign}
	and, for all $f,f^\prime\in\mathcal{D}(\mathring{\bH}^d)$,
	\begin{equation}\label{Eq: constraints on 2-pt function}
		\omega_{2,\kappa}(\bar{f},f)\geq 0,\;\;\text{\bf[Positivity]}\quad\textrm{and}\quad\omega_{2,\kappa}(f,f^\prime)-\omega_{2,\kappa}(f^\prime,f)=iG_\kappa(f,f^\prime)\;\;\text{\bf[CCR]},
	\end{equation}
	where $P$ is the Klein-Gordon operator as in Equation \eqref{Eq: Cauchy initial value problem}, while $G_\kappa$ is the advanced-minus-retarded propagator as in Equation \eqref{Eq: Kappa G}. 
\end{definition}

\begin{remark}
	Observe that, in Definition \ref{Def: Hadamard 2-pt function} we are apparently discarding Dirichlet boundary conditions since they correspond to $\kappa\to \infty$. Actually, this can be included modifying accordingly Definition \ref{Def: Hadamard 2-pt function}. For the sake of conciseness we avoid going through all the details, but none of the results that we obtain fails in this case.
\end{remark}

\noindent It is also natural to find a generalization to the case in hand of Definition \ref{Def: Parametrix} and of Equation \eqref{Eq: Hadamard parametrix local form}.

\begin{definition}\label{Def: Robin-Hadamard paraemtrix}
	We call {\bf Robin-Hadamard parametrix} any $H_\kappa\in\mathcal{D}^\prime(\bH^d\times\bH^d)$ with $\kappa\geq 0$ arbitrary but fixed, such that, given any $\omega_{2,\kappa}\in\mathcal{D}^\prime(\bH^d\times\bH^d)$, it holds that
	\begin{enumerate}
		\item $(P\otimes\mathbb{I})H_\kappa=(\mathbb{I}\otimes P)H_\kappa\in C^\infty(\bH^d\times\bH^d)$,
		\item $\omega_{2,\kappa}-H_\kappa\in C^\infty(\bH^d\times\bH^d)$, see Equation \eqref{Eq: Hadamard_Wavefront_Set},
		\item for all $f,f^\prime\in\mathcal{D}(\mathring{\bH}^d)$, $H_\kappa(f,f^\prime)-H_\kappa(f^\prime,f)=iG_\kappa(f,f^\prime)$ where $G_\kappa$ is as per Equation \eqref{Eq: G_kappa}.
	\end{enumerate}
\end{definition}

\noindent We conclude this section observing that existence of Hadamard states associated to the Klein-Gordon operator has been proven in \cite{Dappiaggi-Marta_2021} for a more general class of backgrounds and of boundary conditions using a deformation argument. In the case of half-Minkowski spacetime, we can adapt the construction of Section \ref{Sec: Existence of Robin fundamental solutions} to identify a canonical choice for a Hadamard two-point function. The starting point is the observation, investigated also in \cite{Dappiaggi-Nosari_2016} that, if we consider $\omega_2\in\mathcal{D}^\prime(\bR^d\times\bR^d)$, the two-point function of the unique Poincar\'e invariant ground state associated to the Klein-Gordon operator on Minkowski spacetime, then, adapting Equation \eqref{Eq: G_D}, the restriction to $\bH^d\times\bH^d$ of
\begin{equation}\label{Eq: Dirichlet Hadamard two-point function}
	\omega_{2,D}\doteq\omega_2-\left(\iota^*_z\otimes\mathrm{id}|_{\bR^d}\right)\omega_2,
\end{equation}
is a Hadamard two-point function abiding by Dirichlet boundary conditions, where $\iota_z$ is as per Definition \ref{Def: Synge's world function}. 

\begin{remark}
	Observe that, in this analysis, we are discarding the case $d=2$ and $m=0$, since, due to infrared singularities, there does not exist a Poincar\'e invariant vacuum state for a massless real scalar field on $(\bR^2,\eta)$. The ensuing analysis could be repeated choosing another Hadamard state, but we will avoid going through all the details of this case.
\end{remark}

\begin{proposition}\label{Prop: Hadamard 2-point function with Robin boundary conditions}
	Let $(\bH^d,\eta)$ be half-Minkowski spacetime for $d\geq 2$ as per Equation \eqref{Eq: half Minkowski} and let $P$ be the Klein-Gordon operator as per Equation \eqref{Eq: Cauchy initial value problem} barring the case $d=2$ and $m=0$. Let $\omega_{2,D}$ be the Hadamard state with Dirichlet boundary conditions as per Equation \eqref{Eq: Dirichlet Hadamard two-point function}. Then the restriction to $\bH^d\times\bH^d$ of
	\begin{equation}
		\label{Eq: omega2 kappa} \omega_{2,\kappa}\doteq(\mcL_\kappa\otimes\delta)\star\omega_{2,D}\circ(\mathbb{I}\otimes \tilde{T}_\kappa),\quad\kappa\geq 0
	\end{equation}
	is a Hadamard two-point function with Robin boundary conditions in the sense of Definition \ref{Def: Hadamard 2-pt function}. Here $\tilde{T}_\kappa$ and $\mcL_\kappa$ are defined as in Equation \eqref{Eq: map tilde T kappa} and \eqref{Eq: G_kappa}, respectively.
\end{proposition}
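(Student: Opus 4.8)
The plan is to verify, one at a time, the four defining conditions of Definition~\ref{Def: Hadamard 2-pt function} for the bi-distribution $\omega_{2,\kappa}$ of Equation~\eqref{Eq: omega2 kappa}. The first three — well-posedness, the equations of motion \eqref{Eq: Equation of motion Robin}, and the Robin boundary condition \eqref{Eq: State boundary condition} — are, I expect, a straightforward repetition of the arguments of Section~\ref{Sec: BV problem}. For $f\in\mathcal{D}(\bR^d)$ the partial evaluations $\omega_{2,D}(f)$ and $\omega_{2,D}(\tilde T_\kappa f)$ lie in $\mathcal{S}'_0(\bR^d)$, because $\omega_{2,D}$ has the form $v-\iota_z^*v$ in its first argument and hence a vanishing pull-back at $z=0$; Proposition~\ref{Prop: Convolution} then makes the convolution in \eqref{Eq: omega2 kappa} meaningful and places $\omega_{2,\kappa}(f)$ in $\mathcal{S}'_\kappa(\bR^d)$, so that $\omega_{2,\kappa}$ is a well-defined element of $\mathcal{D}'(\bH^d\times\bH^d)$. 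The equations of motion follow exactly as in the proof of Theorem~\ref{Thm: Constrution of Robin}: $\omega_{2,D}$ is a bi-solution of $P$ since $\iota_z$ is an isometry, $[P,\tilde T_\kappa]=0$, and convolution with $\mcL_\kappa$ commutes with the constant-coefficient operator $P$. The boundary condition \eqref{Eq: State boundary condition} is then automatic from $\omega_{2,\kappa}(f)\in\mathcal{S}'_\kappa(\bR^d)$ and the lemma following Definition~\ref{Def: Robin tempered distributions}.

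For the positivity and CCR constraints \eqref{Eq: constraints on 2-pt function} I would pass to a mode representation. Arguing as in the proof of Proposition~\ref{Prop: Uniqueness}, but starting from the Poincar\'e vacuum two-point function $\omega_2$ of $(\bR^d,\eta)$ — whose Fourier kernel differs from that of $G_{\bR^d}$ only in that $\omega^{-1}\sin\omega(t-t')$ is replaced by $(2\omega)^{-1}e^{-i\omega(t-t')}$ — the construction \eqref{Eq: omega2 kappa} should yield, at the level of integral kernels,
\begin{equation*}
\omega_{2,\kappa}(x,x')=\int\limits_{\bR^{d-2}}\frac{dk_\perp}{(2\pi)^{d-2}}\,e^{ik_\perp(x_\perp-x'_\perp)}\int\limits_{0}^{\infty} dk_z\;\Psi_\kappa(z)\,\overline{\Psi_\kappa(z')}\,\frac{e^{-i\omega(t-t')}}{2\omega},
\end{equation*}
with $\omega^2=|k_\perp|^2+k_z^2+m^2$ and $\Psi_\kappa$ the normalised Robin modes of Proposition~\ref{Prop: Uniqueness}. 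From this representation both properties are immediate: positivity, since for $f\in\mathcal{D}(\mathring{\bH}^d)$ one gets $\omega_{2,\kappa}(\bar f,f)=\int (2\omega)^{-1}|g_f(k_\perp,k_z)|^2\,dk_\perp\,dk_z\ge 0$ with $g_f(k_\perp,k_z)=\int_{\bH^d}\bar f(x)\Psi_\kappa(z)e^{ik_\perp x_\perp}e^{-i\omega t}\,d^dx$; and CCR, since the interchange $x\leftrightarrow x'$ acts on the kernel above as complex conjugation, so $\omega_{2,\kappa}(f,f')-\omega_{2,\kappa}(f',f)=2i\,\Im\,\omega_{2,\kappa}(f,f')$, and the $k_\perp$- and $z$-sectors here coincide with those of $G_\kappa$ in Proposition~\ref{Prop: Uniqueness} while only the factor $(2\omega)^{-1}(e^{-i\omega\Delta t}-e^{i\omega\Delta t})$ survives the antisymmetrisation, so the result is $iG_\kappa(f,f')$.

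\textbf{The main obstacle is the wavefront set \eqref{Eq: Hadamard_Wavefront_Set}}, which I would obtain by a double inclusion. For the inclusion ``$\subseteq$'', I would first identify $\mathrm{WF}(\omega_{2,D})$ on $\mathring{\bH}^d\times\mathring{\bH}^d$: since $\omega_2$ is the Poincar\'e vacuum, its wavefront set is the standard Minkowski Hadamard set; $\iota_z$ is an isometry fixing the time coordinate, hence preserving time-orientation; and on $\bH^d$ a null geodesic reflects off the flat timelike boundary at most once. Hence $\mathrm{WF}(\omega_{2,D})$ restricted to the interior is exactly the union of the ``direct'' and the ``once-reflected'' future-pointing sets, which is the right-hand side of \eqref{Eq: Hadamard_Wavefront_Set} phrased through generalised broken bicharacteristics, see Appendix~\ref{Sec: GBB}. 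Next, both operations in \eqref{Eq: omega2 kappa} are pseudodifferential and therefore do not enlarge the wavefront set: $(\mcL_\kappa\otimes\delta)\star\cdot$ is, in the first $\bR^d$-factor, the Fourier multiplier with the bounded smooth symbol $(ip_z-\kappa)^{-1}\in S^0$ — this is where $\kappa>0$ enters — while $\mathbb{I}\otimes\tilde T_\kappa$ is a differential operator; thus $\mathrm{WF}(\omega_{2,\kappa})\subseteq\mathrm{WF}(\omega_{2,D})$, which gives ``$\subseteq$'' in \eqref{Eq: Hadamard_Wavefront_Set}, including the positive-frequency refinement $k\triangleright 0$. (This is also a consequence of the propagation-of-singularities theorem for generalised broken bicharacteristics, cf.\ Appendix~\ref{Sec: GBB} and \cite{Gannot_2022, Dappiaggi-Marta_2020}.) For ``$\supseteq$'', I would use the CCR established above: writing $\tau$ for the interchange of arguments, $\omega_{2,\kappa}-\tau^*\omega_{2,\kappa}=iG_\kappa$, so $\mathrm{WF}(G_\kappa)\subseteq\mathrm{WF}(\omega_{2,\kappa})\cup\tau_*\mathrm{WF}(\omega_{2,\kappa})$; the two right-hand sets are disjoint, the first carrying a future-pointing, the second a past-pointing co-vector at the first argument, so $\mathrm{WF}(\omega_{2,\kappa})=\mathrm{WF}(G_\kappa)\cap\{k\triangleright 0\}$. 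It then remains to check that $\mathrm{WF}(G_\kappa)$ on $\mathring{\bH}^d\times\mathring{\bH}^d$ is the full, both-time-orientations generalised broken bicharacteristic relation, i.e.\ that the reflected light-cone singularity is not cancelled; this follows from \eqref{Eq: Kappa G}, Example~\ref{Ex: 4D massless case BF}, and the observation — visible from the mode representation in Proposition~\ref{Prop: Uniqueness} — that the reflection coefficient $(\kappa+ik_z)/(\kappa-ik_z)$ has modulus one, so $G_\kappa$ and $G_N$ carry reflected-cone singularities of equal strength, their difference being the strictly milder $\Theta(\sigma_-)$-type term of \eqref{Eq: Kappa G}. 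Combining the two inclusions yields \eqref{Eq: Hadamard_Wavefront_Set} and completes the verification that $\omega_{2,\kappa}$ is a Hadamard two-point function with Robin boundary conditions.
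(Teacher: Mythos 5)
Your proposal is correct and, for most of the steps, coincides with the paper's own proof: well-posedness, the equations of motion and the Robin boundary condition are obtained by rerunning Theorem \ref{Thm: Constrution of Robin} with $\omega_{2,D}$ in place of $G^\pm_D$; positivity comes from the same mode representation in terms of the Robin modes $\Psi_\kappa$ used in Proposition \ref{Prop: Uniqueness}; and the upper bound on the wavefront set rests on exactly the paper's observation that $(\mcL_\kappa\otimes\delta)\star$ acts in the $z$-variable as the Fourier multiplier $(ik_z-\kappa)^{-1}$, a pseudodifferential operator of non-positive order, so no new singularities are created, while $\mathbb{I}\otimes\tilde T_\kappa$ is differential. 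You genuinely diverge in two places. First, you derive the CCR from the mode representation, whereas the paper reads it off from the decomposition $\omega_{2,\kappa}=\omega_{2,N}+2\kappa(\mcL_\kappa\otimes\delta)\star\omega_{2,r}$ via Proposition \ref{Prop: GR in terms of G+ and Gkappa}; both are sound. Second, for the reverse inclusion (exactness of the wavefront set) the paper argues directly that the correction term cannot compensate the singularities of $\omega_{2,N}$, while you go through the identity $\omega_{2,\kappa}-\tau^*\omega_{2,\kappa}=iG_\kappa$, the disjointness of the future- and past-pointing pieces, and a lower bound on $\mathrm{WF}(G_\kappa)$, in the spirit of Radzikowski and of Theorem \ref{Thm: Global-to-Local}. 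This detour is legitimate and organizes the positive-frequency refinement and the exactness in one stroke, but note that it does not eliminate the key analytic input: you still must show that $2\kappa(\mcL_\kappa\otimes\delta)\star G_r$ does not cancel the reflected-cone singularity of $G_N$, which is the same ``strictly milder correction'' claim the paper invokes for $\omega_{2,N}$; your justification should be phrased, uniformly in $d$ and $m$, via the order $-1$ smoothing in $z$ (valid since $k_z\neq 0$ along the reflected directions between interior points), rather than via the $d=4$ massless formula of Example \ref{Ex: 4D massless case BF}. Two minor points: the symbol $(ip_z-\kappa)^{-1}$ is of order $-1$, not merely bounded — and this is precisely what makes the no-cancellation step work — and your vacuum kernel conventions ($e^{-i\omega\Delta t}/2\omega$) differ from the paper's by a sign and a factor, which is harmless.
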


\begin{proof}
Repeating step by step the proof of Theorem \ref{Thm: Constrution of Robin} replacing $G^\pm_D$ with $\omega_{2,D}$ we can infer that $\omega_{2,\kappa}$ is a bi-solution of the equation of motion and using Proposition \ref{Prop: GR in terms of G+ and Gkappa} we can also infer that $\omega_{2,\kappa}(f,f^\prime)-\omega_{2,\kappa}(f^\prime,f)=iG_\kappa(f,f^\prime)$ for all $f,f^\prime\in\mathcal{D}(\mathring{\bH}^d)$. In order to control its wavefront set, we observe that the same reasoning used in Proposition \ref{Prop: GR in terms of G+ and Gkappa} entails that
\begin{equation}
   \label{Eq: omega2 Robin} \omega_{2,\kappa}=\omega_{2,N}+2\kappa\left(\mcL_\kappa\otimes\delta\right)\star\omega_{2,r},
\end{equation}
where 
\begin{equation}
    \label{Eq: omega2 Neumann}
    \omega_{2,N}=\omega_2+\left(\iota^*_z\otimes\mathrm{id}|_{\bR^d}\right)\omega_2,
\end{equation}
$\iota_z$ being as per Definition \ref{Def: Synge's world function}. At the same time $\omega_{2,r}\doteq\left(\iota^*_z\otimes\mathrm{id}|_{\bR^d}\right)\omega_2$. Working on $\bR^d$, we observe that $\text{WF}(\omega_{2,\kappa})\subset \text{WF}(\omega_{2,N})\cup\text{WF}(\left(\mcL_\kappa\otimes\delta\right)\star\omega_{2,r})$. Since $\iota_z$ is a discrete isometry of Minkowski spacetime, we can infer that $\text{WF}(\omega_{2,N})=\text{WF}(\omega_{2,r}) \cup \text{WF}(\omega_2)$, which in turn is determined by Equation \eqref{Eq: omega2 WFset}. If we consider that, $\left(\mcL_\kappa\otimes\delta\right)\star\omega_{2,r}$, on the one hand we observe that the convolution is well-defined being $\omega_{2,r}$ a tempered distribution, while, on the other hand $\mcL_\kappa$ acts in Fourier space as the multiplication by $\frac{1}{ik_z-\kappa}$. This is a smooth function and in the $z-$variable it is a pseudodifferential operator of order $-1$. Hence, by standard microlocal analysis techniques, $\text{WF}(\left(\mcL_\kappa\otimes\delta\right)\star\omega_{2,r})\subset\text{WF}(\omega_{2,r})$. Since $\left(\mcL_\kappa\otimes\delta\right)\star\omega_{2,r})$ cannot compensate any of the singularities of $\omega_{2,N}$, restricting to $\bH^d\times\bH^d$ and comparing with Equation \eqref{Eq: Hadamard_Wavefront_Set} yields the sought result. We are left with discussing positivity of $\omega_{2,\kappa}$ and, to this end, we employ the same spectral techniques used in Proposition \ref{Prop: Uniqueness}. More precisely, recalling that, working at the level of integral kernel, the two-point function of the Poincar\'e vacuum reads
$$\omega_2(x,x^\prime)=\int\limits_{\bR^{d-2}}\frac{dk_\perp}{(2\pi)^{d-2}}e^{ik_\perp(x_\perp-x^\prime_\perp)}\int\limits_\bR \frac{dk_z}{2\pi} e^{ik_z(z-z^\prime)}\frac{e^{i\omega(t-t^\prime)}}{\omega},$$
where $x=(t,x_\perp,z)$ and similarly $x^\prime = (t', x'_{\perp}, z')$, while $\omega^2=|k_\perp|^2+k^2_z+m^2$. Following the same steps as in in Proposition \ref{Prop: Uniqueness}, we end up with
\begin{gather*}
	\omega_{2,\kappa}(x,x^\prime)=\int\limits_{\bR^{d-2}}\frac{dk_\perp}{(2\pi)^{d-2}}e^{ik_\perp(x_\perp-x^\prime_\perp)}\int\limits_\bR dk_z \left(e^{ik_z(z-z^\prime)}+e^{ik_z(z+z^\prime)}\left(1-\frac{2\kappa}{\kappa-ik_z}\right)\right)\frac{e^{i\omega(t-t^\prime)}}{\omega}=\\
	=\int\limits_{\bR^{d-2}}\frac{dk_\perp}{(2\pi)^{d-2}}e^{ik_\perp(x_\perp-x^\prime_\perp)}\int\limits_0^\infty dk_z \Psi_\kappa(z)\overline{\Psi_\kappa(z^\prime)}\frac{e^{i\omega(t-t^\prime)}}{\omega},
\end{gather*}
where $\Psi_\kappa(z)=\frac{1}{\sqrt{2\pi}}\left(e^{-ik_zz}-\frac{ik_z+\kappa}{\kappa-ik_z}e^{ik_z z}\right)$. From this expression one can infer by direct inspection that $\omega_{2, \kappa}(\bar{f},f)\geq 0$ for every $f\in\mathcal{D}(\bH^d)$. 
\end{proof}

\subsection{Local Hadamard form and Recursion Relations}

In this section we investigate how to extend the notion of local Hadamard state which is essentially encoded on a globally hyperbolic spacetime in Equation \eqref{Eq: Hadamard parametrix local form}. Following the same strategy as in the previous sections, we give a formal definition which is valid in full generality and, subsequently, we investigate it on half-Minkowski spacetime.

\begin{definition}\label{Def: Local Hadamard State}
Let $(\bH^d,\eta)$ be half-Minkowski spacetime and let $\mathcal{O} \subseteq \bH^d$ denote a geodesically convex open subset. We call $\omega_{2,\kappa}\in\mathcal{D}^\prime(\bH^d\times\bH^d)$ a {\bf two-point function of local Hadamard form} associated to the Klein-Gordon operator $P$ with Robin boundary conditions, if, in addition to Equations \eqref{Eq: Equation of motion Robin}, \eqref{Eq: State boundary condition} and \eqref{Eq: constraints on 2-pt function}
\begin{itemize}
	\item[\ding{104}] $\omega_{2,\kappa}|_{\mathcal{O}\times\mathcal{O}}(x,x^\prime)$, the integral kernel of $\omega_{2,\kappa}$, is as in Equation \eqref{Eq: Hadamard parametrix local form} if $\mathcal{O}\cap\partial\bH^d=\emptyset$,
	\item[\ding{104}] if $\mathcal{O}\cap\partial\bH^d\neq\emptyset$, there exists $U,U',V,V',W\in C^\infty(\mathcal{O}\times\mathcal{O})$ such that, assuming $z+z^\prime>0$, 
	\begin{gather} 
		\omega_{2,\kappa}(x,x') = \tilde{H}_\kappa(x,x^\prime)+W(x,x^\prime)=\notag\\
		= \lim\limits_{\epsilon\to 0^+}\frac{U(x,x^\prime)}{\sigma_\epsilon^{\frac{d-2}{2}}} + \delta_d V(x,x^\prime) \ln \left(\frac{\sigma_\epsilon}{\lambda^2}\right) + \frac{U'(x,x^\prime)}{\sigma_{-,\epsilon}^{\frac{d-2}{2}}} + \delta_d V'(x,x^\prime) \ln \left(\frac{\sigma_{-,\epsilon}}{\lambda^2}\right)+W(x,x'),\label{Eq: Ansatz on the Robin parametrix}
	\end{gather}
	where $\lambda\in \mathbb{R}\setminus \{0\}$, $\sigma_\epsilon$ and similarly $\sigma_{-\epsilon}$ are defined as per Equation \eqref{Eq: local Hadamard form}, and
	\begin{equation*}
		\delta_d =
		\begin{cases}
			1 \, \, \text{if} \, \, d \, \, \text{is even} \\
			0\, \, \text{if} \, \, d \, \, \text{is odd}
		\end{cases}.
	\end{equation*}
\end{itemize}
The bi-distribution is $\tilde{H}_\kappa\in\mathcal{D}^\prime(\mathcal{O}\times\mathcal{O})$ is called {\bf local Robin Hadamard parametrix}.
\end{definition}

\noindent To start with, we prove that the two-point correlation function constructed in Definition \ref{Def: Local Hadamard State} is also locally of Hadamard form.

\begin{theorem}\label{Thm: State is of local Hadamard form}
	The two-point correlation function $\omega_{2,\kappa}\in\mathcal{D}^\prime(\bH^d\times\bH^d)$ as in Equation \eqref{Eq: 
    omega2 kappa} is of local Hadamard form. 
\end{theorem}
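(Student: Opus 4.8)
The natural starting point is the decomposition already obtained in Proposition~\ref{Prop: Hadamard 2-point function with Robin boundary conditions}, namely Equation~\eqref{Eq: omega2 Robin},
$\omega_{2,\kappa}=\omega_{2,N}+2\kappa\left(\mcL_\kappa\otimes\delta\right)\star\omega_{2,r}$,
with $\omega_{2,N}$ as in Equation~\eqref{Eq: omega2 Neumann} and $\omega_{2,r}=\left(\iota^*_z\otimes\mathrm{id}|_{\bR^d}\right)\omega_2$. Since the algebraic and boundary conditions \eqref{Eq: Equation of motion Robin}, \eqref{Eq: State boundary condition} and \eqref{Eq: constraints on 2-pt function} of Definition~\ref{Def: Local Hadamard State} were all verified in that proposition, the whole content left to prove is the local form of the integral kernel of $\omega_{2,\kappa}$ in an arbitrary geodesically convex $\mathcal{O}$. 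As throughout Section~\ref{Sec: Hadamard Recursion Relations}, I would work on $\bR^d\times\bR^d$ and restrict to $\bH^d\times\bH^d$ at the end, using that $\omega_{2,N}$ and $\omega_{2,r}$ are built from the translation-invariant Minkowski two-point function $\omega_2$.

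\textbf{Bulk case.} If $\mathcal{O}\cap\partial\bH^d=\emptyset$, then after shrinking $\mathcal{O}$ I may assume $\sigma_-$ does not vanish on $\mathcal{O}\times\mathcal{O}$: this is legitimate because the coinciding-point limit of $\sigma_-$ is $-2z^2<0$ for $z>0$, so by continuity $\sigma_-<0$ on $\mathcal{O}\times\mathcal{O}$ for $\mathcal{O}$ small. The singular support of $\omega_{2,r}$ is contained in $\{\sigma_-=0\}$, since $\iota_z$ is a discrete isometry of $(\bR^d,\eta)$; hence, by the wavefront-set inclusion $\mathrm{WF}\!\big((\mcL_\kappa\otimes\delta)\star\omega_{2,r}\big)\subseteq\mathrm{WF}(\omega_{2,r})$ established in Proposition~\ref{Prop: Hadamard 2-point function with Robin boundary conditions}, the correction term is smooth on $\mathcal{O}\times\mathcal{O}$, and so is the reflected summand of $\omega_{2,N}$. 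Therefore $\omega_{2,\kappa}|_{\mathcal{O}\times\mathcal{O}}=\omega_2|_{\mathcal{O}\times\mathcal{O}}+(\text{smooth})$, and since $\omega_2$ on Minkowski has the standard local Hadamard form \eqref{Eq: local Hadamard form} with the explicit flat Hadamard coefficients, so does $\omega_{2,\kappa}$.

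\textbf{Boundary case.} Assume $\mathcal{O}\cap\partial\bH^d\neq\emptyset$ and restrict, as in Definition~\ref{Def: Local Hadamard State}, to $z+z'>0$. I would treat the three summands of Equation~\eqref{Eq: omega2 Robin} separately. First, $\omega_2$ supplies, modulo a smooth remainder, the terms $U/\sigma_\epsilon^{(d-2)/2}+\delta_d V\ln(\sigma_\epsilon/\lambda^2)$ of the ansatz \eqref{Eq: Ansatz on the Robin parametrix}, with $U,V$ the flat Hadamard series as in Equation~\eqref{Eq: U power series}. Second, $\omega_{2,r}=(\mathbb{I}\otimes\iota_z)^*\omega_2$: since $\iota_z$ is an isometry which acts trivially on the time coordinate, $t(\iota_z x')=t(x')$, pulling back the local Hadamard form of $\omega_2$ gives, modulo smooth, $[U\!\circ\!(\mathbb{I}\otimes\iota_z)]/\sigma_{-,\epsilon}^{(d-2)/2}+\delta_d[V\!\circ\!(\mathbb{I}\otimes\iota_z)]\ln(\sigma_{-,\epsilon}/\lambda^2)$, where I use $\sigma\circ(\mathbb{I}\otimes\iota_z)=\sigma_-$ and that the $i\epsilon$-prescription is unchanged; this is precisely the reflected part of \eqref{Eq: Ansatz on the Robin parametrix}. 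Third, the correction term: on $\{z+z'>0\}$ the hypersurface $\{\sigma_-=0\}$ is smooth (we are away from the vertex $x=\iota_z x'$, located at $z+z'=0$), along which $\omega_{2,r}$ is conormal of Hadamard type; convolution with $\mcL_\kappa$ in the $z$-variable is, by Equation~\eqref{Eq: G_kappa} and Definition~\ref{Def: Bondurant-Fulling map}, inversion of $\partial_z+\kappa$, i.e.\ a classical pseudodifferential operator of order $-1$ (symbol $1/(ip_z-\kappa)$). It therefore maps this conormal class into the conormal class of strictly lower order along $\{\sigma_-=0\}$, so that — by the structure theory of such distributions, or equivalently by a direct computation in the spirit of Example~\ref{Ex: 4D massless case BF} — the correction term is of the form $[\text{smooth}]/\sigma_{-,\epsilon}^{(d-2)/2}+\delta_d[\text{smooth}]\ln(\sigma_{-,\epsilon}/\lambda^2)+[\text{smooth}]$ with the same prescription. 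Collecting the three contributions, defining $U',V'$ as the sums of the reflected coefficients from the second and third summands (modulo a cutoff in $z+z'$ to make them elements of $C^\infty(\mathcal{O}\times\mathcal{O})$) and $W$ as the sum of the smooth remainders, yields Equation~\eqref{Eq: Ansatz on the Robin parametrix}.

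\textbf{Main obstacle.} The delicate step is the last one: one must verify not only that $2\kappa(\mcL_\kappa\otimes\delta)\star\omega_{2,r}$ is singular \emph{only} along $\{\sigma_-=0\}$ — which is the wavefront-set inclusion of Proposition~\ref{Prop: Hadamard 2-point function with Robin boundary conditions} — but that its singularity is genuinely of \emph{Hadamard type} in $\sigma_{-,\epsilon}$, with a smooth coefficient and the \emph{correct} $i\epsilon$-prescription. The most transparent route is the explicit representation, working at the level of integral kernels as in the proof of Proposition~\ref{Prop: GR in terms of G+ and Gkappa}: by translation invariance of $\omega_2$ one finds
$\big[(\mcL_\kappa\otimes\delta)\star\omega_{2,r}\big](x,x')=e^{-\kappa s}\int_{-\infty}^{s}e^{\kappa u}\,\omega_2(\underline x-\underline x',u)\,\mathrm{d}u$
with $s=z+z'$, the integral converging at $-\infty$ since there $\omega_2$ is evaluated deep in the spacelike region; recognising that $\tfrac12\big[(\Delta t)^2-|\Delta x_\perp|^2-u^2\big]$ equals $\sigma_-$ at $u=s$, this is an antiderivative in the last variable of the Hadamard kernel of $\omega_2$, and antidifferentiation lowers the order of the conormal singularity by one while preserving the $i\epsilon$-prescription, which produces exactly the advertised structure. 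It is also here that the hypothesis $z+z'>0$ is essential, as it keeps us away from the vertex $x=\iota_z x'$, so that $\{\sigma_-=0\}$ is a smooth hypersurface and no additional vertex-type singularities arise.
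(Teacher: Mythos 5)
Your strategy coincides with the paper's: you use the decomposition $\omega_{2,\kappa}=\omega_{2,N}+2\kappa(\mcL_\kappa\otimes\delta)\star\omega_{2,r}$ of Proposition \ref{Prop: Hadamard 2-point function with Robin boundary conditions}, let $\omega_{2,N}$ supply the $\sigma$- and $\sigma_-$-Hadamard terms, and reduce everything to showing that the $z$-convolution of the reflected Hadamard kernel with $\mcL_\kappa$ produces only $\sigma_-$-type singularities of lower order with smooth coefficients. The divergence is in how this central step is justified, and this is where your proposal stops short of a proof. First, $1/(ip_z-\kappa)$ is a symbol of order $-1$ only in the $p_z$-variable; it does not decay in the full covariable, so the order-lowering claim is valid only microlocally near the conormals of $\{\sigma_-=0\}$, whose $z$-component is $z+z'$ --- that, rather than mere smoothness of the hypersurface, is where $z+z'>0$ enters. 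Second, and more importantly, ``conormal of one order lower'' does not by itself give the specific structure required by Equation \eqref{Eq: Ansatz on the Robin parametrix}: a finite sum of (half-)integer powers of $\sigma_{-,\epsilon}$ plus a logarithm in even dimensions, with coefficients that are \emph{smooth functions} of $(x,x')$ and the correct $i\epsilon$-prescription. Establishing that polyhomogeneous expansion is precisely the content of the paper's proof, which performs the change of variables $s\mapsto\alpha=\tau^2-(z+z'+s)^2$, inserts a cutoff near the endpoint singularity, and uses principal-value/logarithm identities and repeated integration by parts, separately for the $\ln\sigma_-$ and $\sigma_-^{\frac{2-d}{2}}$ terms and for $d$ even and odd. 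Your ``antidifferentiation lowers the conormal order while preserving the $i\epsilon$'' names the right mechanism but asserts the outcome instead of deriving it.

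Two further points deserve attention. Your explicit representation $e^{-\kappa s}\int_{-\infty}^{s}e^{\kappa u}\,\omega_2(\underline{x}-\underline{x}',u)\,du$ is the literal consequence of $\mcL_\kappa=\Theta(z)e^{-\kappa z}$ in Equation \eqref{Eq: G_kappa}, whereas the computations in Lemma \ref{Lem: First Figata}, Proposition \ref{Prop: GR in terms of G+ and Gkappa} and the paper's proof of this theorem place the image sources at $u=z+z'+s\ge z+z'$ (the paper is not fully consistent about this sign). With your direction the $u$-integration passes through $u=0$, so the singular locus of the integrand reaches arbitrarily small $|u|$ and one must additionally check that no spurious singularity is generated on the boundary-null set $(t-t')^2=|x_\perp-x_\perp'|^2$, where the two branches $u=\pm|\tau|$ collide; the contributions do cancel, but this requires an argument, while in the paper's parametrization the integration range never meets that configuration. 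Finally, in your bulk case the phrase ``after shrinking $\mathcal{O}$'' does real work, since a large geodesically convex $\mathcal{O}$ avoiding $\partial\bH^d$ may still contain pairs with $\sigma_-=0$; this caveat is, admittedly, shared by the paper's own formulation.
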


\begin{proof}
	We observe that Equations \eqref{Eq: Equation of motion Robin}, \eqref{Eq: State boundary condition} and \eqref{Eq: constraints on 2-pt function} hold true per construction. Hence, it suffices to prove Equation \eqref{Eq: Ansatz on the Robin parametrix} with $\mathcal{O}=\mathbb{H}^d$, since, if this is the case, then we can infer the sought conclusion by restriction to any other admissible subset $\mathcal{O}$. To this end, we consider the representation of $\omega_{2,\kappa}$ as in Equation \eqref{Eq: omega2 kappa}. Using Equation \eqref{Eq: omega2 Neumann} we can infer that $\omega_{2,N}$ is already of the desired form since the Poincar\'e vacuum $\omega_2$ is of local Hadamard form on the whole Minkowski spacetime, see Equation \eqref{Eq: local Hadamard form}. We can focus our attention directly on $(\mcL_\kappa\otimes\delta)\star\omega_{2,r}$ where, working at the level of integral kernels
	$$\omega_{2,r}(x,x^\prime)= \lim_{\epsilon \rightarrow 0^+} \frac{U(\iota_z(x),x')}{4 \pi \sigma_{-,\epsilon}^{\frac{d-2}{2}}(x,x')} + \delta_d V(\iota_z(x),x') \ln \frac{\sigma_{-,\epsilon}(x,x')}{\lambda^2} + \tilde{W}(\iota_z(x),x'), \, \, \lambda \in \mathbb{R} \setminus \{0\}
	$$
	where $\delta_d=0$ if $d$ is odd and $\delta_d=1$ if $d$ is even. Since $(\mcL_\kappa\otimes\delta)\star W$ is smooth being such $W$ itself, we can focus our attention on the singular contributions due to the logarithm and to the inverse powers of $\sigma_-$. We analyze these terms separately starting from the former. To simplify the notation we shall avoid writing explicitly the $\epsilon$-regularization and we shall work at the level of integral kernels. In addition we employ the notation $x=(\underline{x},z)$ and we write for simplicity $2\sigma_-(x,x^\prime)=\tau^2-(z+z^\prime)^2$ where $\tau^2=(t-t^\prime)^2-\sum\limits_{i=1}^{d-2}(x_i-x^\prime_i)^2$.
	
	\vskip .2cm
	
	\noindent{\em The $\ln\sigma_-$ term:} We are interested in evaluating convolutions of the form
	\begin{equation}\label{Eq: Aux12}
	(\mcL_\kappa\otimes\delta)\star (V\ln\sigma_-)=\int\limits_0^\infty e^{-\kappa s}V(\underline{x},z+s,x^\prime)\ln|\tau^2-(z+z^\prime+s)^2|ds,
	\end{equation}
	where the factor $\frac{1}{2}$ from the Synge world function has been omitted since it gives rise to a smooth contribution. We divide the analysis of the integral in two parts depending on $\tau^2$.
	\begin{itemize}
		\item[\ding{104}] If $\tau^2\leq 0$ and if $z+z^\prime>0$, the integrand is smooth and therefore the integral contributes to $W$ in Equation \eqref{Eq: Ansatz on the Robin parametrix}. 
		\item[\ding{104}] If $\tau^2>0$, then the only singular contribution comes from the points for which $\tau^2=(z+z^\prime+s)^2$ or, equivalently $s=s_*=-(z+z^\prime)+|\tau|$. Observe that we can consider the coordinate change $s\mapsto\alpha(s)=\tau^2-(z+z^\prime+s)^2$. This is well defined since $\partial_s\alpha(s)=-2(z+z^\prime+s)$ which is non vanishing since $s\geq 0$ and $z+z^\prime>0$. In addition $\alpha\in(-\infty,2\sigma_-]$ and the integral under scrutiny can be therefore rewritten as
		$$\int\limits_{-\infty}^{2\sigma_-}F(\alpha)\ln|\alpha| d\alpha,$$
		where $F(\alpha)=e^{-\kappa s(\alpha)}V(\underline{x},z+s(\alpha),x^\prime)\frac{\partial s}{\partial\alpha}$. Observing that $F$ is smooth and rapidly decaying at $-\infty$, we can introduce the primitive 
		$$K(x)=\int\limits_{-\infty}^{x}F(\alpha)d\alpha,$$
    which is also rapidly decreasing as $x\to-\infty$. A direct computation yields that
    \begin{flalign*}
        \int\limits_{-\infty}^{2\sigma_-}F(\alpha)\ln|\alpha| d\alpha &= \int\limits_{-\infty}^{2\sigma_-}F(\alpha)(\ln|\alpha| - \ln|2\sigma_-| + \ln|2\sigma_-|) d\alpha \\ &= K(2\sigma_-)\ln|2\sigma_-|+\int\limits_{-\infty}^{2\sigma_-}F(\alpha) \ln \left|\frac{\alpha}{2\sigma_-}\right| d\alpha.
    \end{flalign*}
    The second term in this last identity is locally integrable and it contributes with a logarithmic singularity.
	\end{itemize}
	
	\vskip .2cm
	
	\noindent{\em The $\sigma_{-}^{\frac{2-d}{2}}$ term:} We are interested in evaluating convolutions which read formally at the level of integral kernel
\begin{equation}\label{Eq: Aux14}
		(\mcL_\kappa\otimes\delta)\star (U\sigma_{-}^{\frac{2-d}{2}})=2^{\frac{d-2}{2}}\int\limits_0^\infty e^{-\kappa s}U(\underline{x},z+s,x^\prime)(\tau^2-(z+z^\prime+s)^2)^{\frac{2-d}{2}}ds.
	\end{equation}
	This requires to split the analysis between $d$ even and odd.
	\begin{itemize}
		\item[\ding{104}] If $d$ is even, setting $n=\frac{d-2}{2}\in\mathbb{N}$, we consider 
		$$(\mcL_\kappa\otimes\delta)\star (U\sigma_{-}^{-n}),$$
	where $\sigma_-^{-n}$ should be interpreted as the distribution $\text{PV}(\sigma_-^{-n}) + c_n\delta^{(n-1)}(\sigma_-)$ where $c_n\in\mathbb{C}$ is an irrelevant numerical factor. As before we observe that the integral yields a smooth contribution if $\tau^2<0$ or if $\tau^2=0$ and $z+z^\prime>0$. Therefore, we consider $\tau^2>0$ and we introduce a cut-off $\chi(s)$ supported in an arbitrarily small neighbourhood around $s_*=|\tau|-z-z^\prime$, which we denote by $(s_*-\epsilon,s_*+\epsilon)$, $\epsilon>0$. Adding in Equation \eqref{Eq: Aux14} a factor $1=\chi+(1-\chi)$ yields
	$$	(\mcL_\kappa\otimes\delta)\star (U\sigma_{-}^{\frac{2-d}{2}})=	\chi(\mcL_\kappa\otimes\delta)\star (U\sigma_{-}^{\frac{2-d}{2}})+	(1-\chi)(\mcL_\kappa\otimes\delta)\star (U\sigma_{-}^{\frac{2-d}{2}}).$$
	We start focusing on the first term. As in the analysis of the $\log$-term, we can consider the change of variables $s\mapsto\alpha(s)\doteq\tau^2-(z+z^\prime+s)^2\in(-\infty,2\sigma_-]$ which is well-defined on the whole domain of $s$, namely $(0,\infty)$. Working at the level of integral kernels this amounts to considering up to multiplicative numerical factors
	\begin{equation}\label{Eq: Aux15}
	\int\limits_{\alpha_-}^{\alpha_+}\chi(s(\alpha)) e^{-\kappa s(\alpha)}U(\underline{x},z+s(\alpha),x^\prime)\frac{ds}{d\alpha}\alpha^{-n}d\alpha=\int\limits_{\alpha_-}^{\alpha_+}G_\chi(\alpha)\alpha^{-n},
	\end{equation}
	where $\alpha_\pm=\pm 2|\tau|\epsilon-\epsilon^2$, while $G_\chi(\alpha)=\chi(s(\alpha))e^{-\kappa s(\alpha)}U(\underline{x},z+s(\alpha),x^\prime)\frac{ds}{d\alpha}\equiv\chi(\alpha)\tilde{G}(\alpha)$. Without loss of generality we choose $\epsilon$ so that $2|\tau|\epsilon-\epsilon^2>0$. The contribution due to $\delta^{(n-1)}(\sigma_-)=\delta^{(n-1)}(\frac{\alpha}{2})$ is smooth since the above integral is a formal expression for 
	$$\left(\delta^{(n-1)}\left(\frac{\alpha}{2}\right),G_\chi(\alpha)\right),$$
	where $G$ is a smooth function compactly supported on $(\alpha_-,\alpha_+)$. We can focus on the principal value recalling the distributional identity valid for any $\varphi\in C^\infty_0(\bR)$:
$$\left(\text{PV}(\alpha^{-n}),\varphi\right)=\frac{(-1)^{n-1}}{(n-1)!}\left(\frac{d^n\ln|\alpha|}{d\alpha^n},\varphi\right)=-\frac{1}{(n-1)!}\left(\ln|\alpha|,\frac{d^n\varphi}{d\alpha^n}\right).$$
The role of $\varphi$ is played by $G(\alpha)$ and we end up with
	$$
	\left( \mathrm{PV}\,\frac{1}{\alpha^n},\,G_\chi(\alpha) \right)
	=-\frac{1}{(n-1)!}\left(\ln|\alpha|,\frac{d^nG_\chi}{d\alpha^n}\right),$$
	which gives a smooth contribution. We are left with the analysis of the second term containing $(1-\chi)$. This yields two integrals, the first being 
	\begin{equation}\label{Eq: Aux18}
		\int\limits_{-\infty}^{\alpha_-}(1-\chi(s(\alpha))) e^{-\kappa s(\alpha)}U(\underline{x},z+s(\alpha),x^\prime)\frac{ds}{d\alpha}\alpha^{-n}d\alpha,
	\end{equation}
which yields a smooth contribution since the integrand decays exponentially at $\alpha\to-\infty$. Observe that the dependence on $\alpha_-$ is fictitious since it is reabsorbed by the one due to Equation \eqref{Eq: Aux15}. the second integral instead reads
\begin{equation}\label{Eq: AuxFinal}
	\int\limits_{\alpha_+}^{2\sigma_-}(1-\chi(s(\alpha))) e^{-\kappa s(\alpha)}U(\underline{x},z+s(\alpha),x^\prime)\frac{ds}{d\alpha}\alpha^{-n}d\alpha=\int\limits_{\alpha_+}^{2\sigma_-} \frac{G_{1-\chi}(\alpha)}{\alpha^n}d\alpha.
\end{equation}
Integrating repeatedly by parts with respect to the factor $\alpha^{-n}$ entails that this integral can be written as a linear combination of the from $\sum\limits_{k=1}^{n-1}\frac{c_k(x,x^\prime)}{\sigma_-^k}+c_0(x,x^\prime)\ln|\sigma_-|$ where the coefficients $c_k$ are smooth functions in both $x$ and $x^\prime$ for all $k=0,\dots,n-1$. As in the previous case the contribution form $\alpha_+$ is reabsorbed by the one due to Equation \eqref{Eq: Aux15}.
\item[\ding{104}] If $d$ is odd  we are interested in evaluating convolutions which read formally at the level of integral kernel
\begin{equation}
	(\mcL_\kappa\otimes\delta)\star (U\sigma_{-}^{\frac{2-d}{2}})=2^{\frac{d-2}{2}}\int\limits_0^\infty e^{-\kappa s}U(\underline{x},z+s,x^\prime)(\tau^2-(z+z^\prime+s)^2)^{\frac{2-d}{2}}ds.
\end{equation}
This is structurally identical to Equation \eqref{Eq: Aux14} and it can be dealt with in a similar way, hence we discuss it succinctly. Introducing the same cut-off function and coordinate change as in the case of $d$ even we have to deal with two contributions, one due to $\chi$ and one due to $(1-\chi)$. Working at the level of distributions, the first one reads
$$\left(\alpha^{\frac{2-d}{2}},G_\chi(\alpha)\right)=\left(\alpha^{\frac{2-d}{2}},G_\chi(\alpha)\right)=\left(C_d\frac{d^N|\alpha|^{\frac{1}{2}}}{d\alpha^N},G_\chi(\alpha)\right)=(-1)^NC_d\left(|\alpha|^{\frac{1}{2}},\frac{d^NG(\alpha)}{d\alpha^N}\right),$$
where $N=\frac{d-1}{2}$ while $C_d=\frac{\Gamma(\frac{3}{2}-N)}{\Gamma(\frac{3}{2})}$, $\Gamma$ being the Euler Gamma function. The second contribution is structurally identical to Equations \eqref{Eq: Aux18} and \eqref{Eq: AuxFinal} and, the last one in particular, yields a term of the form $\sum\limits_{k=1}^{\frac{d-1}{2}}\frac{c_k(x,x^\prime)}{\sigma_-^{k-\frac{1}{2}}}$ where $c_k(x,x^\prime)$ is smooth for all $k$.
\end{itemize}
Putting together all these data, the sought conclusion descends.
\end{proof}

\noindent This theorem shows that, when working with $\omega_{2,\kappa}$ the leading singularity is the one due to the Neumann part $\omega_{2,N}$, while the boundary condition affects only the other terms. In the following we shall prove that the functions $U,U',V,V'$ can be determined from the Klein-Gordon operator in terms of a power series either in $\sigma$ or in $\sigma_-$ whose coefficients abide by suitable recursion relations, which reflect the behaviour highlighted by Theorem \ref{Thm: State is of local Hadamard form}. In Section \ref{Sec: Comparison of Hadamard states} we shall compare Definitions \ref{Def: Hadamard States} and \ref{Def: Local Hadamard State} investigating to the case under scrutiny the result in \cite{Radzikowski_1996} valid on globally hyperbolic spacetimes.

\begin{remark}
	It is worth highlighting that in Equation \eqref{Eq: Ansatz on the Robin parametrix} the choice of the symbols $U$ and $V$ is not accidental since they will ultimately turn out to be structurally the same as those in Equation \eqref{Eq: Hadamard parametrix local form}. For this reason we felt unnecessary introducing additional symbols. 
\end{remark}

\subsubsection{Hadamard Recursion Relations: Even Dimensional Case}
\label{Sec: Hadamard Recurstion Relations even case}
We shall begin our analysis by focusing on Equation \eqref{Eq: Hadamard parametrix local form} in even dimensions. Mimicking the standard construction of the Hadamard recursion relations, see {\it e.g.} \cite{Garabedian_1964, Decanini:2005eg}, we consider on any geodesically convex opens subset $\mathcal{O}\subseteq\bH^d$, the following expansions
\begin{equation} \label{Eq: expansion of U,V,U',V'}
    \begin{cases}
    U(x,x^\prime) = \sum_{j=0}^{\frac{d-4}{2}} u_j (\underline{x},z, \underline{x}',z') \sigma^j \\
        V(x,x^\prime) = \sum_{j=0}^\infty v_j (\underline{x},z,\underline{x}',z') \sigma^j \\
        U'(x,x^\prime) = \sum_{j=0}^{\frac{d-4}{2}} u'_j (\underline{x},z, \underline{x}',z') \sigma_-^j \\
        V'(x,x^\prime) = \sum_{j=0}^\infty v'_j (\underline{x},z,\underline{x}',z') \sigma_-^j
    \end{cases},
\end{equation}
where the sum of the expansions of $U$ and $U'$ runs over a finite number of indices, since higher terms would yield a smooth contribution in Equation \eqref{Eq: Ansatz on the Robin parametrix} which could be reabsorbed in the definition of $W$.

Note that the action of the Klein-Gordon operator $P$ on the expansions of $U$ and $U'$ is formally the same, provided we replace $\sigma$ with $\sigma_-$ and the coefficients $\{u_j\}_{j=0}^{\frac{d-4}{2}}$, with their primed counterparts $\{u'_j\}_{j=0}^{\frac{d-4}{2}}$. An analogous statement holds true for the contributions $V \ln(\sigma)$ and $V' \ln(\sigma_-)$. Hence, it suffices to compute
\begin{flalign} 
\label{Eq: PU}
       \notag P\big[U(\sigma) \sigma^{-\frac{d-2}{2}} \big] &= (2-d) \sigma^\mu \partial_\mu u_0 \hspace{0.1cm} \sigma^{-\frac{d}{2}} \\ &+ \sum_{j=0}^{\frac{d}{2}-2} \big[Pu_j + (2j+4-d)\sigma^\mu \partial_\mu u_{j+1} + (j+1) (2j+4-d) u_{j+1} \big] \sigma^{j-\frac{d}{2}+1},
\end{flalign}
and 
\begin{flalign}
      \notag   P\big[ V(\sigma) \ln(\sigma) \big] &=  (PV)\ln(\sigma) + 2\eta^{\mu\nu}\partial_\mu V\partial_\nu \ln(\sigma) + V \Box \ln(\sigma) \\ \notag
        &= \big[2\sigma^\mu \partial_\mu v_0 + (d-2)v_0 \big] \sigma^{-1} + \\
        &+ \sum_{j=0}^\infty \big[Pv_j + 2(j+1)\sigma^\mu \partial_\mu v_{j+1} + (j+1)(d+2j)v_{j+1} \big] \sigma^j \ln(\sigma),
\end{flalign}
the other two contributions having the same functional dependence, replacing $\sigma$ with $\sigma_-$. Combining all terms together, we get
\begin{flalign}
\label{Eq: PHR}
 \notag P\tilde{H}_\kappa(\sigma, \sigma_-) &= (2-d) \sigma^\mu \partial_\mu u_0 \hspace{0.1cm} \sigma^{-\frac{d}{2}} \\ \notag
         &+ \sum_{j=0}^{\frac{d}{2}-3} \big[Pu_j + (2j+4-d)\sigma^\mu \partial_\mu u_{j+1} + (j+1) (2j+4-d) u_{j+1} \big] \sigma^{j-\frac{d}{2}+1} \\ \notag
         &+ \big[Pu_{\frac{d}{2}-2}+ 2\sigma^\mu \partial_\mu v_0 + (d-2)v_0 \big] \sigma^{-1} \\ \notag
         &+ \sum_{j=0}^\infty \big[Pv_j + 2(j+1)\sigma^\mu \partial_\mu v_{j+1} + (j+1)(d+2j)v_{j+1} \big] \sigma^j \ln(\sigma) \\ \notag
         &+ (2-d) \sigma_-^\mu \partial_\mu u'_0 \hspace{0.1cm} \sigma_-^{-\frac{d}{2}} \\ \notag
         &+ \sum_{j=0}^{\frac{d}{2}-3} \big[Pu'_j + (2j+4-d)\sigma_-^\mu \partial_\mu u'_{j+1} + (j+1) (2j+4-d) u'_{j+1} \big] \sigma_-^{j-\frac{d}{2}+1} \\ \notag
         &+ \big[Pu'_{\frac{d}{2}-2}+2\sigma_-^\mu \partial_\mu v'_0 + (d-2)v'_0 \big] \sigma_-^{-1} \\
         &+ \sum_{j=0}^\infty \big[Pv'_j + 2(j+1)\sigma_-^\mu \partial_\mu v'_{j+1} + (j+1)(d+2j)v'_{j+1} \big] \sigma_-^j \ln(\sigma_-).
\end{flalign}
In order for $P\tilde{H}_\kappa$ to lie in $C^\infty (\mathcal{O} \times \mathcal{O})$, one must require that the coefficients of the singular terms in Equation \eqref{Eq: PHR} vanish. Focusing on those related to $U$ and $V$, we obtain transport equations which need to be supplemented with suitable initial conditions. These can be expressed in terms of the coinciding point limits of the underlying coefficients, namely $[u_j]$ and $[v_j]$, see Equation \eqref{Eq: Coinciding Point limits}. In turn, since the equation for each of the coefficients $u_j, v_j$ depends on the preceding term $u_{j-1},v_{j-1}$ hence forming a set of recursion relations, consistency entails that only $[u_0]$ is free to be chosen. We select its value so to be the same as in half-Minkowski spacetime. 

Since the antisymmetric part of \eqref{Eq: Ansatz on the Robin parametrix} must coincide up to a multiplicative constant with $G_\kappa$, the advanced-minus-retarded propagator for the Klein-Gordon equation with Robin boundary conditions, we can read $[u_0]$ on half-Minkowski spacetime from Equation \eqref{Eq: Kappa G}. More precisely since Lemmas \ref{Lem: First Figata} and \ref{Lem: Second Figata} guarantee us that the term $2 \kappa (\mathcal{L}_\kappa\otimes\delta)\star G_r$ is proportional to the Heaviside function $\Theta(\sigma_-)$, we can safely conclude that the leading singularity coincides with that of the advanced-minus-retarded fundamental solution of the Klein-Gordon operator on the whole Minkowski spacetime. In other words, also taking into account the conventions of 
\cite{Bondurant_2005}, we can set $[u_0] = 1$. This entails that the {\em recursion relations} for $U$ and $V$ are 

\begin{equation} \label{Eq: relations for U even dimension}
    \begin{cases}
        (2-d) \sigma^\mu \partial_\mu u_0 = 0, \\
        [u_0] = 1, \\\\
        Pu_j + (2j+4-d)\sigma^\mu \partial_\mu u_{j+1} + (j+1)(2j+4-d)u_{j+1} = 0, \\
        [u_{j+1}] = -\frac{[Pu_j]}{2j+4-d}, \hspace{0.2cm} 0\le j \le \frac{d}{2}-3, 
    \end{cases}
\end{equation}    
\begin{equation} \label{Eq: relations for V even dimension}  
 	\begin{cases}
        Pu_{\frac{d}{2}-2} + 2\sigma^\mu \partial_\mu v_0 + (d-2)v_0 = 0, \\
        [v_0] = -\frac{[Pu_{\frac{d}{2}-2}]}{d-2}, \\\\
        Pv_j + 2(j+1) \sigma^\mu \partial_\mu v_{j+1} + (j+1)(d+2j) v_{j+1} = 0, \\
        [v_{j+1}] = - \frac{[Pv_j]}{(j+1)(d+2j)}, \hspace{0.2cm} j \in \mathbb{N}_{0}.
    \end{cases}
\end{equation}

\begin{remark}
	We observe that Equations \eqref{Eq: relations for U even dimension} and \eqref{Eq: relations for V even dimension} are the same Hadamard recursion relations that one obtains on globally hyperbolic spacetimes, see for example \cite{Decanini:2005eg}. This is consistent with the fact that, convex geodesic neighborhoods $\mathcal{O} \subset\bH^d$ such that $\mathcal{O} \cap\partial\bH^d=\emptyset$ are globally hyperbolic and, thereon, Equation \eqref{Eq: Ansatz on the Robin parametrix} takes the same form as Equation \eqref{Eq: local Hadamard form} since all terms depending on $\sigma_-$ are smooth. Hence they can be reabsorbed in the definition of $W$.
\end{remark}

If we focus our attention on Equation \eqref{Eq: PHR} relatively to the coefficients coming from the expansions $U',V'$, one can derive transport equations similar to those in Equations \eqref{Eq: relations for U even dimension} and \eqref{Eq: relations for V even dimension}. Yet, one cannot supplement them with initial conditions in terms of coinciding point limits, rather one needs to assign the behavior of the coefficients at $\partial\bH^d$. To this end, we evaluate 

\begin{flalign} 
\label{Eq: derivative BC}
 \notag       \partial_z \tilde{H}_\kappa &(\sigma, \sigma_-) \big|_{z=0} \simeq \partial_z \bigg[ U(\sigma) \sigma^{-\frac{d-2}{2}} + V(\sigma)\ln(\sigma) + U'(\sigma_-) \sigma_-^{-\frac{d-2}{2}} + V'(\sigma_-)\ln(\sigma_-) \bigg] \bigg|_{z=0} = \\ \notag & = \frac{1}{2} \big(2-d \big) \partial_z \sigma  \big(u_0 - u'_0 \big) \sigma^{-\frac{d}{2}} \big|_{z=0} \\ \notag &+\sum_{j=0}^{\frac{d}{2}-3} \bigg[ \partial_z u_j + \partial_z u'_j + \frac{1}{2} \big(2j+4-d\big) \partial_z \sigma \big( u_{j+1} - u'_{j+1} \big)\bigg] \sigma^{j-\frac{d}{2}+1} \bigg|_{z=0} \\
        &+ \bigg[ \partial_z (u_{\frac{d}{2}-2}+ u'_{\frac{d}{2}-2}) + \partial_z \sigma \big(v_0-v'_0 \big)\bigg] \sigma^{-1} \bigg|_{z=0} + \sum_{j=0}^\infty \bigg[ \partial_z (v_j\sigma^j) + \partial_z(v'_j\sigma_-^j) \bigg] \ln(\sigma) \bigg|_{z=0},
\end{flalign}
where $\simeq$ denotes that we are omitting on the right hand side all smooth contributions and
\begin{flalign} \label{Eq: k term in BC}
\notag        \kappa\tilde{H}_\kappa(\sigma, \sigma_-) \big|_{z=0} &= \kappa \bigg[u_0 + u'_0 \bigg] \sigma^{-\frac{d}{2}+1} \bigg|_{z=0} + \kappa \sum_{j=1}^{\frac{d}{2}-3} \bigg[u_j+ u'_j \bigg] \sigma^{j-\frac{d}{2}+1} \bigg|_{z=0} + \\
        &+ \kappa \bigg[u_{\frac{d}{2}-2} + u'_{\frac{d}{2}-2} \bigg] \sigma^{-1} \bigg|_{z=0} + \kappa \sum_{j=0}^\infty \bigg[v_j+v'_j \bigg]\sigma^j \ln(\sigma) \bigg|_{z=0},
\end{flalign}
where we exploited the fact that $\sigma \vert_{z=0} = \sigma_- \vert_{z=0}$ and $\partial_z \sigma_- \vert_{z=0} = - \partial_z \sigma \vert_{z=0}$. 
To derive the boundary conditions satisfied by the coefficients $\{u_j'\}_{j=0}^{\frac{d-4}{2}}$ and $\{v_j'\}_{j=0}^{\infty}$, we compare the same orders in powers of $\sigma_-$

\noindent In view of these considerations the Hadamard recursion relations for the coefficients $U^\prime,V^\prime$:
\begin{equation} 
\label{Eq: recursion for U', V'}
    \begin{cases}
        (2-d) \sigma_-^\mu \partial_\mu u'_0 = 0, \\
        u'_0 |_{z=0} = u_0 |_{z=0}, \\\\
        Pu'_j + (2j+4-d) \sigma_-^\mu \partial_\mu u'_{j+1} + (j+1)(2j+4-d) u'_{j+1} = 0, \\
        (\partial_z+\kappa) (u_j+u'_j) |_{z=0} + \frac{1}{2}(2j+4-d)\partial_z \sigma (u_{j+1}-u'_{j+1})|_{z=0} = 0 ,\hspace{0.2cm} 0 \le j \le \frac{d}{2}-3, \\\\
        Pu'_{\frac{d}{2}-2} + 2\sigma_-^\mu \partial_\mu v'_0 + (d-2) v'_0 = 0, \\
        \partial_z (u_{\frac{d}{2}-2} + u'_{\frac{d}{2}-2}) |_{z=0} + \partial_z \sigma (v_0-v'_0) |_{z=0} = - \kappa (u_{\frac{d}{2}-2}+u'_{\frac{d}{2}-2} )|_{z=0}, \\\\
%        \partial_z v_0 |_{z=0} + \partial_z v'_0 |_{z=0} = 0 \\\\
        Pv'_j + 2(j+1) \sigma_-^\mu \partial_\mu v'_{j+1} + (j+1)(d+2j)v'_{j+1} = 0, \\
        (\partial_z+\kappa)(v_j+v^\prime_j)|_{z=0}+(j+1)\partial_z\sigma(v_{j+1}-v'_{j+1})|_{z=0}=0\hspace{0.2cm} j \in \mathbb{N}_0.
    \end{cases}
\end{equation}

\begin{remark}
	Observe that Equation \eqref{Eq: recursion for U', V'} is guaranteeing that $u_0=u^\prime_0$, which is tantamount to saying that the leading singularity of Equation \eqref{Eq: Ansatz on the Robin parametrix} is the same as that of $\omega_{2,N}$ while boundary conditions contribute to lower order terms. In addition the last recursion relation entails that, at $z=0$ where $\partial_z\sigma=z^\prime$, for all $j\in\mathbb{N}_0$,
    \begin{equation}\label{Eq: initial condition for u}
    u^\prime_{j+1}|_{z=0}=u_{j+1}|_{z=0}+\frac{(\partial_z+\kappa)(u_j+u^\prime_j)|_{z=0}}{(j+2-\frac{d}{2})z^\prime},
    \end{equation}
    and
    \begin{equation}\label{Eq: initial condition for v}
    v^\prime_{j+1}|_{z=0}=v_{j+1}|_{z=0}+\frac{(\partial_z+\kappa)(v_j+v^\prime_j)|_{z=0}}{(j+1)z^\prime}.
    \end{equation}
\end{remark}

To conclude the section we remark that the recursion relations that we have derived can also be applied at the level of the advanced-minus-retarded propagator as one can infer directly from Equation \eqref{Eq: State boundary condition}. We summarize this statement in the following corollary.

\begin{corollary}\label{Cor: Fundamental Solutions}
	Let $G_\kappa\in\mathcal{D}^\prime(\bH^d\times\bH^d)$ be the advanced-minus-retarded propagator for the Klein-Gordon operator with Robin boundary conditions in even spacetime dimensions on half-Minkowski spacetime. Then 
	\begin{itemize}
		\item[\ding{104}] if $\mathcal{O}\cap\partial\bH^d=\emptyset$,  $G_{\kappa}|_{\mathcal{O}\times\mathcal{O}}(x,x^\prime)$, the integral kernel of $G_\kappa$, reads 
		$$G_\kappa(x,x^\prime)= \beta_d U(x,x^\prime)\delta^{\frac{d-4}{2}}(\sigma) + \beta^\prime_d V(x,x^\prime)\Theta(\sigma),$$
		where $\beta_d=\frac{(-1)^{\,\frac{d-2}{2}}}{2(2\pi)^{\frac{d-2}{2}}}$ if $d\geq 3$ whereas $\beta_d=0$ if $d=2$. At the same time $\beta^\prime_d=\pi$ if $d\geq 3$ while $\beta^\prime_d=\frac{1}{2}$ if $d=2$,
		\item[\ding{104}] if $\mathcal{O}\cap\partial\bH^d\neq\emptyset$, then
		\begin{gather} 
			G_\kappa(x,x') = \tilde{H}_\kappa(x,x^\prime)+W(x,x^\prime)=\notag\\
			= \beta_d U(x,x^\prime)\delta^{\frac{d-4}{2}}(\sigma) +\beta^\prime_d V(x,x^\prime) \Theta(\sigma) + \beta_d U'(x,x^\prime)\delta^{\frac{d-4}{2}}(\sigma_-) +\beta^\prime_d  V^\prime(x,x^\prime) \Theta(\sigma_-)
		\end{gather}
		where $\sigma_{-}$ is defined as per Equation \eqref{Eq: reflected Synge world function}.
	\end{itemize}
The function $U,U^\prime,V,V^\prime\in C^\infty(\mathcal{O}\times\mathcal{O})$ can be expanded as per Equation \eqref{Eq: expansion of U,V,U',V'} and the coefficients abide by the recursion relations identified by Equations \eqref{Eq: relations for U even dimension}, \eqref{Eq: relations for V even dimension} and \eqref{Eq: recursion for U', V'}.
\end{corollary}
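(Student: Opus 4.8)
The plan is to deduce the statement from three ingredients already in hand: the decomposition $G_\kappa = G_N - 2\kappa(\mcL_\kappa\otimes\delta)\star G_r$ of Proposition~\ref{Prop: GR in terms of G+ and Gkappa}, the local Hadamard form of the canonical two-point function established in Theorem~\ref{Thm: State is of local Hadamard form}, and the antisymmetrisation identity $iG_\kappa(f,f^\prime)=\omega_{2,\kappa}(f,f^\prime)-\omega_{2,\kappa}(f^\prime,f)$ of Definition~\ref{Def: Hadamard 2-pt function}. First I would dispose of the case $\mathcal{O}\cap\partial\bH^d=\emptyset$: shrinking $\mathcal{O}$ if necessary, on $\mathcal{O}\times\mathcal{O}$ one has $z+z^\prime$ bounded away from $0$, hence $\sigma_-$ bounded away from $0$, so every $\sigma_-$--dependent contribution to \eqref{Eq: Kappa G} is smooth there; then $G_\kappa|_{\mathcal{O}\times\mathcal{O}}$ is just the Klein--Gordon advanced--minus--retarded propagator on a globally hyperbolic set and Equation~\eqref{Eq: local propagator form} yields the first bullet, with $U,V$ the classical Hadamard coefficients, \emph{i.e.} the solutions of \eqref{Eq: relations for U even dimension}--\eqref{Eq: relations for V even dimension}.

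For $\mathcal{O}\cap\partial\bH^d\neq\emptyset$ I would obtain the announced form by antisymmetrising Equation~\eqref{Eq: Ansatz on the Robin parametrix}. Swapping $x\leftrightarrow x^\prime$ reverses the sign of the $i\epsilon(t-t^\prime)$ regulator, so $\sigma_\epsilon\mapsto\overline{\sigma_\epsilon}$ and likewise $\sigma_{-,\epsilon}\mapsto\overline{\sigma_{-,\epsilon}}$, while $W$ drops up to smooth terms; applying the Sokhotski--Plemelj--type identities for $(\sigma\pm i0)^{-n}$ and for $\ln(\sigma\pm i0)$ with $n=\tfrac{d-2}{2}$ (the very ones used to pass from \eqref{Eq: local Hadamard form} to \eqref{Eq: local propagator form}), together with their reflected analogues in $\sigma_-$, converts $\tilde H_\kappa$ into the claimed combination of $\delta^{\frac{d-4}{2}}(\sigma)$, $\Theta(\sigma)$, $\delta^{\frac{d-4}{2}}(\sigma_-)$, $\Theta(\sigma_-)$ with the constants $\beta_d,\beta^\prime_d$; the prefactors $U,U^\prime,V,V^\prime$ are unchanged because antisymmetrisation is the identity on the real-analytic coefficients, and the $d=2$ case ($\beta_2=0$, $\beta^\prime_2=\tfrac12$) is treated separately exactly as in \eqref{Eq: local propagator form}. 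Alternatively, and perhaps more transparently, one may argue directly from $G_\kappa=G_N-2\kappa(\mcL_\kappa\otimes\delta)\star G_r$: the method of images (as in Proposition~\ref{Prop: D and N Causal Propagator}) puts $G_N$ in the stated form, while Lemmas~\ref{Lem: First Figata} and \ref{Lem: Second Figata}, combined with the change-of-variables and integration-by-parts analysis in the proof of Theorem~\ref{Thm: State is of local Hadamard form}, show that the correction $-2\kappa(\mcL_\kappa\otimes\delta)\star G_r$ is supported in $\{\sigma_-\geq0\}$ and contributes only to the $\Theta(\sigma_-)$ term and, for even $d\geq4$, also to $\delta^{\frac{d-4}{2}}(\sigma_-)$.

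It then remains to produce the recursion relations. Since $G_\kappa$ solves $(P\otimes\mathbb{I})G_\kappa=0$ on $\mathring{\bH}^d$ by \eqref{Eq: BV for G}, the local Robin Hadamard parametrix $\tilde H_\kappa$ built above satisfies $(P\otimes\mathbb{I})\tilde H_\kappa\in C^\infty(\mathcal{O}\times\mathcal{O})$; inserting the expansions \eqref{Eq: expansion of U,V,U',V'} and repeating verbatim the computation leading to \eqref{Eq: PHR}, the vanishing of the coefficient of each singular power of $\sigma$ and of $\sigma_-$ yields exactly the transport equations of \eqref{Eq: relations for U even dimension}, \eqref{Eq: relations for V even dimension} and \eqref{Eq: recursion for U', V'}. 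The normalisation $[u_0]=1$ is forced because Lemmas~\ref{Lem: First Figata}--\ref{Lem: Second Figata} ensure the leading singularity of $G_\kappa$ coincides with that of $G_{\bR^d}$, and the boundary conditions on $\{u^\prime_j\}$ and $\{v^\prime_j\}$ follow from imposing $(\partial_z+\kappa)G_\kappa|_{z=0}=0$ and matching equal powers of $\sigma_-$, as in \eqref{Eq: derivative BC}--\eqref{Eq: k term in BC}; solvability and uniqueness of both recursion systems are inherited from the corresponding statement for $\omega_{2,\kappa}$.

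The step I expect to be the main obstacle is the boundary case of the form statement: controlling the distributional $\epsilon\to0^+$ limits of the convolved reflected Hadamard pieces uniformly up to $\partial\bH^d$, keeping precise track of the constants $\beta_d,\beta^\prime_d$ and of the orders of the derivatives of $\delta$, and in particular handling the degenerate locus $\tau=0$ together with $z+z^\prime=0$, where the estimates of Lemmas~\ref{Lem: First Figata} and \ref{Lem: Second Figata} must be invoked most carefully in order to confirm that $U^\prime$ and $V^\prime$ arise genuinely and independently of the smooth remainder $W$.
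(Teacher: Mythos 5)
Your proposal is correct and follows essentially the same route as the paper, which obtains the corollary by combining the decomposition of Proposition \ref{Prop: GR in terms of G+ and Gkappa} with Lemmas \ref{Lem: First Figata}--\ref{Lem: Second Figata} (to fix the leading singularity and $[u_0]=1$), taking the antisymmetric part of the local Robin Hadamard form as in Equation \eqref{Eq: local propagator form}, and reading off the transport equations and boundary conditions from the computations leading to \eqref{Eq: PHR} and \eqref{Eq: derivative BC}--\eqref{Eq: k term in BC}. The only points you add beyond the paper's implicit argument (Sokhotski--Plemelj bookkeeping for $\beta_d,\beta'_d$, shrinking $\mathcal{O}$ so that $\sigma_-$ is bounded away from zero off the boundary) are consistent with, and no weaker than, the paper's own level of detail.
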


\begin{example}
Having established the counterpart of the Hadamard recursion relations on $\bH^d$ with Robin boundary conditions, it is noteworthy that one can compare them with an exact formula considering the advanced-minus retarded propagator for a massless scalar field in $d=4$, using Equation \eqref{Eq: 4D massless case BF}. In particular this formula entails that
\begin{equation} 
\label{Eq: u_j, v_j 4D massless}
\begin{cases}
        u_0 = u^\prime_0=1, \\
        v_j = 0, \hspace{0.2cm} j \in \mathbb{N}_0, \\
        v'_0 = -\frac{2\kappa}{z+z^\prime}, \\
        v'_j = 2\kappa\sum_{n=0}^{j}A_n(z+z^\prime)\,E_{j-n}(z+z^\prime,\kappa) \hspace{0.2cm} j \in \mathbb{N},
    \end{cases}
\end{equation}
where both $A_n$ and $E_{j-n}$ are defined in Equation \eqref{Eq: 4D massless case BF}. In addition, Equation \eqref{Eq: recursion for U', V'} entails that, for $j=0$, the Hadamard recursion relations yield
\begin{equation}
\label{Eq: recursions v'0 4D}
    \begin{cases}
        \Box u'_0 + 2 \sigma_-^\mu \partial_\mu v'_0+2v'_0 = 0, \\
        \partial_z (u_0 + u'_0) |_{z=0} + \partial_z \sigma (v_0-v'_0) |_{z=0} = \kappa (u_0+u'_0)|_{z=0 }
    \end{cases} .
\end{equation}
Using that $u'_0 = 1$ and $v'_0 = -\frac{2\kappa}{z+z^\prime}$,  
\begin{equation}
        \sigma_-^\mu \partial_\mu v'_0 = \frac{2\kappa}{z+z^\prime} = -v'_0,
\end{equation}
so that the first equation is satisfied. Focusing on the boundary conditions, we have that
\begin{equation}
\partial_z (u_0 + u'_0) |_{z=0} + \partial_z \sigma_- (v_0-v'_0) |_{z=0} = 2\kappa = \kappa (u_0+u'_0)|_{z=0}.
\end{equation}
Hence $v'_0$ satisfies the Hadamard recursion relations with Robin boundary conditions. Focusing on the coefficients $\{v^\prime_j\}_{j \ge 1}$ and bearing in mind that $v_j=0$ for $j \in \mathbb{N}_0$, we obtain from Equation \eqref{Eq: 4D massless case BF} that
\begin{equation} 
\label{Eq: recursions v'j 4D massless}
    \begin{cases}
        \Box v'_j + 2(j+1) \sigma_-^\mu \partial_\mu v'_{j+1}+2(j+1)(j+2)v'_{j+1} = 0, \\
        v^\prime_{j+1}|_{z=0}=\frac{(\partial_z+\kappa)(v^\prime_j)|_{z=0}}{(j+1)z^\prime}, \hspace{0.2cm} j \in \mathbb{N}_0.
    \end{cases}
\end{equation}
These are the Hadamard recursion relations as in Equation \eqref{Eq: recursion for U', V'}.
\end{example}

\subsubsection{Hadamard Recursion Relations: Odd Dimensional Case}
We discuss odd dimensions by focusing only on the main formulae since their derivation is structurally identical to that in \ref{Sec: Hadamard Recurstion Relations even case}. Mimicking the standard construction of the Hadamard recursion relations, see {\it e.g.} \cite{Garabedian_1964, Decanini:2005eg}, we consider the following expansions
\begin{equation} \label{Eq: expansion of U,V,U',V' odd}
	\begin{cases}
		U(x,x^\prime) = \sum_{j=0}^{\infty} u_j (\underline{x},z, \underline{x}',z') \sigma^j \\
		U'(x,x^\prime) = \sum_{j=0}^{\infty} u'_j (\underline{x},z, \underline{x}',z') \sigma_-^j 
	\end{cases}.
\end{equation}
As in the even scenario the coefficients of $U$ abide by the standard Hadamard recursion relations, namely
\begin{equation} \label{Eq: relations for U odd dimension}
	\begin{cases}
		(2-d) \sigma^\mu \partial_\mu u_0 = 0, \\
		[u_0] = 1, \\\\
		Pu_j + (2j+4-d)\sigma^\mu \partial_\mu u_{j+1} + (j+1)(2j+4-d)u_{j+1} = 0, \\
		[u_{j+1}] = -\frac{[Pu_j]}{2j+4-d}, \hspace{0.2cm} j\geq 0 
	\end{cases}
\end{equation}    
Focusing instead on $U^\prime$ we end up with
\begin{equation} 
	\label{Eq: recursion for U', V' odd}
	\begin{cases}
		(2-d) \sigma_-^\mu \partial_\mu u'_0 = 0, \\
		u'_0 |_{z=0} = u_0 |_{z=0}, \\\\
		Pu'_j + (2j+4-d) \sigma_-^\mu \partial_\mu u'_{j+1} + (j+1)(2j+4-d) u'_{j+1} = 0, \\
		\partial_z (u_j+u'_j) |_{z=0} + \frac{1}{2}(2j+4-d)\partial_z \sigma (u_{j+1}-u'_{j+1})|_{z=0} = \kappa (u_j+u'_j)|_{z=0} ,\hspace{0.2cm} j\geq 0.
	\end{cases}
\end{equation}

\begin{remark}
	Observe that, as in the even case Equation \eqref{Eq: recursion for U', V' odd} is guaranteeing that $u_0=u^\prime_0$, which is tantamount to saying that the leading singularity of Equation \eqref{Eq: Ansatz on the Robin parametrix} is the same as that of $\omega_{2,N}$ while boundary conditions contribute to lower order terms.
\end{remark}

To conclude the section we remark that, as in the even case, the recursion relations that we have derived can also be applied at the level of the advanced-minus-retarded propagator as one can infer directly from Equation \eqref{Eq: State boundary condition}. We summarize this statement in the following corollary.

\begin{corollary}\label{Cor: Fundamental Solutions odd}
	Let $G_\kappa\in\mathcal{D}^\prime(\bH^d\times\bH^d)$ be the advanced-minus-retarded propagator for the Klein-Gordon operator with Robin boundary conditions in odd spacetime dimensions $d>1$. Then 
	\begin{itemize}
		\item[\ding{104}] if $\mathcal{O}\cap\partial\bH^d=\emptyset$,  $G_{\kappa}|_{\mathcal{O}\times\mathcal{O}}(x,x^\prime)$, the integral kernel of $G_\kappa$, reads 
		$$G_\kappa(x,x^\prime)= \mathrm{sgn}(\Delta t)\,\frac{(-1)^{\frac{d-1}{2}}}{2\pi^\frac{d}{2}}\frac{U(x,x^\prime)}{(2\sigma)^{\frac{d-2}{2}}(x,x^\prime)}\Theta(\sigma(x,x^\prime)),$$
		\item[\ding{104}] if $\mathcal{O}\cap\partial\bH^d\neq\emptyset$, then
		\begin{gather} 
			G_\kappa(x,x') = \mathrm{sgn}(\Delta t)\,\frac{(-1)^{\frac{d-1}{2}}}{2\pi^\frac{d}{2}}\left(\frac{U(x,x^\prime)}{(2\sigma)^{\frac{d-2}{2}}(x,x^\prime)}\Theta(\sigma(x,x^\prime)) + \frac{U^\prime(x,x^\prime)}{(2\sigma_-)^{\frac{d-2}{2}}(x,x^\prime)}\Theta(\sigma_-(x,x^\prime))\right)
		\end{gather}
		where $\sigma_{-}$ is defined as per Equation \eqref{Eq: reflected Synge world function}.
	\end{itemize}
	The function $U,U^\prime\in C^\infty(\mathcal{O}\times\mathcal{O})$ can be expanded as per Equation \eqref{Eq: expansion of U,V,U',V' odd} and the coefficients abide by the recursion relations identified by Equations \eqref{Eq: relations for U odd dimension} and \eqref{Eq: recursion for U', V' odd}.
\end{corollary}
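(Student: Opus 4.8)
The plan is to transfer the local form of the canonical Robin two–point function $\omega_{2,\kappa}$ of Equation~\eqref{Eq: omega2 kappa} onto its antisymmetric part, which by the CCR relation in Proposition~\ref{Prop: Hadamard 2-point function with Robin boundary conditions} (see Definition~\ref{Def: Hadamard 2-pt function}) is exactly $iG_\kappa$, and then to re-derive the recursion relations directly at the level of $G_\kappa$. For the first item, if $\mathcal{O}$ is a geodesically convex neighbourhood disjoint from $\partial\bH^d$ — which, in the case of interest, we may take small enough that no reflected null geodesic joining two of its points stays inside it, e.g. of diameter less than twice $\mathrm{dist}(\mathcal{O},\partial\bH^d)$ — then $\sigma_-<0$ throughout $\mathcal{O}\times\mathcal{O}$, so by Lemma~\ref{Lem: First Figata} both $G_r$ of Equation~\eqref{Eq: GD in terms of G and G+} and $(\mathcal{L}_\kappa\otimes\delta)\star G_r$ are smooth there; hence Equations~\eqref{Eq: Kappa G} and~\eqref{Eq: G_N} give $G_\kappa|_{\mathcal{O}\times\mathcal{O}}=G_{\bR^d}|_{\mathcal{O}\times\mathcal{O}}$, and this is Equation~\eqref{Eq: local propagator form} for $d=2k+1$, with $U$ expanded as in~\eqref{Eq: expansion of U,V,U',V' odd} and its coefficients obeying~\eqref{Eq: relations for U odd dimension}, by the globally hyperbolic theory.

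For the second item I would start from Theorem~\ref{Thm: State is of local Hadamard form}, which places $\omega_{2,\kappa}$ in the local Robin Hadamard form~\eqref{Eq: Ansatz on the Robin parametrix}; in odd dimensions $\delta_d=0$, so only the $U\,\sigma_\epsilon^{-(d-2)/2}$, $U'\,\sigma_{-,\epsilon}^{-(d-2)/2}$ and the smooth $W$ survive, with $U,U'$ expanded as in~\eqref{Eq: expansion of U,V,U',V' odd} and their coefficients obeying~\eqref{Eq: relations for U odd dimension} and~\eqref{Eq: recursion for U', V' odd} by the analysis carried out above in this subsection. The key point is that the coefficients $u_j,u'_j$ are symmetric functions of $(x,x')$: for $\{u_j\}$ this is the classical Hadamard fact, cf. Remark~\ref{Rem: Asymptotic series} and~\cite{Moretti_2000}; for $\{u'_j\}$ it follows by induction from the symmetric datum $u'_0|_{z=0}=u_0|_{z=0}$ and from the fact that the transport operator $\sigma_-^\mu\partial_\mu$ and the boundary relations in~\eqref{Eq: recursion for U', V' odd} preserve symmetry. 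Hence, when forming $\tfrac1i\bigl(\omega_{2,\kappa}(x,x')-\omega_{2,\kappa}(x',x)\bigr)$, antisymmetry only acts on the $i\epsilon$–prescription, $\sigma_\epsilon(x',x)$ being the complex conjugate of $\sigma_\epsilon(x,x')$ in the limit; performing the $\epsilon\to0^+$ limit exactly as in the globally hyperbolic computation behind~\eqref{Eq: local propagator form} produces $\mathrm{sgn}(\Delta t)\,\tfrac{(-1)^{(d-1)/2}}{2\pi^{d/2}}(2\sigma)^{-(d-2)/2}\Theta(\sigma)$ from the first term and the analogous $\sigma_-$–expression from the second, while $W$ contributes only a smooth antisymmetric term shown to vanish in the next step. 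This is precisely the claimed identity, with the same series~\eqref{Eq: expansion of U,V,U',V' odd}; since $\bH^d$ is flat, hence analytic, those series converge (Remark~\ref{Rem: Asymptotic series}), so the identity is genuine.

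As an independent derivation — the one I would use were the comparison with $\omega_{2,\kappa}$ unavailable — one works directly with $G_\kappa$: substitute the stated ansatz together with the expansions~\eqref{Eq: expansion of U,V,U',V' odd} into $(P\otimes\mathbb{I})G_\kappa=0$, use the distributional identities for $P$ acting on $(2\sigma)^{-(d-2)/2}\Theta(\sigma)$ (the propagator counterpart of Equation~\eqref{Eq: PU}, obtained in the same way), and collect the coefficients of each power of $\sigma$ and of $\sigma_-$; their vanishing gives the transport equations in~\eqref{Eq: relations for U odd dimension} and~\eqref{Eq: recursion for U', V' odd}. The normalisations $[u_0]=1$ and $u'_0|_{z=0}=u_0|_{z=0}$ are fixed by matching the leading singularity of $G_\kappa|_{\mathcal{O}\times\mathcal{O}}$ to that of $G_{\bR^d}$ — possible because, by Lemmas~\ref{Lem: First Figata} and~\ref{Lem: Second Figata}, the correction $2\kappa(\mathcal{L}_\kappa\otimes\delta)\star G_r$ in~\eqref{Eq: Kappa G} is supported in $\{\sigma_-\ge0\}$ and $\sigma=\sigma_-$ at $z=0$ — and the boundary conditions for $\{u'_j\}$ by imposing $(\partial_z G_\kappa+\kappa G_\kappa)|_{z=0}=0$, which is the same computation as Equations~\eqref{Eq: derivative BC}--\eqref{Eq: k term in BC} with all $V,V'$–terms absent.

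I expect the main obstacle to be the careful bookkeeping of the distributional limits: that antisymmetrising (first route) and convolving $(\mathcal{L}_\kappa\otimes\delta)\star\bigl(U\sigma_-^{(2-d)/2}\Theta(\sigma_-)\bigr)$ (second route) produce exactly the announced structure and no spurious terms. This is handled exactly as in the odd–$d$ part of the proof of Theorem~\ref{Thm: State is of local Hadamard form}, via the cut–off $\chi$ around $s_*=|\tau|-z-z'$ and the change of variables $s\mapsto\alpha$. Finally, the absence of an extra smooth remainder — needed for the exact equality, and for the vanishing claimed in the previous paragraph — follows from the uniqueness of $G_\kappa$ (Proposition~\ref{Prop: Uniqueness}) together with the representation~\eqref{Eq: Kappa G}: after Lemmas~\ref{Lem: First Figata} and~\ref{Lem: Second Figata} and the explicit odd–$d$ form of $G_{\bR^d}$ in~\eqref{Eq: Causal Propagator in d Minkowski}, $G_\kappa|_{\mathcal{O}\times\mathcal{O}}$ is manifestly of the stated type, the reflected coefficient $U'$ being the convergent series collecting the $\sigma_-$–contributions of $G_r$ and of the convolution correction.
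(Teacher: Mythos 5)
Your proposal is correct and takes essentially the same route as the paper, which presents this corollary as a summary of the preceding derivation: the structure of $G_\kappa$ is fixed by Equation \eqref{Eq: Kappa G} together with Lemmas \ref{Lem: First Figata} and \ref{Lem: Second Figata}, the recursion relations \eqref{Eq: relations for U odd dimension} and \eqref{Eq: recursion for U', V' odd} come from imposing the equation of motion and the Robin condition on the ansatz, the normalisation from matching the leading singularity with $G_{\bR^d}$ as per Equation \eqref{Eq: Causal Propagator in d Minkowski}, and the link to the local form of $\omega_{2,\kappa}$ is exactly the CCR/antisymmetric-part argument you use (your smallness refinement of $\mathcal{O}$ in the interior case is a sensible reading of the statement). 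The only soft spot is the symmetry in $(x,x^\prime)$ of the primed coefficients $u^\prime_j$, which your first route needs but only asserts via a sketchy induction; since your second, direct route --- the one the paper effectively relies on --- does not require it, this does not amount to a genuine gap.
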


\subsection{Hadamard states}\label{Sec: Comparison of Hadamard states}

In this last section we discuss how to formulate a counterpart of the celebrated theorems by Radzikowski \cite{Radzikowski_1996, Radzikowski_1996_1} which are applicable when the underlying background is globally hyperbolic.

%Before delving into the key statements we need a preliminary definition.
%
%\begin{definition}\label{Def: Feynman Propagator with Robin}
%	Let $(\bH^d,\eta)$ be half-Minkowski spacetime for $d\geq 2$ and let $P$ be the Klein-Gordon operator as per Equation \eqref{Eq: Cauchy initial value problem}. Let $\omega_{2,\kappa}\in\mathcal{D}^\prime(\bH^d\times\bH^d)$ be a {\em Hadamard two-point function} associated to $P$ with Robin boundary conditions. We call {\bf Robin Feynman propagator} $G_{\kappa,F}=\omega_{2,\kappa}+iG_\kappa^-$, where $G_{\kappa}^-$ is the unique retarded propagator associated to $P$ as per Proposition \ref{Prop: Uniqueness}. We call {\bf Robin Feynman parametrix} any $H_{F,\kappa}\in\mathcal{D}^\prime(\bH^d\times\bH^d)$ such that $\omega_{2,\kappa}-H_{F,\kappa}\in C^\infty(\bH^d\times\bH^d)$.
%\end{definition}
%
%\noindent Observe that, in order to keep contact with the mathematical physics literature $G_{\kappa,F}$ is such that $(P\otimes\mathbb{I})G_{\kappa,F}=(\mathbb{I}\otimes P)G_{\kappa,F}=i\delta$.

\begin{theorem}[Global-to-Local]\label{Thm: Global-to-Local}
Let $(\mathbb{H}^d, \eta)$ be the d-dimensional half-Minkowski spacetime with $d\geq 2$, let $P: = \Box_{\eta} + m^2$ be the massive Klein-Gordon operator thereon and consider $\omega_{2, \kappa} \in  \mathcal{D}'(\mathbb{H}^d \times \mathbb{H}^d)$ abiding by Robin boundary conditions with $\kappa\geq 0$. Given a Cauchy hypersurface $\mathcal{C}$ in $(\mathbb{R}^d, \eta)$ and a causal normal neighbourhood $\mathcal{O} \subseteq \mathbb{R}^d$ we consider the intersection $\tilde{\mathcal{C}} := \mathcal{C} \cap \mathcal{O} \vert_{\mathbb{H}^d}$. The following statements are equivalent: 
\begin{itemize}
    \item[1.] $\omega_{2, \kappa}$ abides by Definition \ref{Def: Local Hadamard State}. 
%    \item[2.] The Feynman propagator $G^{F}_{\kappa}$ of $\omega_{2, \kappa}$ is the distinguished parametrix $H^F_{\kappa}$ of $P$ with Robin boundary conditions modulo $C^{\infty}$.
    \item[2.] $\omega_{2, \kappa}$ satisfies Equation \eqref{Eq: State boundary condition} and the [CCR] in Equation \eqref{Eq: constraints on 2-pt function} modulo $C^{\infty}$, and
    \begin{equation*}
        \text{WF}(\omega_{2, \kappa}) = \{ (x, k, x', -k') \in T^{*}(\mathring{\mathbb{H}}^d \times\mathring{\mathbb{H}}^d) \setminus \{\boldsymbol{0}\} \, | \, (x, k) \, \dot{\sim} \, (x', k') \, \text{and} \, k \rhd 0 \},
    \end{equation*}
    where $\dot{\sim}$ indicates that the two points are connected by a generalized broken bicharacteristic (GBB) as per Definition \ref{Def: generalized broken bicharacteristics}.
\end{itemize}
    
\end{theorem}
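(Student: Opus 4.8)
The plan is to establish the two implications separately, pivoting on the canonical state $\omega_{2,\kappa}$ of Proposition \ref{Prop: Hadamard 2-point function with Robin boundary conditions}, which Theorem \ref{Thm: State is of local Hadamard form} already shows to be of local Hadamard form, and on a Radzikowski-type microlocal argument adapted to the boundary through the propagation of singularities along generalized broken bicharacteristics (GBBs).

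For $(1)\Rightarrow(2)$ I would argue as follows. The boundary condition \eqref{Eq: State boundary condition} and the [CCR] relation hold by Definition \ref{Def: Local Hadamard State}, hence in particular modulo $C^\infty$, so only the wavefront set over $\mathring{\bH}^d\times\mathring{\bH}^d$ needs to be computed. When $(x,x')$ admits a geodesically convex neighbourhood disjoint from $\partial\bH^d$ the kernel is \eqref{Eq: Hadamard parametrix local form} and the classical computation gives the future-directed null-geodesic relation, i.e.\ the GBBs that never meet the boundary. Otherwise I would use \eqref{Eq: Ansatz on the Robin parametrix}: the two $\sigma$-terms carry the wavefront set of the ordinary Hadamard parametrix (direct null geodesics), while the two $\sigma_-$-terms are, by Definition \ref{Def: Reflected Synge's world function}, the pull-back of the ordinary parametrix under $\mathbb{I}\otimes\iota_z$, so their wavefront set is the image of the ordinary one under $\mathbb{I}\otimes\iota_z$. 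A null geodesic from $x$ to the mirror point $\iota_z(x')$ crosses $\{z=0\}$ at a point $p$; folding back the portion with $z<0$ produces a broken null path from $x$ to $x'$ with a single reflection at $p\in\partial\bH^d$, and the induced covector transport reproduces the reflection law for the Hamilton flow of $\eta$ at the boundary, so these terms contribute exactly the once-reflected GBB relation. Since in the interior a direct null geodesic cannot lie inside $\partial\bH^d$, the two families do not cancel; this can also be read off from the requirement that the antisymmetric part of \eqref{Eq: Ansatz on the Robin parametrix} equal $iG_\kappa$, whose singular support — by Corollaries \ref{Cor: Fundamental Solutions} and \ref{Cor: Fundamental Solutions odd} — genuinely contains both a $\sigma$- and a $\sigma_-$-supported piece. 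Assembling these contributions yields the GBB wavefront set in (2).

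For $(2)\Rightarrow(1)$ I would set $w:=\omega_{2,\kappa}-\omega_{2,\kappa}^{\mathrm H}$, where $\omega_{2,\kappa}^{\mathrm H}$ is the state of Proposition \ref{Prop: Hadamard 2-point function with Robin boundary conditions} (for the exceptional case $d=2$, $m=0$ one picks instead any fixed reference two-point function of local Hadamard form), which is of local Hadamard form and, by the implication just proved, has the GBB wavefront set. Then $(P\otimes\mathbb{I})w$ and $(\mathbb{I}\otimes P)w$ are smooth, $w$ obeys the Robin boundary condition modulo $C^\infty$ in each entry, and $w-w^{\mathsf T}\in C^\infty$ because both summands satisfy the [CCR] with the same $G_\kappa$. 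I would then run the classical Radzikowski dichotomy at the level of the $b$-wavefront set (Definition \ref{Def: generalized broken bicharacteristics}, Appendix \ref{Sec: GBB}): $\mathrm{WF}_b(w)$ is invariant under the GBB flow — because $(P\otimes\mathbb{I})w\in C^\infty$ and the boundary condition holds modulo $C^\infty$ — and is contained in $\mathrm{WF}_b(\omega_{2,\kappa})\cup\mathrm{WF}_b(\omega_{2,\kappa}^{\mathrm H})$, hence in the set with future-directed covector in the first slot and past-directed in the second; on the other hand $w-w^{\mathsf T}\in C^\infty$ forces $\mathrm{WF}_b(w)=\mathrm{WF}_b(w^{\mathsf T})$, which exhibits the opposite orientation pattern. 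The two are incompatible, so $\mathrm{WF}_b(w)=\emptyset$ and $w\in C^\infty(\bH^d\times\bH^d)$. Therefore $\omega_{2,\kappa}=\omega_{2,\kappa}^{\mathrm H}+w$ has, in every geodesically convex neighbourhood, the kernel prescribed by Definition \ref{Def: Local Hadamard State}, with $U,U',V,V'$ fixed by the recursion relations of Section \ref{Sec: Hadamard Recursion Relations} and $w$ absorbed into the smooth remainder $W$; the algebraic conditions and positivity are hypotheses, so (1) follows.

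The hard part will be the boundary half of $(2)\Rightarrow(1)$: upgrading the interior Radzikowski argument to the closed half-space forces one to invoke the propagation-of-singularities theorem up to $\partial\bH^d$ for the Robin problem (\cite{Gannot_2022, Dappiaggi-Marta_2020}, cf.\ Appendix \ref{Sec: GBB}) and to keep careful track of $\mathrm{WF}_b$ over the boundary — in particular one must check that GBB-invariance propagates the orientation structure, posited in (2) only over the interior, to the boundary part of $\mathrm{WF}_b(\omega_{2,\kappa})$, and that the tangential/glancing directions, whose GBBs are null generators of $\partial\bH^d$ and never reach an interior point, are still ruled out by the symmetry of $w$. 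By comparison the converse implication is fairly mechanical once the $\mathbb{I}\otimes\iota_z$ pull-back description of the $\sigma_-$-terms is in place.
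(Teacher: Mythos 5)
Your proposal is correct and, at its core, follows the same route as the paper in both directions. For $(1)\Rightarrow(2)$ the paper does exactly what you describe: it isolates the kernels $\sigma_\epsilon^{-\frac{d-2}{2}}$, $\ln\sigma_\epsilon$ and their reflected counterparts, applies Radzikowski's computation to the former, and obtains the wavefront set of the latter as the image under the reflection diffeomorphism (the paper writes it as $\iota_z^*\otimes\mathrm{id}$ acting on the first slot, which is equivalent to your $\mathbb{I}\otimes\iota_z$ description), then restricts to $\bH^d\times\bH^d$ via Heaviside factors using \cite[Thm. 8.2.10]{Hormander_1990}. For $(2)\Rightarrow(1)$ the paper also compares with the reference state of Proposition \ref{Prop: Hadamard 2-point function with Robin boundary conditions} and kills the difference by the symmetry-versus-orientation clash, exactly your mechanism. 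The one genuine divergence is that you embed this step in the $b$-calculus: you ask for GBB-invariance of $\mathrm{WF}_b$ of the difference, propagation of singularities up to $\partial\bH^d$ for the Robin problem, and a separate treatment of glancing/tangential directions, and you single this out as ``the hard part.'' The paper needs none of this: since the wavefront condition in point~2 is posed only over $\mathring{\bH}^d\times\mathring{\bH}^d$, the inclusion $\mathrm{WF}(\Delta\omega_{2,\kappa})\subset\mathrm{WF}(\omega_{2,\kappa})\cup\mathrm{WF}(\omega'_{2,\kappa})$ together with the symmetry of $\Delta\omega_{2,\kappa}$ coming from the CCR already forces the wavefront set to be empty, with no propagation theorem invoked at any stage; the local Hadamard form (including on neighbourhoods meeting the boundary) is then inherited from the explicit reference state via Theorem \ref{Thm: State is of local Hadamard form}. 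So your extra machinery buys a statement about the boundary part of $\mathrm{WF}_b$ that the paper simply does not claim or require — worth keeping in mind if you want a stronger conclusion at $\partial\bH^d$, but unnecessary for the theorem as stated.
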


\begin{proof}
$1 \Rightarrow 2$.  Following the same rationale as in \cite[Thm. 5.1]{Radzikowski_1996}, we focus on Equation \eqref{Eq: Ansatz on the Robin parametrix} starting from the bidistributions 
\begin{equation}
\label{Eq: b1, b2}
    b_1(x, x') := \lim_{\epsilon \rightarrow 0^+} \frac{1}{\sigma_{\epsilon}^{\frac{d-2}{2}}(x, x^\prime)}, \, \, \, b_2(x, x') := \lim_{\epsilon \rightarrow 0^+} \ln(\sigma_{\epsilon}(x, x^{\prime})), 
\end{equation}
where $\sigma_{\epsilon} (x, x')$ is defined as in Equation \eqref{Eq: local Hadamard form}. In addition, we set  
\begin{equation}
\label{Eq: b1-, b2-}
    b_{1, -}(x, x') := \lim_{\epsilon \rightarrow 0^+} \frac{1}{\sigma^{\frac{d-2}{2}}_{-, \epsilon}(x, x')}, \, \, \, b_{2, -}(x, x') := \lim_{\epsilon \rightarrow 0^+} \ln(\sigma_{-, \epsilon}(x, x')), 
\end{equation}
where $\sigma_{-, \epsilon}(x, x') = \sigma_{-}(x,x') + 2 i \epsilon(t-t^\prime)+ \epsilon^2$, with $\sigma_{-}$ as per Equation \eqref{Eq: reflected Synge world function H^d}. Observe that both kernels in Equation \eqref{Eq: b1, b2} are defined on $\mathcal{O}\times\mathcal{O}\subset\bR^d\times\bR^d$. Therefore we can invoke directly \cite[Thm. 5.1]{Radzikowski_1996} to conclude that 
\begin{equation}
	\label{Eq: WF set b1m}
	\text{WF}(b_1) =\text{WF}(b_2)=\{ (x, k, x', -k') \in T^{*}(\mathcal{O} \times\mathcal{O}) \setminus \{\boldsymbol{0}\} \, | \, (x, k) \, \sim \, (x', k') \, \text{and} \, k \rhd 0 \} ,
\end{equation}
where $\sim$ entails that $x$ and $x^\prime$ are connected by a lightlike geodesic $\gamma$ to which $k$ is co-tangent, while $k^\prime$ is obtained from $k$ by parallel transport. Focusing on $b_{1,-}$ and $b_{2,-}$ we observe that 
$$b_{i,-}=(\iota^*_z\otimes\text{id})b_i, \, \, i = 1, 2,$$
where $\iota_z$ is the reflection map in Definition \ref{Def: Reflected Synge's world function}. Hence, $\iota_z\otimes\text{id}$ is a diffeomorphism of $\bR^d\times\bR^d$, which entails \cite[Chap VIII]{Hormander_1990}
\begin{gather}
	\text{WF}(b_{1,-}) =\text{WF}(b_{2,-})=\text{WF}((\iota^*_z\otimes\text{id})b_i)=\notag\\
	=\{ (\iota_z(x), \iota^*_z(k), x', -k') \in T^{*}(\mathcal{O} \times\mathcal{O}) \setminus \{\boldsymbol{0}\} \, | \, (x, k) \, \dot{\sim} \, (x', k') \, \text{and} \, k \rhd 0 \},	\label{Eq: WF set b1m-}
\end{gather}
where, denoting $k=(k_0,k_1,\dots,k_{d-2},k_z)$, $\iota^*_z(k)=(k_0,k_1,\dots,k_{d-2},-k_z)$. To conclude we observe that $\omega_{2,\kappa}(x,x^\prime)=\Theta(z)\Theta(z^\prime)\tilde{\omega}_{2,\kappa}(x,x^\prime)$ where $\tilde{\omega}_{2,\kappa}$ has the same integral kernel of $\omega_{2,\kappa}$ though without the constraint that $z,z^\prime\geq 0$. On the one hand, we observe that this product of distributions is well defined on account of \cite[Thm. 8.2.10]{Hormander_1990}. On the other hand, it turns out that, for $z,z'>0$, being the Heaviside step functions smooth and equal to $1$, $\text{WF}(\omega_{2,\kappa})=\text{WF}(\tilde{\omega}_{2,\kappa})$. Combining this statement with Equations \eqref{Eq: WF set b1m} and \eqref{Eq: WF set b1m-}, the sought result descends.

\vskip .2cm

\noindent $2 \Longrightarrow 1$. Observe that the two-point correlation function in Proposition \ref{Prop: Hadamard 2-point function with Robin boundary conditions} satisfies the hypothesis on account of Theorem \ref{Thm: State is of local Hadamard form}. Consider a second state $\omega^\prime_{2,\kappa}$ which abides by the hypotheses of point {\em 2.}. Given any convex open subset $\mathcal{O}$ we call $\Delta\omega_{2,\kappa}=\omega^\prime_{2,\kappa}-\omega_{2,\kappa}$ restricted to $\mathcal{O}\times\mathcal{O}$. Since both two-point correlation functions must abide by the CCR condition in Equation \eqref{Eq: constraints on 2-pt function}, working at the level of integral kernels, it holds that $\Delta\omega_{2,\kappa}(x,x^\prime)=\Delta\omega_{2,\kappa}(x^\prime,x)$. At the same time 
$$\text{WF}(\Delta\omega_{2,\kappa})\subset\text{WF}(\omega_{2,\kappa})\cup\text{WF}(\omega^\prime_{2,\kappa})=\text{WF}(\omega_{2,\kappa}).$$ 
Yet, being $\Delta\omega_{2,\kappa}$ symmetric, it means that, if $(x,k,x^\prime,k^\prime)\in\text{WF}(\Delta\omega_{2,\kappa})$, then $(x^\prime,k^\prime,x,k)\in\text{WF}(\Delta\omega_{2,\kappa})$. Per hypothesis $k,-k^\prime\triangleright 0$, which means that $k^\prime$ is past directed. At the same time, if $(x^\prime,k^\prime,x,k)\in\text{WF}(\Delta\omega_{2,\kappa})$, then $k'$ is future directed and, therefore $k=k^\prime=0$, which entails $\text{WF}(\Delta\omega_{2,\kappa})=\emptyset$.

\end{proof}
 
We conclude by commenting on the r\^ole of the Feynman propagator. Following \cite{Duistermaat_1972} and the theory of distinguished parametrices, we can give the following definition.

\begin{definition}\label{Def: Feynman Parametrix}
	Let $\bH^d$ be the $d$-dimensional half-Minkowski spacetime as per Equation \eqref{Eq: half Minkowski} and let $P$ be the Klein-Gordon operator. We say that $H^F_\kappa\in\mathcal{D}^\prime(\bH^d\times\bH^d)$ is a {\bf Robin-Feynmann parametrix} if 
	\begin{enumerate}
		\item it holds that
		$$(P\otimes\mathbb{I})H^F_\kappa=\delta+R_1\quad\textrm{and}\quad(\mathbb{I}\otimes P)H^F_\kappa=\delta+R_2,$$
		where $R_1,R_2\in C^\infty(\bH^d\times\bH^d)$. 
		\item given $\kappa>0$
		$$(\partial_{\mathbf{n}} H^F_\kappa + \kappa H^F_\kappa) \vert_{\partial \mathcal{M}} = 0.$$
		\item the singular structure of $H_{F,\kappa}$ is encoded by
		\begin{equation}\label{Eq: WF Robin Feynmann parametrix}
				WF(H^F_\kappa) = \{(x,k_x, x', k_{x'}) \in T^*(\mathring{\bH}^d\times\mathring{\bH}^d)\setminus\{0\} \, | \, (x,k_x) \, \dot{\sim}_{F} \, (x', -k_{x'}), k_x \ne 0 \} \cup \text{WF}(\delta_{2}),
		\end{equation}
	where $\dot{\sim}_{F}$ entails that there exists a generalized broken bicharacteristic connecting $(x,k_x)$ to $(x', -k_{x'})$ such that $-k_{x'}$ is the parallel transport of $k_x$ along it and if $x' \in J^{+}(x)$, then $k_x \triangleright 0$ (\emph{resp}. $k_x \triangleleft 0$), while if $x' \in J^{-}(x)$, then $k_x \triangleleft 0$ (\emph{resp}. $k_x \triangleright 0$).
\end{enumerate}
\end{definition}

\begin{proposition}\label{Prop: Existence of Feynman parametrices}
	Under the assumptions of Definition \ref{Def: Feynman Parametrix}, given $H_\kappa$ a Robin parametrix as per Definition \ref{Def: Robin-Hadamard paraemtrix} and $G^-_\kappa$ as per Equation \eqref{Eq: Robin fundamental solution via Fulling}, then $G^-_\kappa-iH_{\kappa}$ is a Feynman parametrix. 
\end{proposition}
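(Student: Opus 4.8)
The plan is to verify, for $G^-_\kappa-iH_\kappa$, the three defining properties of a Robin-Feynman parametrix in Definition \ref{Def: Feynman Parametrix}, after first reducing to the canonical Hadamard two-point function. By Definition \ref{Def: Robin-Hadamard paraemtrix} any two Robin--Hadamard parametrices differ by a smooth function, and the state $\omega_{2,\kappa}$ of Proposition \ref{Prop: Hadamard 2-point function with Robin boundary conditions} is itself one such parametrix, so $H_\kappa=\omega_{2,\kappa}+s$ with $s\in C^\infty(\bH^d\times\bH^d)$. Since a smooth additive term changes neither the wavefront set nor, up to the $C^\infty$ remainders intrinsic to the notion of parametrix, the conditions in Definition \ref{Def: Feynman Parametrix}, it suffices to treat $G^-_\kappa-i\omega_{2,\kappa}$. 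Property~1 is immediate: Theorem \ref{Thm: Constrution of Robin} gives $(P\otimes\mathbb I)G^-_\kappa=(\mathbb I\otimes P)G^-_\kappa=\delta|_{\mathrm{Diag}}$ and $\omega_{2,\kappa}$ is a bi-solution by \eqref{Eq: Equation of motion Robin}, whence $(P\otimes\mathbb I)(G^-_\kappa-i\omega_{2,\kappa})=(\mathbb I\otimes P)(G^-_\kappa-i\omega_{2,\kappa})=\delta|_{\mathrm{Diag}}$ (and restoring $s$ only adds the smooth remainders $R_i=-i(P\otimes\mathbb I)s$, resp.\ $-i(\mathbb I\otimes P)s$). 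Property~2 holds because $G^-_\kappa$ obeys the Robin condition \eqref{Eq: Boundary conditions} by construction (Definition \ref{Def: SolFond on boundary}, Proposition \ref{Prop: Propagators in even dimension}) and $\omega_{2,\kappa}$ obeys \eqref{Eq: State boundary condition}, so $G^-_\kappa-i\omega_{2,\kappa}$ satisfies $(\partial_{\mathbf n}+\kappa)(\,\cdot\,)|_{\partial\bH^d}=0$.

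The substance is property~3, the wavefront set of $G^-_\kappa-iH_\kappa$, Equation \eqref{Eq: WF Robin Feynmann parametrix}, which I would establish first in the interior and then across the boundary. In the interior, observe that $\mathcal L_\kappa$ acts only in the $z$-variable and hence commutes with multiplication by $\Theta(t-t')$, so Equation \eqref{Eq: Robin fundamental solution via Fulling} and Remark \ref{Rem: D and N Propagators on Hd} give $G^-_\kappa=\Theta(t-t')G_\kappa$; combining this with $G_\kappa=G_N-2\kappa(\mathcal L_\kappa\otimes\delta)\star G_r$ from Proposition \ref{Prop: GR in terms of G+ and Gkappa} and the analogous $\omega_{2,\kappa}=\omega_{2,N}+2\kappa(\mathcal L_\kappa\otimes\delta)\star\omega_{2,r}$, one checks that on any geodesically convex open $\mathcal O$ with $\overline{\mathcal O}\cap\partial\bH^d=\emptyset$ of small enough diameter one has $2\sigma_-<0$ throughout $\mathcal O\times\mathcal O$ and, crucially, throughout the range of the $z$-convolution, so that all reflected contributions — the image terms $(\iota_z^*\otimes\mathrm{id})$ and the convolutions with $\mathcal L_\kappa$ — are smooth on $\mathcal O\times\mathcal O$. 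Therefore $G^-_\kappa-i\omega_{2,\kappa}$ coincides there, modulo $C^\infty(\mathcal O\times\mathcal O)$, with the flat-space combination $G^-_{\bR^d}-i\omega_2$, a distinguished parametrix on $(\bR^d,\eta)$ whose wavefront set is of Feynman type, cf.\ \eqref{Eq: Feynman propagator}--\eqref{Eq: local form of the Feynman parametrix}, by the theory of \cite{Duistermaat_1972} and Radzikowski's theorem \cite{Radzikowski_1996_1}.

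It then remains to extend this description to covectors carried by generalized broken bicharacteristics (Definition \ref{Def: generalized broken bicharacteristics}) that meet $\partial\bH^d$. Here I would use that $G^-_\kappa-iH_\kappa$ is a parametrix: away from the diagonal $(P\otimes\mathbb I)$ — and symmetrically $(\mathbb I\otimes P)$ — of it is smooth, so by propagation of singularities along GBBs in the $b$-calculus (Appendix \ref{Sec: GBB}, \cite{Gannot_2022, Dappiaggi-Marta_2020}) its wavefront set off the diagonal is a union of maximally extended GBB strips. Such a strip lies in $\mathrm{WF}(G^-_\kappa-iH_\kappa)$ if and only if its portion in $\mathring{\bH}^d\times\mathring{\bH}^d$ away from reflection points does, i.e.\ if and only if that portion is of Feynman type by the interior step; and since the causal orientation $k\triangleright 0$ (resp.\ $k\triangleleft 0$) is preserved under reflection, the selected strips are precisely those of Equation \eqref{Eq: WF Robin Feynmann parametrix}, with $\mathrm{WF}(\delta_2)$ accounting for the diagonal. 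Together with properties~1 and~2 this shows $G^-_\kappa-iH_\kappa$ is a Robin-Feynman parametrix.

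I expect the boundary step to be the main obstacle. The interior part is essentially a transcription of the classical Radzikowski/Duistermaat--H\"ormander picture, but near $\partial\bH^d$ one must control the singularities generated by $(\mathcal L_\kappa\otimes\delta)\star G_r$ — equivalently, the $\sigma_-$-sector of both $G_\kappa$ and $\omega_{2,\kappa}$ — showing that they attach to the direct singularities exactly along generalized broken bicharacteristics and that no destructive interference between $G^-_\kappa$ and $-i\omega_{2,\kappa}$ occurs on the reflected locus. It is the $b$-calculus propagation theorem that bridges the interior description to the boundary, and it is legitimately available here precisely because property~1 already establishes that $G^-_\kappa-iH_\kappa$ is a genuine parametrix.
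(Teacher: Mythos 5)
Your reduction to the canonical state (writing $H_\kappa=\omega_{2,\kappa}+s$ with $s$ smooth), your verification of properties \emph{1} and \emph{2}, and your interior analysis (for small convex $\mathcal{O}$ away from $\partial\bH^d$ all $\sigma_-$--terms, including the $\mathcal{L}_\kappa$--convolutions, are smooth, so the kernel agrees modulo $C^\infty$ with the flat Feynman combination) are sound and match the paper's strategy up to that point. The divergence, and the genuine gap, is in property \emph{3} at the reflected singularities. The paper never invokes propagation of singularities: it observes that, at the level of integral kernels, $G^F_\kappa(x,x')=\Theta(t-t')\,\omega_{2,\kappa}(x,x')+\Theta(t'-t)\,\omega_{2,\kappa}(x',x)$ (the time-ordered combination, obtained from $G^-_\kappa=\Theta(t-t')G_\kappa$ and the CCR in Equation \eqref{Eq: constraints on 2-pt function}), so that the full wavefront set --- direct and reflected singularities together with their causal orientation --- is read off from the wavefront set of $\omega_{2,\kappa}$, already established in Proposition \ref{Prop: Hadamard 2-point function with Robin boundary conditions} and Theorem \ref{Thm: Global-to-Local}, by the same multiplication-by-$\Theta$ argument used on globally hyperbolic spacetimes.

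Your substitute for this step --- ``off the diagonal $(P\otimes\mathbb{I})$ of it is smooth, hence the wavefront set is a union of maximally extended GBB strips, and a strip is present iff its interior portion away from reflection is, with the orientation preserved under reflection'' --- is precisely the part you yourself flag as the main obstacle, and it is not carried out. To close it you would need: (i) the $b$-calculus propagation theorem applied slot-wise to a bi-distribution, which requires an argument (the cited theorems are for a single variable); (ii) control of glancing strips and of the neighbourhood of the diagonal, where $(P\otimes\mathbb{I})G^F_\kappa$ is \emph{not} smooth and propagation cannot be invoked, which matters exactly for reflected singularities based at or near $\partial\bH^d$; and (iii) an actual exclusion of the wrong-orientation reflected strips and a proof of the asserted equality in Equation \eqref{Eq: WF Robin Feynmann parametrix} --- your ``no destructive interference'' is the statement to be proven, not an observation. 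As written, the proposal establishes the interior, away-from-reflection part of the wavefront set condition but leaves the reflected/boundary part open; the missing idea is the time-ordered kernel representation (or an equivalent device reducing everything to the known wavefront set of $\omega_{2,\kappa}$), which is what lets the paper's proof conclude in a few lines.
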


\begin{proof}
	On account of Theorem \ref{Thm: Global-to-Local} it suffices to prove the statement working with $\omega_{2,\kappa}$ as per Equation \eqref{Eq: omega2 Robin}. One can infer that, denoting by $G^F_\kappa\doteq G^-_\kappa-i\omega_{2,\kappa}$, it holds per construction that $(P\otimes\mathbb{I})G^F_\kappa=(\mathbb{I}\otimes P)G^F_\kappa=\delta$ and it abides by the Robin boundary conditions inheriting them from $G^-_\kappa$ and $\omega_{2, \kappa}$. In order to control the underlying wavefront set observe that, at the level of integral kernels, $G^F_\kappa(x,x^\prime)=\Theta(t-t^\prime)\omega_{2, \kappa}(x,x^\prime)+\Theta(t'-t)\omega_{2, \kappa}(x',x)$. Using Equation \eqref{Eq: WF Robin Feynmann parametrix}, one can apply the same line of reasoning used on globally hyperbolic spacetimes to draw the sought conclusion.
\end{proof}

\section*{Acknowledgements}
\addcontentsline{toc}{section}{Acknowledgements}

We are grateful to Nicola Pinamonti for enlightening discussions. The work of BC has been supported by a fellowship of the University of Pavia and BC is grateful to the RQI Cost Action CA23115 and to the Department of Mathematics at the University of York for the kind hospitality during the realization of part of this work. BC and CD both acknowledge the support of the INFN Sezione di Pavia and of Gruppo Nazionale di Fisica Matematica, part of INdAM. The work of BC is supported in part by a fellowship of the "Progetto Giovani GNFM 2025" under the project "Hadamard states for linearized Yang-Mills theories" fostered by Gruppo Nazionale di Fisica Matematica -- INdAM in collaboration with Simone Murro (University of Genova). BAJ-A is supported by the EPSRC Open Fellowship EP/Y014510/1.  Part of this work has appeared in the Master thesis of RDS submitted on the 19/09/2025 as a partial fulfillment of the requirements to obtain a Master degree in Physics at the University of Pavia. 

\vskip.2cm

\noindent\textbf{Data availability statement}. Data sharing is not applicable to this article as no new data were created or analysed in this study.

\vskip .2cm

\noindent\textbf{Conflict of interest statement.} The authors certify that they have no affiliations with or involvement in any
organization or entity with any financial interest or non-financial interest in the subject matter discussed in
this manuscript.

\appendix

\section{Generalized Broken Bicharacteristics}\label{Sec: GBB}

In this appendix we discuss some distinguished geometric features of a class of Lorentzian manifolds which have half-Minkowski spacetime as per Equation \eqref{Eq: half Minkowski} as their prototype. More precisely, we call {\em spacetime}, $(\mathcal{M},g)$, a $d-$dimensional, oriented and connected manifold, $d \ge 2$, endowed with a smooth Lorentzian metric $g$ of signature $(d-1,1)$, \cite[Ch. 2]{Lee_2018}. In addition, we allow $(\mathcal{M}, g)$ to have possibly a {\em non-empty boundary}, $\partial \mathcal{M}$, and we require that it is also {\em time-oriented}, namely there exists a timelike vector field $\chi\in\Gamma(T\mcM)$ which has been chosen once and for all. 

\begin{definition}
\label{Def: time-like boundary}
Under the previous assumptions, we call $\partial \mathcal{M}$ a {\bf timelike boundary} if, denoting by $\iota: \partial \mathcal{M} \hookrightarrow \mathcal{M}$ the natural inclusion map, the pullback $\iota^* g$ identifies a Lorentzian metric on $\partial \mathcal{M}$. In this case, $(\mathcal{M}, g)$ is said to be a \emph{spacetime with timelike boundary}. 
\end{definition}

\noindent Among the set of oriented and time-oriented Lorentzian manifolds with $\partial \mcM=\emptyset$, a distinguished class is that of globally hyperbolic spacetimes, since they possess a non pathological causal structure and, in addition, one can formulate a well-posed initial value problem for second order, partial differential, scalar equations ruled by symmetric hyperbolic operators, see {\it e.g} \cite{Baer_2007}. This notion has been originally formulated under the requirement that the underlying manifold has an empty boundary, but a generalization which accounts also for those spacetimes abiding by Definition \ref{Def: time-like boundary} has been discussed in \cite{Ak_Hau_2020}.

\begin{definition} \label{Def: Globally Hyperbolic}
\label{Def: ghm}
A spacetime $(\mathcal{M},g)$ with a timelike boundary as per Definition \ref{Def: time-like boundary} is said to be {\bf globally hyperbolic} if the following two conditions are met: 
\begin{itemize}
    \item[(i)] $(\mathcal{M
    }, g)$ is strongly causal, \textit{i.e.}, it does not contain any closed, causal curve. 
    \item[(ii)] for any pair of points $p, q \in \mathcal{M}$, denoting by $J^{\pm}$ the causal future and past respectively, the causal diamond $J^+(p) \cap J^-(q)$ is a compact set.
\end{itemize}
\end{definition}

\noindent A notable and very useful property of globally hyperbolic spacetimes has been proven in \cite[Thm 1.1]{Ak_Hau_2020} as a direct consequence of $(\mathcal{M},g)$ possessing a Cauchy orthogonal splitting. Here we report part of this result, limited to those aspects which will be of interest in our investigation.
\begin{proposition}\label{Prop: Globally Hyperbolic}
Let $(\mcM,g)$ be a globally hyperbolic spacetime as per Definition \ref{Def: Globally Hyperbolic}. There exists an isometry $\psi: \mathbb{R} \times \Sigma \rightarrow \mathcal{M}$ such that the line element associated to $\psi^* g$ is of the form
\begin{equation}
	\label{Eq: line element psi*g}
	ds^2 = \beta dt^2 - h_t,
\end{equation}
where $\{t\}\times\Sigma$ is a smooth, spacelike, Cauchy hypersurface for all $t\in\bR$, while $t: \mathbb{R} \times \Sigma \rightarrow \mathbb{R}$ is a temporal function whose gradient $\nabla_t$ is tangent to $\partial\mcM$. In addition $\beta \in C^{\infty}(\mathbb{R} \times \Sigma;(0,\infty))$, while $\{h_t\}_{t \in \mathbb{R}}$ is a family of smooth Riemannian metrics on $\{t\} \times \Sigma$, varying smoothly with $t$.
\end{proposition}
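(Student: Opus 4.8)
The plan is to deduce the splitting from the existence of a suitable \emph{Cauchy temporal function} on $\mcM$ and then to flow along its gradient. Recall that a Cauchy temporal function is a smooth $t:\mcM\to\bR$ whose gradient $\nabla t$ is past-directed timelike everywhere and whose level sets $\Sigma_c:=t^{-1}(c)$ are spacelike Cauchy hypersurfaces; in the present boundary setting one additionally demands that $\nabla t$ be everywhere tangent to $\partial\mcM$, equivalently that $t$ restrict to a temporal function of the boundary and that $\partial\mcM$ be swept out by integral curves of $\nabla t$. Granting such a $t$, normalised so that $t(\mcM)=\bR$, I would set $\Sigma:=\Sigma_0$ and $X:=\nabla t/g(\nabla t,\nabla t)$, so that $dt(X)=1$ and $X$ is future-directed timelike and tangent to $\partial\mcM$; its flow $\Phi_s$ therefore preserves $\partial\mcM$, and it is complete because along any maximal integral curve of $X$ the function $t$ is an affine parameter, so a bounded interval of definition would yield an inextendible causal curve failing to meet some Cauchy slice $\Sigma_c$. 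One then defines $\psi:\bR\times\Sigma\to\mcM$ by $\psi(s,p):=\Phi_s(p)$; this is a diffeomorphism of manifolds with boundary sending $\{s\}\times\Sigma$ onto $\Sigma_s$ and $\bR\times\partial\Sigma$ onto $\partial\mcM$.

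To identify $\psi^*g$, observe that $\psi_*\partial_s=X$ and that $X$, being a multiple of $\nabla t$, is $g$-orthogonal to every $\Sigma_c$. Hence in the coordinates $(s,p)$ there are no mixed $ds\,dp^i$ terms, $g(\partial_s,\partial_s)=g(X,X)=g(\nabla t,\nabla t)^{-1}=:\beta$, which is smooth and strictly positive because $\nabla t$ is timelike in signature $(+,-,\dots,-)$, and the restriction of $-g$ to $\{s\}\times\Sigma$ is a smooth Riemannian metric $h_s$ varying smoothly with $s$. Thus $\psi^*g=\beta\,ds^2-h_s$, with $s$ playing the role of the temporal function in the statement and $\nabla s$ tangent to $\partial\mcM$ by construction, so it only remains to supply the Cauchy temporal function.

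This last point is where I expect the real work to lie, and it is precisely the content of \cite[Thm 1.1]{Ak_Hau_2020}. First I would produce a \emph{continuous} Cauchy time function by a Geroch-type volume argument adapted to a timelike boundary: fixing a finite measure $\mu$ on $\mcM$, a suitable renormalisation involving the volumes $\mu(J^-(p))$ and $\mu(J^+(p))$ is continuous, strictly increasing along future-directed causal curves and has Cauchy level sets, using strong causality and compactness of causal diamonds exactly as in the empty-boundary case, now allowing causal curves to meet or run along $\partial\mcM$. The delicate step is to smooth this into a $t$ whose gradient is timelike \emph{and} tangent to $\partial\mcM$. Since $\partial\mcM$ is timelike, $(\partial\mcM,\iota^*g)$ is itself globally hyperbolic in the sense of Definition \ref{Def: Globally Hyperbolic}, so one can build a Cauchy temporal function on $\partial\mcM$, transport it into a collar $\partial\mcM\times[0,\varepsilon)$ in such a way that its gradient stays tangent to the boundary there, and glue it to an interior Cauchy temporal function obtained by the Bernal--S\'anchez smoothing of the time function above; the interpolation is controlled using a partition of unity subordinate to the collar together with the convexity of the set of temporal functions with timelike gradient. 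A final affine normalisation arranging $t(\mcM)=\bR$ then feeds back into the flow construction and closes the argument.
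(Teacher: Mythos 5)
The paper does not prove this proposition at all: it is quoted (in reduced form) from \cite[Thm. 1.1]{Ak_Hau_2020}, and your proposal ultimately rests on that same theorem for the only genuinely hard step, namely the existence of a Cauchy temporal function whose gradient is tangent to the timelike boundary, so the two treatments coincide in substance. Your derivation of the splitting from such a function (flowing along $X=\nabla t/g(\nabla t,\nabla t)$, completeness of the flow via Cauchy slices, orthogonality of the level sets, $\beta=g(\nabla t,\nabla t)^{-1}>0$) is the standard argument and is correct, the only caveats being minor: the normalisation $t(\mcM)=\bR$ requires a general monotone reparametrisation rather than an affine one, and your collar-plus-partition-of-unity sketch of the existence proof is heuristic — but since you explicitly defer that step to \cite{Ak_Hau_2020}, exactly as the paper does, there is no gap relative to the paper.
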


Henceforth, we shall work on a globally hyperbolic spacetime $(\mcM,g)$, $\text{dim} \, \mathcal{M} = d \ge 2$, with a timelike boundary. Furthermore, we assume that the reader is already familiar with $b$-geometry techniques, see \cite{Melrose_1993} for a detailed account. In the same spirit of \cite{GMP}, we will content ourselves with setting the basic notations and conventions. 

Under the aforementioned assumptions, let us denote by ${}^bT\mcM$ the {\em $b$-tangent bundle}, namely, a vector bundle whose fibres are
$${}^bT_p\mcM=\left\{ 
\begin{array}{ll}
	T_p \mcM & p\in \mathring{M}\\
	\textrm{span}_{\mathbb{R}}(z\partial_z, T_p\partial M) & p \in \partial M
\end{array}
\right., $$
where $z:\mcM\to\bR$ is a global boundary function, \emph{i.e.}, $\partial\mcM$ corresponds to $z=0$, which is here promoted to coordinate. Relying on a duality argument, we can also introduce the {\em $b$-cotangent bundle}, ${}^bT^*\mcM$, \textit{i.e.}, a vector bundle whose fibers are
$${}^bT^*_p\mcM=\left\{ 
\begin{array}{ll}
	T^*_p \mcM & p\in \mathring{\mcM}\\
	\textrm{span}_{\mathbb{R}}(\frac{dz}{z}, T^*_p\partial \mcM) & p \in \partial\mcM
\end{array}. 
\right.$$
For later convenience, in any but fixed chart $U$ of $\mcM$ centered at a point $p\in\partial\mcM$, we will denote by $\{(z,x_i,\xi,\eta_i)\}_{i=1}^{d-1}$ and by $\{(z,x_i,\zeta,\eta_i)\}_{i=1}^{d-1}$, with $d-1=\dim\partial\mcM$, local coordinates respectively of $T^*\mcM|_U$ and of ${}^bT^*\mcM|_U$. Since $(\mathcal{M},g)$ is globally hyperbolic as per Definition \ref{Def: Globally Hyperbolic}, we can identify implicitly $\eta_{d-1}\equiv t$, where $t$ is the distinguished time direction on $\mathcal{M}$, {\it cf.} Equation \eqref{Eq: line element psi*g}. Note that there exists also a natural projection map 
\begin{equation}
    \label{Eq: projection map}
    \pi:T^*\mcM\to{}^b T^*\mcM,\quad (z,x_i,\xi,\eta_i)\mapsto \pi(z,x_i,\xi,\eta_i)=(z,x_i,z\xi,\eta_i),
\end{equation}
which, however, fails to be injective. The projection map $\pi$ allows for the construction of a key geometric structure, \emph{i.e.}, the {\em compressed $b$-cotangent bundle} 
\begin{equation}\label{Eq: Compressed b-cotangent bundle}
	{}^b\dot{T}^*\mcM\doteq\pi[T^*\mcM],
\end{equation}
which is a vector sub-bundle of ${}^bT^*\mcM$, such that, for any $p\in\mathring{\mcM}$, ${}^b\dot{T}^*_p\mcM\equiv T^*_p\mcM$. In addition, we define the {\em b-cosphere bundle} as the quotient manifold obtained via the action of the dilation group on $T^*_b\mcM\setminus\{0\}$, that is
\begin{equation}\label{Eq: cosphere bundle}
	{}^bS^*\mcM\doteq\bigslant{{}^bT^*\mcM\setminus\{0\}}{\mathbb{R}^+}.
\end{equation}
Observe that, taking any local chart $U\subset M$ such that $U\cap\partial\mcM\neq\emptyset$ and the associated local coordinates $\{(z,x_i,\zeta,\eta_i)\}_{i=1}^{d-1}$, with $d-1=\dim\partial\mcM$, on ${}^bT^*_U \mcM\doteq{}^bT^*\mcM|_U$, it is possible to devise a natural counterpart on ${}^bS^*\mcM \vert_{U}$, namely, $\{(z,x_i,\widehat{\zeta},\widehat{\eta}_i)\}_{i=1}^{d-1}$ where $\widehat{\zeta}=\frac{\zeta}{|\eta_{d-1}|}$ and $\widehat{\eta_i}=\frac{\eta_i}{|\eta_{d-1}|}$, $i=1,..., d-1$.

If we confine our attention to the Klein-Gordon operator $P$ as per Equation \eqref{Eq: KG equation}, whose principal symbol reads $\widehat{p}\doteq\widehat{g}(X,X)$, where $X\in\Gamma(T^*\mcM)$, the associated {\em characteristic set} is of the form
\begin{equation}\label{Eq: characteristic set}
	\mathcal{N}=\left\{(q,k_q)\in T^*\mcM\setminus\{0\}\;|\; \widehat{g}^{ij}(k_q)_i (k_q)_j=0\right\},
\end{equation}
whilst we define the {\em compressed characteristic set} as 
\begin{equation}\label{Eq: compressed characteristic set}
	\dot{\mathcal{N}}=\pi[\mathcal{N}]\subset{}^b\dot{T}^*\mcM,
\end{equation}
$\pi$ being the projection map in Equation \eqref{Eq: projection map}. $\dot{\mathcal{N}}$ can be equipped with the subspace topology inherited from the compressed b-cotangent bundle ${}^b\dot{T}^*\mcM$. In addition, it is convenient to identify the following three conic subsets of ${}^b\dot{T}^*\mcM$:

\vskip.2cm

\begin{itemize}
	\item[\ding{104}] The {\em elliptic} region 
	\begin{equation}\label{Eq: elliptic region}
		\mathcal{E}(\mcM) = \{ q \in{}^b\dot{T}^*\mcM \setminus\{0\} \ : \ \pi^{-1}(q) \cap \mathcal{N} = \emptyset \},
	\end{equation}
	where $\pi:T^*\mcM\to{}^b\dot{T}^*\mcM$.
	
	\vskip.2cm
	
	\item[\ding{104}] The {\em glancing} region  
	\begin{equation}\label{Eq: glancing region}
		\mathcal{G}(\mcM) = \{ q \in{}^b\dot{T}^*\mcM \setminus\{0\} \ : \ Card( \pi^{-1}(q) \cap \mathcal{N}) = 1 \},
	\end{equation}
	where $Card$ is the cardinality of a set.
	
	\vskip .2cm
	
	\item[\ding{104}] The {\em hyperbolic} region 
	\begin{equation}\label{Eq: hyperbolic region}
		\mathcal{H}(\mcM) = \{ q \in{}^b\dot{T}^*\mcM \setminus\{0\} \ : \ Card( \pi^{-1}(q) \cap \mathcal{N}) = 2 \}.
	\end{equation}
\end{itemize}

A related concept is the following: 

\begin{definition}\label{Def: generalized broken bicharacteristics}
Let $I \subset \mathbb{R}$, we say that a continuous map $\gamma : I \rightarrow \dot{\mathcal{N}}$ is a {\bf generalized broken bicharacteristic} (GBB) if, for any $s_0 \in I$, the following conditions hold true:
	\begin{itemize}
		\item[a)] Denoting by $\pi:T^*\mcM\to{}^bT^*\mcM$ the projection map onto the b-cotangent bundle as per Equation \eqref{Eq: projection map}, if $q_0 = \gamma(s_0) \in \mathcal{G}(\mcM)$, then for every $\omega\in \Gamma^\infty(^bT^*M)$, 
		\begin{equation}
			\frac{d}{ds}(\omega \circ \gamma) = \{ \widehat{p},\pi^* \omega \}(\eta_0),
		\end{equation}
		where $\eta_0 \in \mathcal{N}$ is the unique point for which $\pi(\eta_0)=q_0$, while $\{,\}$ are the Poisson brackets on $T^*\mcM$.
		\item[b)] If $q_0 = \gamma(s_0)$ lies in $ \mathcal{H}(\mcM)$, then $\exists \varepsilon > 0$ such that $z(\gamma(s))\neq 0$, whenever $0 < |s-s_0| < \varepsilon $,  where $z$ is the global boundary function.
	\end{itemize}
\end{definition}

\end{document}